\let\c@lofdepth\relax
\let\c@lotdepth\relax
\newtheorem{theorem}{Theorem}
\newtheorem{corollary}{Corollary}
\DeclareMathOperator*{\argmax}{arg\,max}
\title{Unit-Modified Weibull Distribution and Quantile Regression Model}
\author{
João Inácio Scrimini\thanks{Graduate Program in Industrial Engineering, Federal University of Santa Maria, Roraima Avenue, 1000, 97105-900, Santa Maria, RS, Brazil. Email: joao.scrimini@acad.ufsm.br} ,
Cleber Bisognin\thanks{Department of Statistics, Federal University of Santa Maria, Roraima Avenue, 1000, 97105-900, Santa Maria, RS, Brazil. Email: cleber.bisognin@ufsm.br} ,
Renata Rojas Guerra\thanks{Department of Statistics, Federal University of Santa Maria, Roraima Avenue, 1000, 97105-900, Santa Maria, RS, Brazil. Email: renata.r.guerra@ufsm.br} \,\,and
Fábio M. Bayer\thanks{Department of Statistics and LACESM, Federal University of Santa Maria, Roraima Avenue, 1000, 97105-900, Santa Maria, RS, Brazil. Email: bayer@ufsm.br}
}
\date{} 
\begin{document}

\renewcommand{\thefootnote}{\arabic{footnote}} 
\maketitle

\begin{abstract}
 The Sustainable Development Goals (SDGs) of the United Nations consist of 17 general objectives, subdivided into 169 targets to be achieved by 2030. Several SDG indices and indicators require continuous analysis and evaluation, and most of these indices are supported in the unit interval \((0,1)\). To incorporate the flexibility of the modified Weibull (MW) distribution in doubly constrained datasets, the first objective of this work is to propose a new unit probability distribution based on the MW distribution. For this, a transformation of the MW distribution is applied, through which the unit modified Weibull (UMW) distribution is obtained.
The second objective of this work is to introduce a quantile regression model for random variables with UMW distribution, reparameterized in terms of the quantiles of the distribution. 
Maximum likelihood estimators (MLEs) are used to estimate the model parameters. 
Monte Carlo simulations are performed to evaluate the MLE properties of the model parameters in finite sample sizes. 
The introduced methods are used for modeling some 
sustainability indicators related to the SDGs, also considering the reading skills of dyslexic children, which are indirectly associated with SDG 4 (Quality Education) and SDG 3 (Health and Well-Being).

\vspace{1.2cm}
 
 \noindent
\noindent\textbf{Keywords:} Maximum Likelihood, Monte Carlo Simulation, Quantile Regression, Unit Distribution. 
\end{abstract}



\section*{Introduction}\label{introduction}

The United Nations Sustainable Development Goals (SDGs) are global plans to promote sustainable development, improve living conditions, and foster peace. 
The SDGs comprise 17 overarching objectives and 169 targets to be achieved by 2030. 
One challenge with the modeling indices used to monitor the progress of the SDGs is that many are limited to the range $(0,1)$, where usual statistical distributions are not appropriate.  
This limitation highlights the need for new, more flexible statistical distributions that can better accommodate such data, ensuring greater precision in assessing and monitoring outcomes. Additionally, "spin-off" indicators, while not included in the official list, play a significant role in supporting the monitoring and promotion of the SDGs. 
These indicators provide complementary insights that align with the established objectives. 
Consequently, developing more flexible probability distributions tailored to these data, along with predictive models capable of generating more reliable inferences with reduced uncertainty, becomes essential.

In statistical analysis, identifying an appropriate distribution for modeling data sets is crucial. 
By selecting the distribution that best fits or describes the behavior of a specific data set, more accurate inferences can be made. 
To this end, new techniques have been developed to modify existing statistical distributions, enhancing their flexibility for modeling data sets emerging across various fields of study. 
These more flexible statistical distributions offer greater adaptability to different patterns observed in the data, enabling their characteristics to be modeled more accurately. 
As a result, they become more suitable for a variety of situations and improve the quality of statistical inferences, increasing the accuracy of estimates and predictions.

Several distributions in the literature have garnered considerable attention in recent years. 
For example, the two-parameter Weibull distribution, proposed by \cite{weibull1951}, has a wide range of applications in different scientific fields. 
It is generally employed for lifetime analysis, hazard rates, reliability studies, and similar contexts \citep{Lai2014}. 
However, the Weibull model is inadequate for describing non-monotonic failure rates, such as those exhibiting bathtub-shaped or inverted bathtub patterns in their hazard functions \citep{shama2023}. 
Numerous modifications have been proposed to enhance the flexibility of the Weibull distribution. 
One such modification is the three-parameter modified Weibull distribution introduced by \cite{MW2003}, which is capable of modeling bathtub-shaped lifetime and hazard rate data. 
Despite its versatility, the modified Weibull distribution is limited by its support on positive real values, preventing the bathtub-shaped characteristic from manifesting in its density function. 
This limitation arises because the density approaches zero asymptotically as values diverge from zero to infinity. 
To address this, additional transformations of the Weibull distribution have been introduced, including the generalized modified Weibull distribution for lifetime modeling \citep{carrasco2008generalized}; 
the modified Weibull beta distribution \citep{silva2010beta}, 
primarily applied to survival data; 
the additive modified Weibull distribution \citep{he2016additive}; 
and the alpha power Weibull transformation distribution, used to describe the behavior of electronic devices under voltage stress profiles \citep{mendez2022alpha}, among others.

In the context of regression models, the normal linear regression model is the best known and most widely used, assuming normally distributed additive errors. 
Alternatively, generalized linear models (GLMs) \citep{nelder1972generalized} assume that the variable of interest follows a distribution from the canonical exponential family,  
which includes the normal, Poisson, negative binomial, gamma, and inverse normal distributions. However, in practical applications, response variables may not always conform to these distributions. 
To address this limitation, new regression models have been proposed as alternatives to both linear regression models and GLMs. 
For data constrained within a limited range, as several SDG indices are bounded in $(0,1)$, the most common distributions include: 
the beta distribution \citep{johnson1995}, for which the beta regression model was introduced by \cite{Ferrari2004}; 
the Kumaraswamy distribution \citep{kumaraswamy1980}, with the Kumaraswamy regression model incorporating the Aranda-Ordaz link function \citep{Pumi2020}; 
the simplex distribution \citep{barndorff1991}, along with its respective regression model proposed by \cite{song2004modelling}, among others. 
A common feature of these models is the reparameterization in terms of the mean or median of the distribution, enabling parameter interpretation in terms of position and/or precision metrics. Through these reparameterizations, a regression structure is introduced to model the mean or median, following a similar approach to GLMs. In general, the median is often more robust than the mean when the variable of interest exhibits asymmetric behavior or contains outliers. In such cases, modeling the median instead of the mean tends to yield better results \citep{john2015,lemonte2016}.

Recent studies have introduced quantile regression models tailored to various data structures. 
For positive continuous responses, notable examples include models based on the Birnbaum–Saunders distribution \citep{gallardo2024}, 
the Dagum and Singh–Maddala distributions \citep{saulo2023}, 
and gamma-based \citep{bourguignon2025}. 
In the unit context, examples include the beta \citep{bourguignon2024parametric}, 
unit log–log \citep{korkmaz2023unit}, 
unit power half-normal \citep{santoro2024unit}, 
unit generalized half-normal \citep{mazucheli2023unit}, 
and unit Burr-XII \citep{korkmaz2021unit,ribeiro2022another} distributions. 
More recently, unit Weibull-type distributions have also been proposed by \cite{de2024covid,abubakari2024unit,sapkota2025new}. 
Moreover, a comprehensive review of unit quantile regression models is presented in \citet{mazucheli2022overview}. 

To explore the flexibility of the modified Weibull distribution for modeling doubly constrained data sets, this work proposes a new distribution with support in the interval $(0,1)$, derived from the modified Weibull distribution.
By transforming it into a unitary distribution, the characteristic bathtub-shaped hazard function becomes representable in the density function.
This transformation also allows the new distribution to capture increasing-decreasing-increasing density patterns, enhancing its flexibility for modeling complex data behaviors.
Furthermore, we introduce a regression model based on the quantiles of this new unitary distribution, which can accommodate asymmetric behaviors, bathtub-shaped patterns, and increasing-decreasing-increasing shapes through the incorporation of exogenous variables.
Inference on the parameters of the proposed models is conducted using maximum likelihood estimation.
To evaluate the performance of the inference procedures, Monte Carlo simulations are performed, computing the bias and mean square error of the point estimators, as well as the 95\% coverage rates of the confidence intervals.
Finally, to assess the goodness of fit of the proposed models to real-world data, diagnostic tools based on quantile residuals are explored.

\section*{Proposed Models\label{sec:PropM}}

In this section, the Unit-Modified Weibull distribution will be presented, along with the quantile regression model based on this distribution.

\subsection*{Unit-Modified Weibull Distribution}

Let \(X\) be a random variable with the modified Weibull distribution, denoted by \(\text{MW}(\alpha, \gamma, \lambda)\), proposed by \cite{MW2003}. The cumulative distribution function (CDF) of the \(\text{MW}\) distribution is given by
\begin{equation}\label{eq1}
    F_X(x)=1-\exp\left(-\alpha x^\gamma \exp(\lambda x)\right),
\end{equation}
where $x, \alpha, \gamma >0$ {and} $\lambda \geq 0$. 
By deriving Equation \eqref{eq1} with respect to \(x\), we obtain the probability density function (PDF) of the \(\text{MW}(\alpha, \gamma, \lambda)\) distribution, which is given by
\begin{equation}\label{eq2}
    f_X(x)=\alpha x^{\gamma-1}\left(\lambda x+\gamma\right)\exp\left(\lambda x-\alpha x^\gamma \exp(\lambda x)\right),
\end{equation}
where \(\alpha\) is the scale parameter, \(\gamma\) is the shape parameter, and \(\lambda\) is the acceleration parameter, which acts as an accelerating factor in the time of imperfection and functions as a fragility factor in the survival of the individual as time increases. The MW distribution has some particular cases: when \(\lambda = 0\), we obtain the Weibull distribution \citep{weibull1951}; when \(\alpha = 1\) and \(\gamma = 0\), we obtain the extreme value distribution \citep{bain1974analysis}; and when \(\lambda = 0\) with \(\gamma = 1\) and \(\gamma = 2\), we obtain the Exponential and Rayleigh distributions \citep{bain1974analysis}, respectively. 
Additionally, the modified Rayleigh distribution can be defined for \(\gamma = 2\) and the modified exponential distribution for \(\gamma = 1\). While these distributions have not been extensively explored in the literature, they are recognized by \cite{silva2010beta} as special cases of the modified Weibull distribution.

Considering the transformation 
\(Y = \mathrm{e}^{-X}\),
where \(X \sim \text{MW}(\alpha, \gamma, \lambda)\), whose CDF and PDF are given by Equations \eqref{eq1} and \eqref{eq2}, respectively,  
we propose the Unit-Modified Weibull distribution, denoted by \(\text{UMW}(\alpha, \gamma, \lambda)\).  
The CDF and PDF of the new distribution are given by, respectively,  
\begin{equation}\label{eq3}
F_Y(y)=\exp\left(-\alpha\left[-\log(y)\right]^\gamma y^{-\lambda}\right)
\end{equation}
and
\begin{equation}\label{eq4}
 f_Y(y)=   \frac{\alpha\left[-\log(y)\right]^{\gamma}}{\log\left(y\right)y^{\lambda+1}}\left[\lambda\log(y)-\gamma\right]\exp\left(-\alpha\left[-\log(y)\right]^\gamma y^{-\lambda}\right),
\end{equation}
for \( y \in (0, 1) \), where \(\alpha, \gamma > 0\) and \(\lambda \geq 0\). 
The \(\alpha\) represents the scale parameter, while 
\(\gamma\) and \(\lambda\) are shape parameters. 
These characteristics are illustrated in Figure \ref{fig:density1}, where the flexibility of the distribution is also evident, particularly with respect to the parameter \(\gamma\). 
Notably, when \(\gamma <1\), the distribution exhibits an increasing-decreasing-increasing pattern, also known as a bathtub-shaped behavior. Therefore, Equation \eqref{eq3} extends at least four unit distributions, 
incorporating as submodels some distributions that, to the best of our knowledge, have not yet been explored in the literature. Specifically: when \(\alpha = 1\) and \(\gamma = 0\), the unit-extreme value; when \(\lambda = 0\), the unit-Weibull \citep{weibullunit}; when \(\lambda = 0\) and \(\gamma = 2\), the unit-Rayleigh distribution \citep{bantan2020some}; and when \(\gamma = 2\), the unit-modified Rayleigh distribution. Additionally, the UMW distribution is part of the unit extended Weibull family \citep{guerra2021unit}.

\begin{figure}
\centering
\subfigure[$\gamma=1.2$ 
and $\lambda=0.7$]{\includegraphics[width=0.45\textwidth]{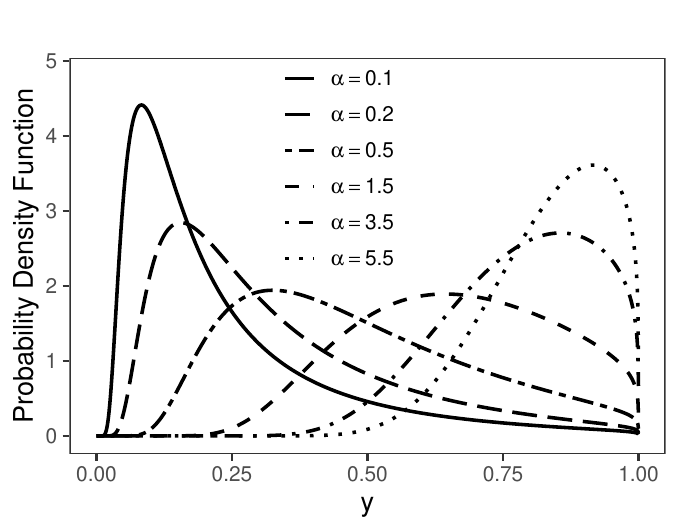}}
\subfigure[$\alpha=0.4$ 
and $\lambda=0.3$]{\includegraphics[width=0.45\textwidth]{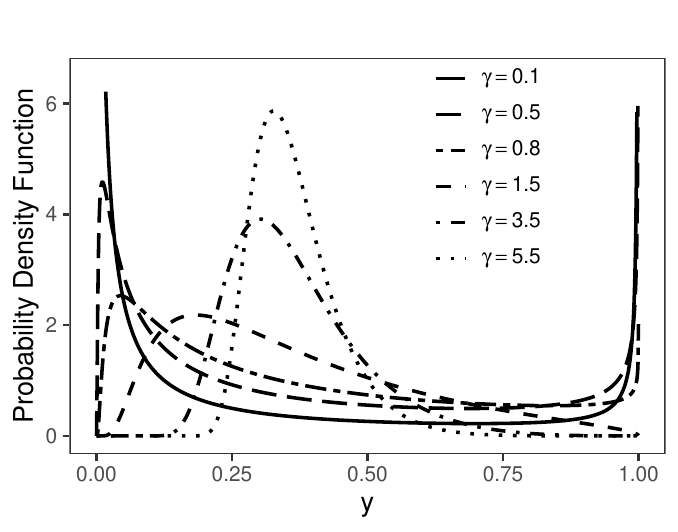}}\\[-0.05cm]
\subfigure[$\alpha=0.6$ {and} $\gamma=1.5$]{\includegraphics[width=0.45\textwidth]{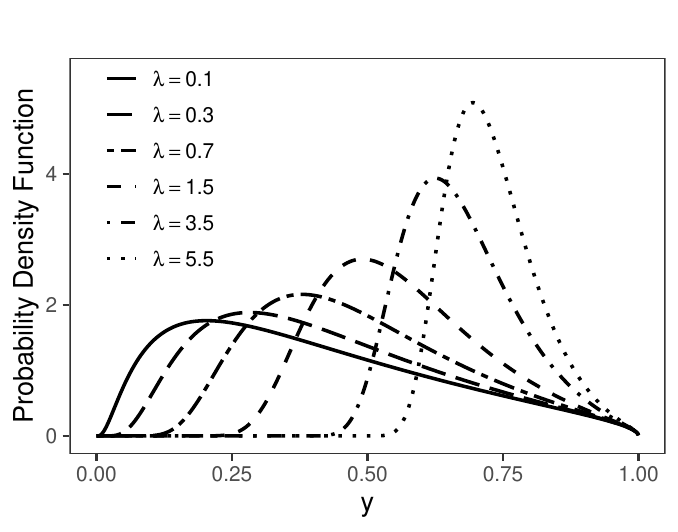}}
\caption{The probability density function of the \(\mbox{UMW}\) distribution for some values of the parameters $(\alpha,\gamma,\lambda)$.}\label{fig:density1}
\scriptsize{Source: Authors.}%
\end{figure}

The UMW distribution is identifiable in the classical sense, meaning that distinct parameter values correspond to distinct probability density functions (PDFs). 
This property can be established by analyzing the logarithm of the density function and its asymptotic behavior with respect to its argument.
Although the full proof is omitted for brevity, identifiability ensures that distinct parameter values yield distinct data distributions, allowing the corresponding estimators to be uniquely determined from the observed data and enabling valid inference.

The quantiles of the \(\mbox{UMW}(\alpha, \gamma, \lambda)\) distribution can be obtained by inverting the CDF, given by Equation \eqref{eq3}, represented by the quantile function \( Q_Y(\tau) = \mu_{\tau}\), 
which can be obtained by solving the following non-linear equation with respect to \(\mu_{\tau}\):
\begin{equation}\label{eq:quantilumw}
\mu_{\tau}^{-\lambda}\left[-\log(\mu_{\tau})\right]^\gamma+\frac{1}{\alpha}\log\left(\tau\right)=0, 
\end{equation}
for a defined quantile $\tau \in (0,1)$.
By evaluating the quantile function at a random variable following a uniform distribution \(U(0,1)\), random numbers from the \(\text{UMW}(\alpha, \gamma, \lambda)\) distribution can be generated.  
This process requires solving the following nonlinear equation for \(Y\):
\[Y^{-\lambda}\left[-\log(Y)\right]^\gamma+\frac{1}{\alpha}\log\left(U\right)=0,\]
where \(U\sim U(0,1)\). 

The hazard rate function of the UMW distribution is given by
\begin{equation}\label{hazard1}
    h_Y(y) = \dfrac{f_Y(y)}{1-F_Y(y)} = \dfrac{{\alpha}y^{-{\lambda}-1}\left[-\log\left(y\right)\right]^{\gamma}\left[{\lambda}\log\left(y\right)-{\gamma}\right]}{\log\left(y\right)\left\{\exp\left(\alpha\left[-\log(y)\right]^\gamma y^{-\lambda}\right)-1\right\}}.
\end{equation}
As shown in Figure \ref{fig:hazard}, it is possible to observe the behavior and flexibility of the hazard rate function, exhibiting bathtub-shaped and increasing-decreasing-increasing characteristics for the different values of the parameters \((\alpha, \gamma, \lambda)\).

\begin{figure}
\centering
{\includegraphics[width=0.49\textwidth]{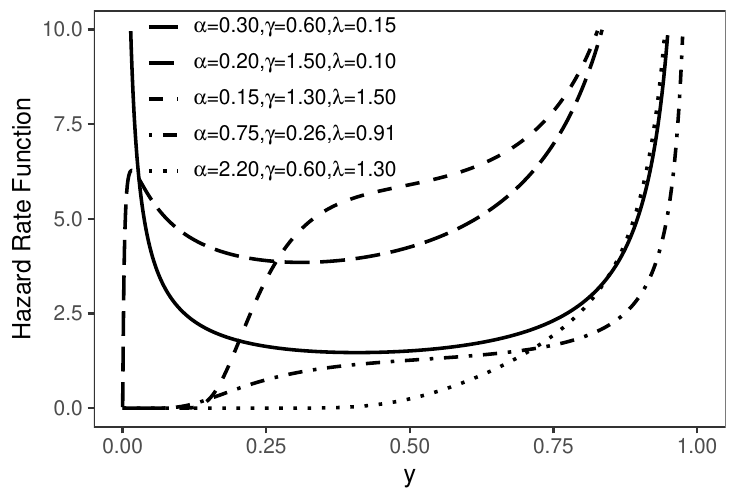}}
\caption{Hazard function of the \(\mbox{UMW}\) distribution for some values of the parameters $(\alpha,\gamma,\lambda)$.}\label{fig:hazard}
\scriptsize{Source: Authors.}%
\end{figure}

The versatility of the new distribution can be highlighted by the formulas for skewness from \cite{bowley1901} and kurtosis from \cite{moors1986}, respectively, given by:
\[\mbox{S} = \frac{{Q_Y(3/4) + Q_Y(1/4) - 2Q_Y(1/2)}}{{Q_Y(3/4) - Q_Y(1/4)}}\]
and
\[\mbox{K} = \frac{{Q_Y(7/8) + Q_Y(3/8) - Q_Y(5/8) - Q_Y(1/8)}}{{Q_Y(3/4) - Q_Y(1/4)}},\]
where \( Q_Y(\cdot) \) represents the calculation of the quantiles of the distribution given by Equation \eqref{eq:quantilumw}. 

We can observe in Figure~\ref{fig:kur} the flexibility of the UMW distribution with respect to skewness and kurtosis, especially when at least one of the parameters has a value less than 1. Under these conditions, the coefficients span a wider range of values, allowing the distribution to capture different shapes and tail structures. It is also noted that higher values of $\alpha$ tend to produce negative or only mild skewness, along with reduced kurtosis, whereas lower values of $\alpha$ lead to an increase in both skewness and kurtosis. This behavior is associated with the presence of heavier tails and a greater concentration of probability in extreme regions. These characteristics highlight the potential of the UMW distribution to model data with varying patterns of skewness and dispersion.

\begin{figure}
\centering
{\includegraphics[width=0.325\textwidth]{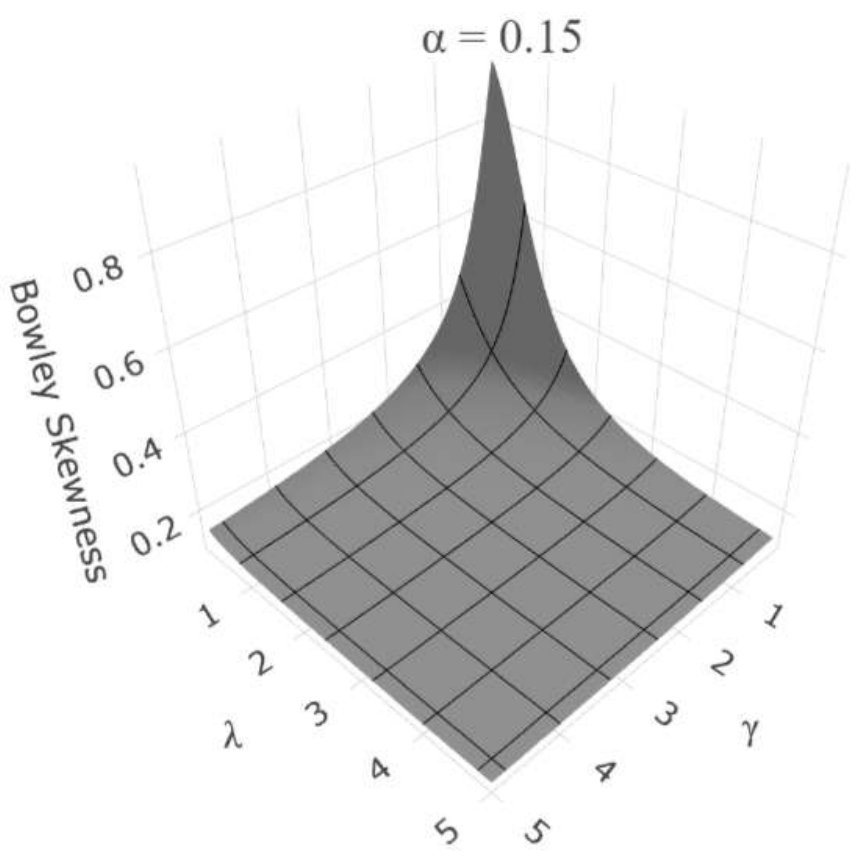}}
{\includegraphics[width=0.325\textwidth]{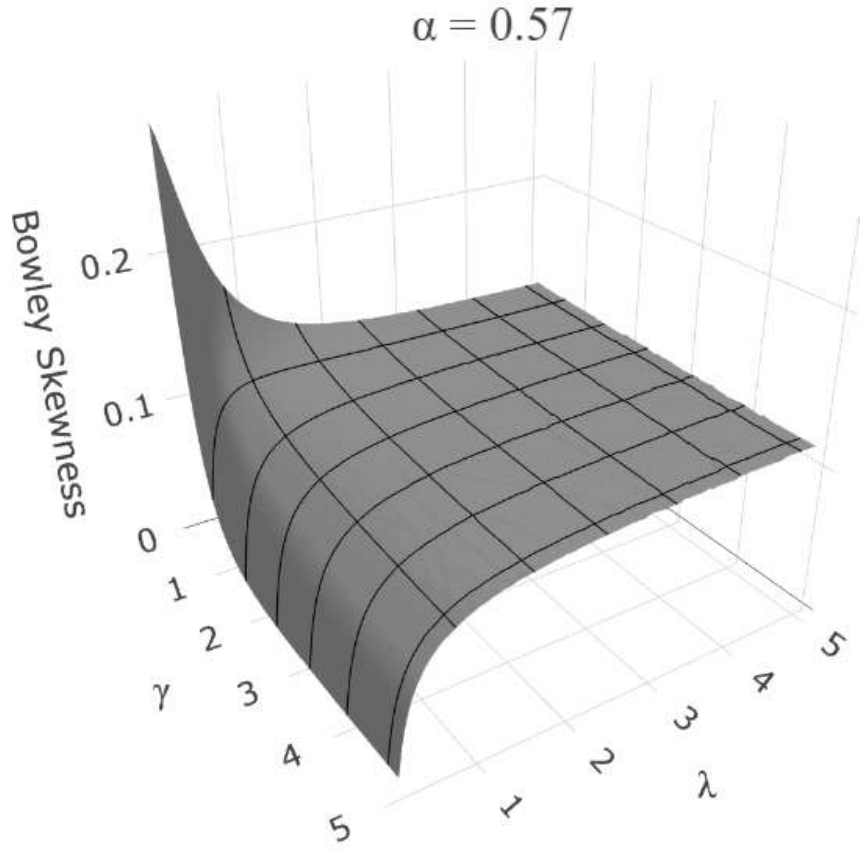}}
{\includegraphics[width=0.325\textwidth]{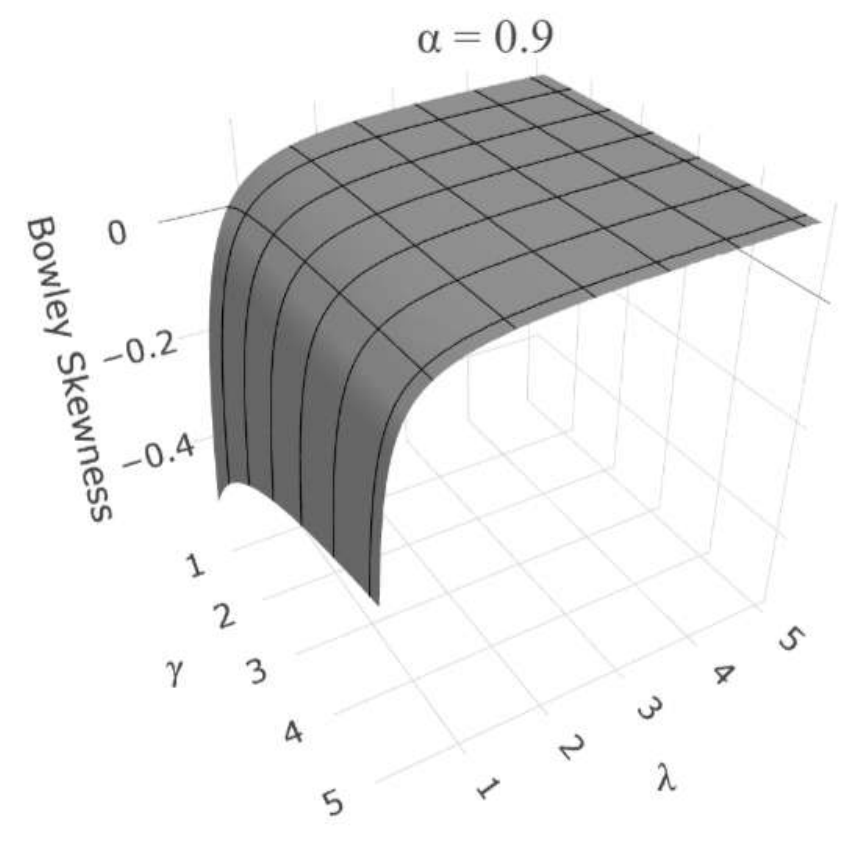}}\\[1em]
{\includegraphics[width=0.325\textwidth]{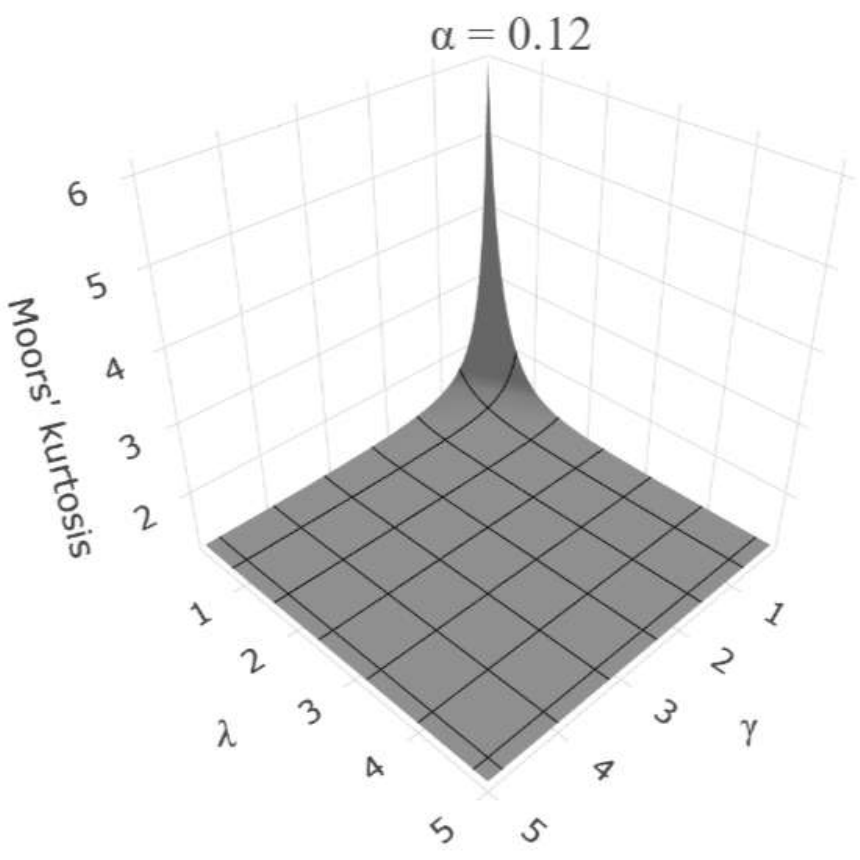}}
{\includegraphics[width=0.325\textwidth]{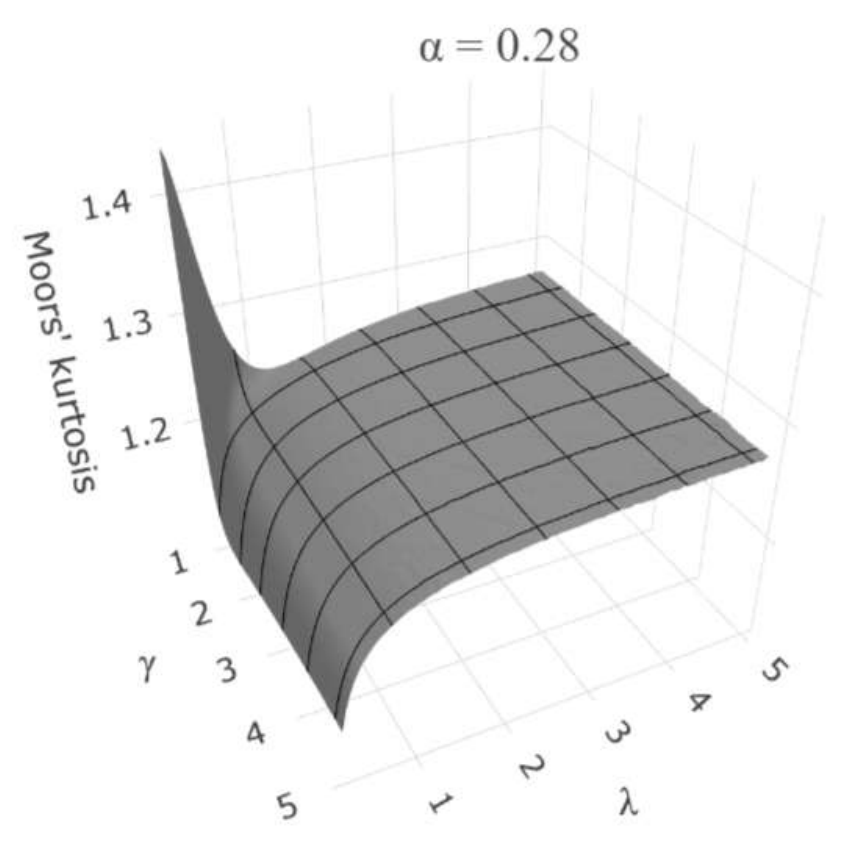}}
{\includegraphics[width=0.325\textwidth]{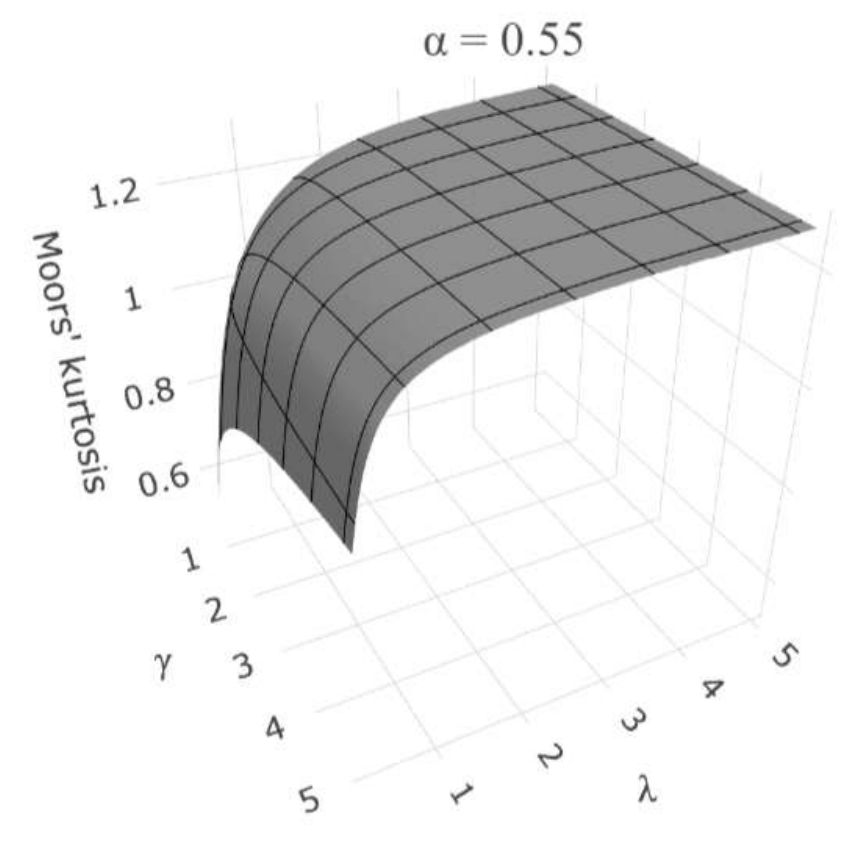}}
\caption{Skewness and kurtosis of the \(\mbox{UMW}\) distribution for some values of the parameters \((\alpha,\gamma,\lambda)\).}\label{fig:kur}
\scriptsize{Source: Authors.}%
\end{figure}

\subsubsection*{Order Statistics}

The evaluation of various life cycle systems with given component structures requires the consideration of ordered random variables, known as order statistics \citep{david2003order}. These variables play a fundamental role in understanding the reliability and failure behavior of systems. In this section, we will present the fundamental distributional properties of the order statistics of the \(\mbox{UMW}\) distribution.

The PDF of the \(r\)-th order statistic \(Y_{(r)}\), defined as the \(r\)-th smallest value in an ordered sample of size \(n\) from the UMW distribution, is given by the following theorem:

\begin{theorem}
   Let \( Y_1, Y_2, \ldots, Y_n \) be an independent and identically distributed (i.i.d.) random sample from the UMW distribution with sample size \( n \), and let \( Y_{(1)}, Y_{(2)}, \ldots, Y_{(n)} \) denote the order statistics of this sample, such that \( Y_{(1)} \leq Y_{(2)} \leq \dots \leq Y_{(n)} \). For any \( r = 1, 2, \dots, n \), the PDF of \( Y_{(r)} \) is given by
    \begin{align*}
        f_{Y_r}(y) = & \frac{n!}{(r-1)!(n-r)!} \dfrac{\alpha \left[-\log(y)\right]^\gamma}{\log(y) y^{\lambda+1}} \left[\lambda \log(y) - \gamma\right] \left\{ 1 - \exp\left( -\alpha \left[-\log(y)\right]^\gamma y^{-\lambda} \right) \right\}^{n-r} \\
        & \times \exp\left( -r \alpha \left[-\log(y)\right]^\gamma y^{-\lambda} \right),
    \end{align*}
    where \( r \in \{1, 2, \dots, n\} \) {and} \( y \in (0, 1) \).
\end{theorem}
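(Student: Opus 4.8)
The plan is to invoke the standard formula for the density of the $r$-th order statistic from an i.i.d.\ sample and substitute the UMW CDF and PDF derived earlier. Recall the classical result (see, e.g., \citealp{david2003order}): if $Y_1,\dots,Y_n$ are i.i.d.\ with CDF $F$ and PDF $f$, then
\begin{equation*}
f_{Y_{(r)}}(y) = \frac{n!}{(r-1)!(n-r)!}\,\left[F(y)\right]^{r-1}\left[1-F(y)\right]^{n-r} f(y).
\end{equation*}
So the first step is simply to state this formula; the remainder of the proof is an algebraic substitution with no real obstacle.

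Next, I would plug in the explicit expressions from Equations~\eqref{eq3} and~\eqref{eq4}. Writing $A(y) := \alpha\left[-\log(y)\right]^\gamma y^{-\lambda}$ for brevity, we have $F_Y(y) = \exp(-A(y))$ and
\begin{equation*}
f_Y(y) = \frac{\alpha\left[-\log(y)\right]^{\gamma}}{\log(y)\,y^{\lambda+1}}\left[\lambda\log(y)-\gamma\right]\exp(-A(y)).
\end{equation*}
Hence $\left[F_Y(y)\right]^{r-1} = \exp\!\bigl(-(r-1)A(y)\bigr)$, and multiplying this by the factor $\exp(-A(y))$ coming from $f_Y(y)$ combines into $\exp(-rA(y))$, which is exactly the term $\exp\!\bigl(-r\alpha\left[-\log(y)\right]^\gamma y^{-\lambda}\bigr)$ appearing in the theorem. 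The factor $\left[1-F_Y(y)\right]^{n-r}$ becomes $\left\{1-\exp(-A(y))\right\}^{n-r}$, matching the stated curly-brace term. Collecting the combinatorial constant and the remaining part of $f_Y(y)$ yields precisely the claimed formula.

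The only thing requiring a word of justification is why the order-statistic density formula applies here, namely that $Y$ is a continuous random variable on $(0,1)$ with a genuine PDF; this is immediate from the construction $Y = e^{-X}$ with $X \sim \text{MW}(\alpha,\gamma,\lambda)$ continuous and the differentiability of the CDF in Equation~\eqref{eq3}, which gives the PDF in Equation~\eqref{eq4}. There is no genuine obstacle in this proof: it is a direct computation. If anything, the only mild care needed is bookkeeping of signs, since $\log(y) < 0$ and $-\log(y) > 0$ on $(0,1)$, but these are already handled consistently in the expression for $f_Y(y)$ taken from Equation~\eqref{eq4}, so they carry through unchanged. I would therefore present the proof as: (i) recall the general order-statistic density; (ii) substitute $F_Y$ and $f_Y$; (iii) simplify the exponential terms via $\exp(-(r-1)A)\cdot\exp(-A) = \exp(-rA)$; (iv) read off the result.
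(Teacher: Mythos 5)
Your proposal is correct and follows exactly the same route as the paper's proof, which simply invokes the standard order-statistic density formula (citing Theorem~3 of \citealp{rohatgi2015}) and substitutes the UMW CDF and PDF from Equations~\eqref{eq3} and~\eqref{eq4}. Your write-up is in fact more explicit than the paper's, spelling out the combination $\exp(-(r-1)A)\cdot\exp(-A)=\exp(-rA)$ that the paper leaves implicit.
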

\begin{proof}
    The PDF of \( Y_{(r)} \) can be obtained by applying Theorem 3 from \cite[p.~167]{rohatgi2015}. By substituting the CDF and PDF of the UMW distribution, as given in Equations \eqref{eq3} and \eqref{eq4}, into the general formula for the PDF of order statistics.
\end{proof}

The PDFs of the minimum, \(Y_{(1)}\), and the maximum, \(Y_{(n)}\), which represent the smallest and largest values in an ordered sample of size \(n\) from the UMW distribution, are specific cases of Theorem 1. Specifically, they are obtained when \(r = 1\) and \(r = n\), corresponding to the minimum and maximum, respectively, as presented in Corollary 1 and Corollary 2.

\begin{corollary}
   The PDF of the minimum, \( Y_{(1)} \), is given by:
    \begin{align*}
        f_{Y_1}(y) = & \dfrac{n \alpha \left[-\log(y)\right]^\gamma}{\log(y) y^{\lambda+1}} \left[\lambda \log(y) - \gamma\right] \left\{ 1 - \exp\left( -\alpha \left[-\log(y)\right]^\gamma y^{-\lambda} \right) \right\}^{n-1} \\
        & \times \exp\left( -\alpha \left[-\log(y)\right]^\gamma y^{-\lambda} \right).
    \end{align*}
\end{corollary}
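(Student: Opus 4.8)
The plan is to obtain Corollary 1 as the special case $r=1$ of Theorem 1, and to cross-check the result through the elementary order-statistic identity for the sample minimum.

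First I would set $r=1$ in the expression for $f_{Y_r}(y)$ given in Theorem 1. The combinatorial prefactor collapses to $\dfrac{n!}{(1-1)!\,(n-1)!}=\dfrac{n!}{0!\,(n-1)!}=n$; the exponent $n-r$ on the factor $\bigl\{1-\exp(-\alpha[-\log(y)]^\gamma y^{-\lambda})\bigr\}$ becomes $n-1$; and the factor $\exp(-r\alpha[-\log(y)]^\gamma y^{-\lambda})$ becomes $\exp(-\alpha[-\log(y)]^\gamma y^{-\lambda})$. Collecting these substitutions reproduces exactly the stated formula for $f_{Y_1}(y)$, valid for $y\in(0,1)$ with $\alpha,\gamma>0$ and $\lambda\ge 0$.

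As an independent verification, I would note that for an i.i.d.\ sample the CDF of the minimum is $F_{Y_{(1)}}(y)=1-[1-F_Y(y)]^n$, so that $f_{Y_{(1)}}(y)=n\,[1-F_Y(y)]^{n-1}f_Y(y)$ by the chain rule. Substituting $1-F_Y(y)=1-\exp(-\alpha[-\log(y)]^\gamma y^{-\lambda})$ from Equation~\eqref{eq3} and $f_Y(y)$ from Equation~\eqref{eq4} yields the same closed form, which confirms the claim. The only points requiring a little care are the bookkeeping of the three exponents after the substitution and the sign of the bracket $[\lambda\log(y)-\gamma]$, which is negative on $(0,1)$ and combines with the negative factor $1/\log(y)$ to keep the density nonnegative; there is no substantive obstacle.
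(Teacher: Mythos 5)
Your proof is correct and follows the same route as the paper: Corollary 1 is obtained simply by setting $r=1$ in Theorem 1, which reduces the combinatorial prefactor to $n$ and the two exponents to $n-1$ and $1$, exactly as you compute. Your additional cross-check via $f_{Y_{(1)}}(y)=n\,[1-F_Y(y)]^{n-1}f_Y(y)$ and the sign remark are sound but not needed beyond what the paper does.
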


\begin{corollary}
    The PDF of the maximum, \( Y_{(n)} \), is given by:
    \begin{align*}
        f_{Y_n}(y) = & \dfrac{n \alpha \left[-\log(y)\right]^\gamma}{\log(y) y^{\lambda+1}} \left[\lambda \log(y) - \gamma\right] \exp\left( -n \alpha \left[-\log(y)\right]^\gamma y^{-\lambda} \right).
    \end{align*}
\end{corollary}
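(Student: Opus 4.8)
The plan is to derive Corollary~2 as the special case $r = n$ of Theorem~1, which already gives a closed form for the PDF of an arbitrary order statistic. First I would substitute $r = n$ into the general expression for $f_{Y_{(r)}}(y)$ in Theorem~1 and simplify the three $r$-dependent factors: the combinatorial coefficient $\frac{n!}{(r-1)!\,(n-r)!}$ collapses to $\frac{n!}{(n-1)!\,0!} = n$; the factor $\{1 - \exp(-\alpha[-\log(y)]^{\gamma} y^{-\lambda})\}^{n-r}$ becomes an empty product, $\{\cdot\}^{0} = 1$, and therefore disappears; and the exponential term $\exp(-r\alpha[-\log(y)]^{\gamma} y^{-\lambda})$ becomes $\exp(-n\alpha[-\log(y)]^{\gamma} y^{-\lambda})$. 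Collecting these simplifications reproduces exactly the claimed form of $f_{Y_{(n)}}(y)$.

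As an independent cross-check, I would also argue directly. Since $Y_1, \dots, Y_n$ are i.i.d.\ with common CDF $F_Y$ from Equation~\eqref{eq3}, the event $\{Y_{(n)} \leq y\}$ coincides with $\{Y_i \leq y \text{ for all } i\}$, so $F_{Y_{(n)}}(y) = [F_Y(y)]^{n} = \exp(-n\alpha[-\log(y)]^{\gamma} y^{-\lambda})$ on $(0,1)$. Differentiating gives $f_{Y_{(n)}}(y) = n\,[F_Y(y)]^{n-1} f_Y(y)$, and inserting $f_Y$ from Equation~\eqref{eq4} together with $[F_Y(y)]^{n-1} = \exp(-(n-1)\alpha[-\log(y)]^{\gamma} y^{-\lambda})$ merges the two exponential factors into a single $\exp(-n\alpha[-\log(y)]^{\gamma} y^{-\lambda})$, again yielding the stated formula. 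This route uses only the absolute continuity of $F_Y$ on $(0,1)$, which is immediate from Equation~\eqref{eq4}.

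There is essentially no substantive obstacle here: both arguments are routine one-line computations. The only point demanding a little care is the boundary bookkeeping — correctly reading $0! = 1$ in the combinatorial coefficient and recognising that the $(n-r)$-th power is an empty product equal to $1$ when $r = n$, so that no spurious factor is dragged along. I would inherit the sign structure of the density (the fact that $\log(y) < 0$ and $\lambda \log(y) - \gamma < 0$ on $(0,1)$, so the product is nonnegative) directly from Equation~\eqref{eq4} rather than re-establish positivity here.
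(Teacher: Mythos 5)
Your proposal is correct and follows the paper's own route: the paper obtains Corollary~2 precisely by setting $r=n$ in Theorem~1, exactly as in your first argument. The direct derivation via $F_{Y_{(n)}}(y)=[F_Y(y)]^n$ is a sound additional cross-check but not needed beyond what the paper does.
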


Order statistics are important mathematical properties widely used in reliability analysis and service life modeling. 
In the context of unit distributions, for example, in a healthcare system composed of multiple municipalities, the vaccination coverage rate can be observed for each unit. The minimum observed value may indicate critical regions with low immunization, requiring priority attention from public health policies. Thus, order statistics, such as the minimum value, can provide crucial information for identifying areas of greater vulnerability, supporting the design of intervention strategies and the allocation of resources in public policies, as in the case of vaccination coverage.

\subsection*{Regression Model}
\label{sec:regressao}

One of the objectives of this work is to introduce the class of Unit-Modified Weibull quantile regression (RQ-UMW) models, in which the response variable has a UMW distribution. The reparameterization to be considered will be in terms of the quantile of the distribution, which, in general, presents advantages when modeling asymmetric random variables and with possible atypical observations compared to modeling in terms of the mean \citep{john2015,lemonte2016}.

Using Equation \eqref{eq:quantilumw} and performing some algebraic manipulations, we obtain:
\begin{equation}\label{eq5}
\alpha=-\frac{\log(\tau)\mu_{\tau}^\lambda}{\left[-\log(\mu_{\tau})\right]^\gamma}\cdot
\end{equation}
Based on Equation \eqref{eq5}, we derive a reparametrization of the $\mbox{UMW}(\alpha,\gamma,\lambda)$ distribution in terms of $\alpha$, 
yielding the reparameterized CDF and PDF of $\mbox{UMW}(\mu_{\tau},\gamma,\lambda)$, respectively:
\begin{equation}\label{eq2rep}
{F_Y}(y) = \exp\left(\frac{\mu_{\tau}^{\lambda}\log\left({\tau}\right)\left[-\log\left(y\right)\right]^{\gamma}}{y^{\lambda}\left[-\log\left({\mu_{\tau}}\right)\right]^{\gamma}}\right)
\end{equation}
and
\begin{equation}\label{eq1rep}
{f_Y}(y)= -\dfrac{\mu_{\tau}^{\lambda}\log\left({\tau}\right)\left[-\log\left(y\right)\right]^{\gamma}\left[{\lambda}\log\left(y\right)-{\gamma}\right]}{y^{{\lambda}+1}\left[-\log\left({\mu_{\tau}}\right)\right]^{\gamma}\log\left(y\right)}\exp\left(\frac{\mu_{\tau}^{\lambda}\log\left({\tau}\right)\left[-\log\left(y\right)\right]^{\gamma}}{y^{\lambda}\left[-\log\left({\mu_{\tau}}\right)\right]^{\gamma}}\right),
\end{equation}
with \(y\in (0, 1)\).

Let $\bm{Y}=(Y_1,\ldots , Y_n)^{\top}$ be a random sample, in which each $Y_t$, for $t=1,\ldots,n$, follows a reparameterized $\mbox{UMW}(\mu_{\tau},\gamma,\lambda)$ distribution. 
Considering de PDF in Equation \eqref{eq1rep}, we can include a regression structure for quantile modeling through the following structure:
\begin{equation*}\label{eq6}
 g(\mu_{\tau,t})=\bm{x}_t^{\top}\bm{\beta}=\zeta_t, \quad t=1, \ldots, n,
\end{equation*}
where \( g : (0,1) \rightarrow \mathbb{R} \) is monotonic and twice differentiable link function, 
such as logit, probit, cloglog, loglog, cauchit, among others, 
$\bm{\beta}=(\beta_0, \ldots, \beta_{k-1})^{\top}$ is the vector of unknown parameters ($\bm{\beta}\in\mathbb{R}^{k}$) and $\bm{x}_{t}=({x_{t0}},\ldots,{x_{tk-1}})^\top$ are observations of $k$ covariates ($k<n$), which are assumed to be fixed and known. Considering the intercept, we have that $x_{t0} = 1, \, \forall t$.
In practice, the parameters are unknown and need to be estimated. 
The next section explores likelihood inference. 

\section*{Likelihood Inference}
\label{sec:estim}

Let $\bm{y}=(y_1,\ldots , y_n)^{\top}$ be an observed sample, the parameter vectors of the UMW distribution and the RQ-UMW model are given by $\bm{\theta}_1=(\alpha,\gamma,\lambda)$ and $\bm{\theta}_2=(\gamma,\lambda, \bm{\beta}^\top)^\top$, respectively. 
The maximum likelihood estimators (MLE) \citep{pawitan2001}
of the parameter vectors $\bm{\theta}_m$, for $m=1,2$, are given by
\begin{equation*}
\widehat{\bm{\theta}}_m=\argmax_{{\bm{\theta}_m}\in \bm\Theta_m}(\ell_m(\bm{\theta}_m)), 
\end{equation*}
where ${\bm\Theta_1} \subseteq \mathbb{R}_+^3$ and ${\bm\Theta_2} \subseteq \{\mathbb{R}_+^2 \times \mathbb{R}^k\}$ are the parameter spaces of the UMW distribution and the RQ-UMW model, respectively, and $\ell_m(\bm{\theta}_m)$ are the log-likelihood functions, given by
\begin{equation}\label{LV}
\ell_m(\bm{\theta}_m)=\ell_m(\bm{\theta}_m;\bm{y})=\sum_{t=1}^{n}\ell_{_m,t}(\bm{\theta}_m,{{y_t}}),
\end{equation}
with
\begin{align*}
\ell_{1,t}(\bm{\theta_1},y_t) = & \ \log(\alpha)+\log\left(\gamma-\lambda\log(y_t)\right)-(\lambda+1)\log(y_t)+ (\gamma-1)  \log\left(-\log(y_t)\right) \\ & -\alpha y_t^{-\lambda}\left[-\log(y_t)\right]^{\gamma},                     
\\ 
\ell_{2,t}(\bm{\theta_2},y_t) = & \ \log\left({\gamma}-{\lambda}\log\left(y_\text{t}\right)\right)-\left({\lambda}+1\right)\log\left(y_\text{t}\right)+y_\text{t}^{-\lambda}\mu_{\tau,t}^{\lambda}\left[-\log\left(y_\text{t}\right)\right]^{\gamma}{\left[-\log\left({\mu_{\tau,t}}\right)\right]^{-\gamma}}\\ &\times \log\left({\tau}\right) +\left({\gamma}-1\right)\log\left(-\log\left(y_\text{t}\right)\right) + \log\left(-{\mu_{\tau,t}^{\lambda}\log\left({\tau}\right)\left[-\log\left({\mu_{\tau,t}}\right)\right]^{-\gamma}}\right),
\end{align*}
where the MLE are obtained by maximizing the log-likelihood functions $\ell_m(\bm{\theta}_m)$. However, the solution to this maximization does not have a closed form, requiring the use of numerical methods to obtain the estimates.

\subsection*{Score Vector}

The score vector is obtained by deriving the log-likelihood function $\ell_m(\bm{\theta}_m)$, given by Equation \eqref{LV}, with respect to each of the components of the parameter vector, $\bm{\theta}_m$. 
The elements referring to the components of the score vector related to the parameters $\alpha$, $\gamma$, and $\lambda$ of the UMW distribution are given, respectively, by:
\begin{align*}
\mbox{U}_\alpha(\bm{\theta}_1) =\dfrac{\partial \ell_{1}(\bm{\theta}_1)}{\partial \alpha}= \sum_{t=1}^{n}\frac{\partial\ell_{1,t}(\bm{\theta}_1,{{y_t}})}{\partial \alpha}, \quad 
\mbox{U}_\gamma(\bm{\theta}_1) =\dfrac{\partial \ell_{1}(\bm{\theta}_{1})}{\partial \gamma} = \sum_{t=1}^{n}\frac{\partial\ell_{1,t}(\bm{\theta}_1,{{y_t}})}{\partial \gamma},\\
\mbox{U}_\lambda(\bm{\theta}_1) =\dfrac{\partial \ell_{1}(\bm{\theta}_1)}{\partial \lambda}= \sum_{t=1}^{n}\frac{\partial\ell_{1,t}(\bm{\theta}_1,{{y_t}})}{\partial \lambda}, \quad\quad\quad\quad\quad\quad\quad\quad\,\,\,
\end{align*}
where
\begin{align*}
\frac{\partial\ell_{1,t}(\bm{\theta_1},{{y_t}})}{\partial \alpha} &=  
\dfrac{1}{{\alpha}}-\dfrac{\left[-\log\left(y_\text{t}\right)\right]^{\gamma}}{y_\text{t}^{\lambda}} := \texttt{r}_t, \\
\frac{\partial\ell_{1,t}(\bm{\theta_1},{{y_t}})}{\partial \gamma}  &= \dfrac{1}{{\gamma}-{\lambda}\log\left(y_\text{t}\right)}-\dfrac{{\alpha}\left[-\log\left(y_\text{t}\right)\right]^{\gamma}\log\left(-\log\left(y_\text{t}\right)\right)}{y_\text{t}^{\lambda}}+\log\left(-\log\left(y_\text{t}\right)\right) := \texttt{s}_t,
\\
\frac{\partial\ell_{1,t}(\bm{\theta_1},{{y_t}})}{\partial \lambda} &= -\dfrac{\log\left(y_\text{t}\right)}{{\gamma}-{\lambda}\log\left(y_\text{t}\right)}+\dfrac{{\alpha}\left[-\log\left(y_\text{t}\right)\right]^{\gamma}\log\left(y_\text{t}\right)}{y_\text{t}^{\lambda}}-\log\left(y_\text{t}\right) := \texttt{u}_t.
\end{align*}

The score vector of the UMW distribution can be written in matrix form as
\[\bm{\mbox{U}}(\bm{\theta}_1)=\Big[\mbox{U}_\alpha(\bm{\theta}_1), \mbox{U}_\gamma(\bm{\theta}_1), \mbox{U}_\lambda(\bm{\theta}_1)\Big],\]
\noindent where
\[
\mbox{U}_{\alpha}(\bm{\theta_1})  = \bm{\texttt{r}}\bm{1}_n^{\top},\quad
\mbox{U}_{\gamma}(\bm{\theta_1})  = \bm{\texttt{s}}\bm{1}_n^{\top},\quad
\mbox{U}_{\lambda}(\bm{\theta_1})  = \bm{\texttt{u}}\bm{1}_n^{\top},\]
with \(\bm{\texttt{r}}=(\texttt{r}_1,\ldots,\texttt{r}_n)\), \(\bm{\texttt{s}}=(\texttt{s}_1,\ldots,\texttt{s}_n)\), \(\bm{\texttt{u}}=(\texttt{u}_1,\ldots,\texttt{u}_n)\), \(\bm{1}_n^{\top}\)is a column vector of ones of dimension \(n\).

The elements referring to the components of the score vector related to the parameters \( \gamma \), \( \lambda \), and \( \beta_j \), where \( j = 1, \ldots, k \), of the RQ-UMW model are given, respectively, by:
\begin{align*}
\mbox{V}_\gamma(\bm{\theta}_2) =\dfrac{\partial \ell_{2}(\bm{\theta}_2)}{\partial \gamma}= \sum_{t=1}^{n}\frac{\partial\ell_{2,t}(\bm{\theta}_2,{{y_t}})}{\partial \gamma}, \quad 
\mbox{V}_\lambda(\bm{\theta}_2) =\dfrac{\partial \ell_{2}(\bm{\theta}_2)}{\partial \lambda} = \sum_{t=1}^{n}\frac{\partial\ell_{2,t}(\bm{\theta}_2,{{y_t}})}{\partial \lambda} , \\
\mbox{V}_{\beta_j}(\bm{\theta}_2)= \frac{\partial \ell_{2}(\bm{\theta}_2)}{\partial \beta_j}=\sum_{t=1}^{n}\frac{\partial \ell_{2,t}(\bm{\theta}_2,{{y_t}})}{\partial \beta_j}
=\sum_{t=1}^{n}\frac{\partial \ell_{2,t}(\bm{\theta}_2,{{y_t}})}{\partial \mu_{\tau,t}}\frac{d \mu_{\tau,t}}{d {\zeta_t}}\frac{\partial \zeta_t}{\partial \beta_j}, \quad\quad\quad
\end{align*}
where 
\begin{align*}
&\frac{\partial\ell_{2,t}(\bm{\theta}_2,{{y_t}})}{\partial \gamma} =   \dfrac{{\texttt{A}_{t}}{\texttt{B}_{t}}}{y_\text{t}^{\lambda}\left[-\log\left(\mu_{\tau,t}\right)\right]^{\gamma}} + \dfrac{1}{{\gamma}-{\lambda}\log\left(y_\text{t}\right)} + {\texttt{B}_{t}} := \texttt{v}_t, \\
& \frac{\partial\ell_{2,t}(\bm{\theta}_2,{{y_t}})}{\partial \lambda} =  \dfrac{{\texttt{A}_{t}}\left[\log\left(\mu_{\tau,t}\right)-\log\left(y_\text{t}\right)\right]}{y_\text{t}^{\lambda}\left[-\log\left(\mu_{\tau,t}\right)\right]^{\gamma}}- \dfrac{\log\left(y_\text{t}\right)}{{\gamma}-{\lambda}\log\left(y_\text{t}\right)}-\log\left(y_\text{t}\right) + \log\left(\mu_{\tau,t}\right) := \texttt{z}_t, \\
&\frac{\partial \ell_{2,t}(\bm{\theta}_2,{{y_t}})}{\partial \mu_{\tau,t}} =  \dfrac{\left[y_\text{t}^{\lambda}\left[-\log\left(\mu_{\tau,t}\right)\right]^{\gamma}+{\texttt{A}_{t}}\right]\left[{\lambda}\log\left(\mu_{\tau,t}\right)-{\gamma}\right]}{\mu_{\tau,t}\log\left(\mu_{\tau,t}\right)y_\text{t}^{\lambda}\left[-\log\left(\mu_{\tau,t}\right)\right]^{\gamma}} := \texttt{w}_t,
\end{align*}
with 
${\texttt{A}_{t}} = \mu_{\tau,t}^{\lambda}\log\left({\tau}\right)\left[-\log\left(y_\text{t}\right)\right]^{\gamma}$ 
and 
${\texttt{B}_{t}} = \log\left(-\log\left(y_\text{t}\right)\right)-\log\left(-\log\left(\mu_{\tau,t}\right)\right)$.
Note that \( \frac{\partial \zeta_t}{\partial \beta_j} = x_{tj} \), \( \frac{d \mu_{\tau,t}}{d {\zeta_t}} = \frac{1}{g^{\prime}(\mu_{\tau,t})} \), where \( g^{\prime}(\cdot) \) denotes the first derivative of the function \( g(\cdot) \). The elements corresponding to the coordinates of the score vector relative to \( \beta_j \) can be rewritten as:
\begin{equation*}
\mbox{V}_{\beta_j}(\bm{\theta}_2)= 
\sum_{t=1}^{n} \texttt{w}_t  \frac{1}{g^{\prime}(\mu_{\tau,t})} x_{tj}.
\end{equation*}

In matrix form, the score vector of the RQ-UMW model can be written as
\[\bm{\mbox{V}}(\bm{\theta}_2) = \Big[ \mbox{V}_\gamma(\bm{\theta}_2),  \mbox{V}_\lambda(\bm{\theta_2}), \bm{\mbox{V}}^{\top}_{\bm{\beta}}(\bm{\theta}_2)\Big]^{\top},\]
\noindent where
$$\mbox{V}_{\gamma}(\bm{\theta_2})  = \bm{\texttt{v}}\bm{1}_n^{\top},\quad
\mbox{V}_{\lambda}(\bm{\theta_2})  = \bm{\texttt{z}}\bm{1}_n^{\top},\quad
\bm{\mbox{V}}_{{\bm\beta}}(\bm{\theta_2})  = \boldsymbol{\mbox{X}}^{\top} \bm{\mbox{T}} \bm{\texttt{w}^\top},$$
with \(\bm{\texttt{v}} = (\texttt{v}_1, \ldots, \texttt{v}_n)\), \(\bm{\texttt{z}} = (\texttt{z}_1, \ldots, \texttt{z}_n)\), \(\bm{\mbox{X}}\) is an \(n \times k\) matrix whose \(t\)-th row is given by \(\bm{x}_t^{\top}\), \(\bm{\mbox{T}} = \mbox{diag}\{g'(\mu_{\tau,1})^{-1}, \ldots, g'(\mu_{\tau,n})^{-1}\}\), \(\bm{\texttt{w}} = (\texttt{w}_1, \ldots, \texttt{w}_n)\), and \(\bm{1}_n^{\top}\) is a column vector of ones of dimension \(n\).

The MLE can be obtained by solving the system of non-linear equations given by:
\[
\bm{\mbox{U}}(\bm\theta_1) \Big|_{\bm\theta_1 = \widehat{\bm\theta}_1} = \bm{0} \quad \text{and} \quad \bm{\mbox{V}}(\bm\theta_2) \Big|_{\bm\theta_2 = \widehat{\bm\theta}_2} = \bm{0},
\]
where \(\bm{0}\) are zero vectors with the appropriate dimensions. Since these systems of equations do not have an analytical solution, the use of non-linear optimization algorithms is necessary. In this case, we use the limited-memory 
Broyden-Fletcher-Goldfarb-Shanno with box constraints (L-BFGS-B) method \citep{byrd1995}, via the \textit{optim} package in the \texttt{R} environment \citep{R}. In the case of the UMW distribution, we can compute a semi-closed MLE for \(\alpha\).
The MLE of \(\alpha\) is obtained from \(\bm{\mbox{U}}_\alpha(\bm\theta_1) = 0\), and is given by
$$\hat{\alpha}(\hat{\gamma},\hat{\lambda}) = \dfrac{n}{\sum\limits_{t=1}^{n}\left[-\log\left(y_\text{t}\right)\right]^{\hat{\gamma}}{y_\text{t}^{-\hat{\lambda}}}}\cdot$$

\subsection*{Large Sample Inference}

Under some mild regularity conditions, according to \cite{pawitan2001}, the asymptotic properties of estimators can be derived using likelihood-based arguments. A more precise and classical description of these regularity conditions is given by \cite{lehmann1983}, who lists the following assumptions under which asymptotic results in point estimation can be established: the parameter space is an open set; the family of distributions share a common support independent of the parameter; the probability density function is twice continuously differentiable with respect to the parameter; differentiation under the integral sign is allowed; the Fisher information is positive and finite; there exists an integrable function that uniformly bounds the second derivative of the log-likelihood in a neighborhood of the true parameter; the expected value of the score function is zero; and the Fisher information equals both the variance of the score function and the negative expected value of the second derivative of the log-likelihood.

Under these suppositions, the MLE of \(\bm\theta_m\), denoted by \(\hat{\bm\theta}_m\), is consistent and approximately follows a multivariate normal distribution of dimensions \(q_1 = 3\) and \(q_2 = 2 + k\), respectively, given by
\begin{equation}\label{eq:normalidade}
\widehat{\bm{\theta}}_1 \stackrel{a}{\sim} N_{q_1}\left(\bm{\theta}_1,\bm{\mbox{J}}^{-1}({\bm{\theta}}_1)\right) \quad \text{{and}} \quad \widehat{\bm{\theta}}_2 \stackrel{a}{\sim} N_{q_2}\left(\bm{\theta}_2,\bm{\mbox{L}}^{-1}({\bm{\theta}}_2)\right),
\end{equation} 
where $\bm{\mbox{J}}^{-1}({\bm{\theta}}_1)$ {and} $\bm{\mbox{L}}^{-1}({\bm{\theta}}_2)$ are the inverses of the observed information matrix of the UMW distribution and the RQ-UMW model, respectively. Appendix \ref{ap:ApenOIM} provides the derivations and calculations of these matrices. Furthermore, \(\stackrel{a}{\sim}\) denotes approximately distributed as.

We can construct confidence intervals based on the approximate distribution of the MLE, with approximate confidence level \((1-\nu)\times100\%\), for the parameters of UMW distribution and RQ-UMW model, given by
\[\left[\widehat{{\theta}}_{m,i}-z_{\frac{\nu}{2}}\widehat{\text{se}}(\widehat{{\theta}}_{m,i});\widehat{{\theta}}_{m,i}+z_{\frac{\nu}{2}}\widehat{\text{se}}(\widehat{{\theta}}_{m,i})\right],\mbox{ {for} }i=1, \ldots, {q_m},\]
where \(\widehat{\text{se}}(\widehat{{\theta}}_{m,i}) = \sqrt{d_{ii}}\) with \(d_{ii}\) being the \(i\)-th element of the diagonal of \(\bm{\mbox{J}}^{-1}(\widehat{\bm{\theta}}_1)\) or \(\bm{\mbox{L}}^{-1}(\widehat{\bm{\theta}}_2)\), \(z_{\frac{\nu}{2}}\) is the standard normal quantile, such that \(\mathbb{P}(Z > z_{\frac{\nu}{2}}) = \frac{\nu}{2}\); \(\widehat{{\theta}}_{m,i}\) is the \(i\)-th coordinate of the estimated parameter vector \(\widehat{\bm{\theta}}_m\) and \(Z \sim {N}(0, 1)\).

Regarding hypothesis tests, consider the hypotheses \(\mbox{H}_{0}: {\theta}_{m,i} = {\theta}^{0}_{m,i}\) or \(\mbox{H}_{1}: {\theta}_{m,i} \neq {\theta}^{0}_{i}\), where \({\theta}_{m,i}^0\) is the specific value of an unknown parameter \({\theta}_{m,i}\). 
To test these hypotheses, we use the Wald test \citep{wald1943}, with the test statistic given by
\begin{eqnarray}\label{eq:wald}
W=\frac{\widehat{{\theta}}_{m,i}-{\theta}^{0}_{m,i}}{\widehat{\text{se}}(\widehat{{\theta}}_{m,i})} \overset{d}{\rightarrow} N(0,1),
\end{eqnarray}
where \(\overset{d}{\rightarrow}\) denotes the convergence in distribution, under the null hypothesis \(H_0\). 
For example, one can test the hypothesis \(\lambda = 0\), which results in the unit-Weibull distribution \citep{weibullunit} or the unit-Weibull regression model \citep{regweibullunit}.

\section*{Diagnostic Measures} \label{diagnostico} 

For the regression model, after estimating the parameters, it is essential to conduct a diagnostic analysis to identify observations that may disproportionately influence the parameter estimates and affect the accuracy of the fitted model. 
This section presents the diagnostic measures used to assess the goodness-of-fit of the RQ-UMW model.

The quantile residuals \citep{dunnsmyth1996} will be considered, given by
\begin{eqnarray*}
r_{t} = \Phi^{-1} \left( F_Y(y_t; \widehat{\bm{\theta}}_2,\tau) \right),\quad t=1,\ldots,n,
\end{eqnarray*}
where $\Phi^{-1}$ is the quantile function of the standard normal distribution, $F(y_t; \widehat{\bm{\theta}}_2,\tau)$ is the FDA given by \eqref{eq2rep}. 
If the model is correctly specified, the residuals should be uncorrelated and approximately normally distributed, with zero mean and unit variance.

In order to assess the quality of the fitted regression, we suggest using the generalized coefficient of determination $(R^{2}_{G})$ \citep{nagelkerke1991}, given by
\begin{eqnarray*}
R^{2}_{G} = 
1- \exp \left(-\frac{2}{n}\left[ \ell_2(\widehat{\bm{\theta}}_2) - \ell_1(\widehat{\bm{\theta}}_{1})\right]\right),
\end{eqnarray*}
where $\ell_1(\widehat{\bm{\theta}}_1)$ is the log-likelihood function evaluated at the maximum likelihood estimates of the model parameters without the regression structure (null model), and $\ell_2(\widehat{\bm{\theta}}_2)$ is the log-likelihood function evaluated at the maximum likelihood estimates of the parameters of the fitted regression model. 
The generalized coefficient of determination shows the proportion of the variability of $Y$ that can be explained by the fitted model. 
In this sense, note that ${0 \leqslant R^{2}_{G} \leqslant 1}$, that is, the closer $R^{2}_{G}$ is to one, the better the explanatory power of the model in relation to the variable of interest.

In practice, it is common to fit multiple candidate models to the available data and then use a selection criterion to choose the best model. For this purpose, we suggest considering the following traditional information criteria:
the Akaike Information Criterion (AIC) \citep{akaike1974}, 
the Bayesian Information Criterion (BIC) \citep{akaike1978bayesian,schwarz1978}, 
and the corrected AIC ($\mbox{AIC}_c$) \citep{hurvich1989}, 
where $\mbox{AIC} = 2{q_m} - 2 \ell_m (\widehat{\bm{\theta}}_m)$, $\mbox{BIC} = {q_m}\log(n) - 2 \ell_m (\widehat{\bm{\theta}}_m)$, and $\mbox{AICc} = \mbox{AIC} + \frac{2{q^2_m} + 2{q_m}}{n - {q_m} - 1}$. Among a set of adjusted candidate models, the best model will be the one that minimizes the chosen selection criterion.

\section*{Monte Carlo Simulations} \label{sec:SML}

In this section, the results of the Monte Carlo simulations for the UMW distribution and the RQ-UMW model are presented and discussed. 
To maximize the log-likelihood functions of the introduced models, the L-BFGS-B method is used, through the \textit{optim} package in the \texttt{R} environment \citep{R}. 
Tables with the results of the following metrics to evaluate the likelihood inference are presented: bias, mean squared error (MSE), and 95\% coverage rate (CR\%) of the parameter estimates. 
In addition, boxplot graphs are displayed for a more comprehensive analysis of the results. To evaluate the flexibility of the proposed models, the simulation study employed parameter values chosen arbitrarily across diverse scenarios, assessing their performance over a broad range of shapes and dispersion levels.

Estimation using the L-BFGS-B algorithm proved to be computationally efficient, exhibiting fast convergence even for moderate to large sample sizes. 
The procedure showed numerical stability, with convergence failures occurring only rarely; such cases were excluded from the analysis when present. 
The numerical procedures were robust to the chosen initial values, with distribution parameters initialized at 1 and regression coefficients initially estimated using ordinary least squares on the transformed response variable $g(y)$.
The implementation details are available in the source code at  \url{https://github.com/JoaoInacioS/UMW.git}.

\subsection*{Results For UMW Distribution}

The results of the Monte Carlo simulations for the UMW distribution are presented in Table \ref{tab:simUMW}, in which $R = 10,000$ Monte Carlo replicates were performed, varying in sample sizes $n \in \{40, 80, 120, 160, 200\}$. Four scenarios were analyzed: Scenario 1, with parameters $\alpha = 0.7$, $\gamma = 1.3$, and $\lambda = 0.5$; Scenario 2, with parameters $\alpha = 0.3$, $\gamma = 0.8$, and $\lambda = 1.2$; Scenario 3, with $\alpha = 1.3$, $\gamma = 1.1$, and $\lambda = 0.6$; and Scenario 4, with parameters $\alpha = 0.5$, $\gamma = 0.9$, and $\lambda = 0.8$.

\begin{figure}
\centering
{\includegraphics[width=0.315\textwidth]{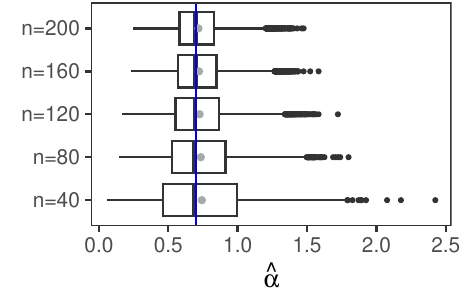}}
{\includegraphics[width=0.315\textwidth]{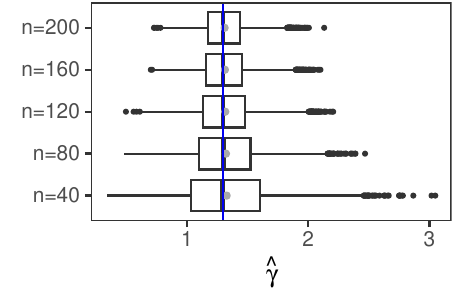}}
{\includegraphics[width=0.315\textwidth]{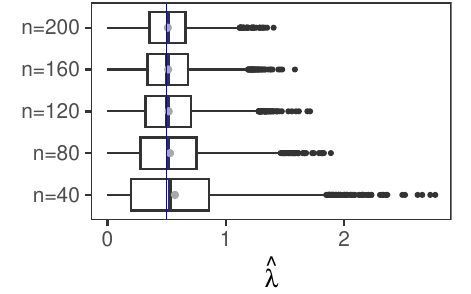}}\\[-0.05cm]
\footnotesize{(a) Scenario 1 ($\alpha = 0.7$, $\gamma = 1.3$ and $\lambda = 0.5$)}\\[0.4cm]
{\includegraphics[width=0.315\textwidth]{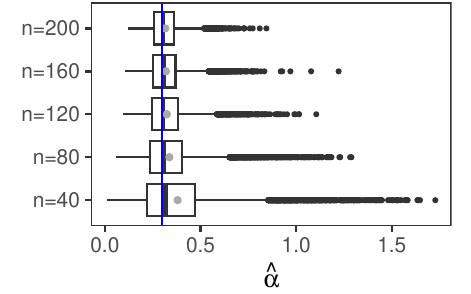}}
{\includegraphics[width=0.315\textwidth]{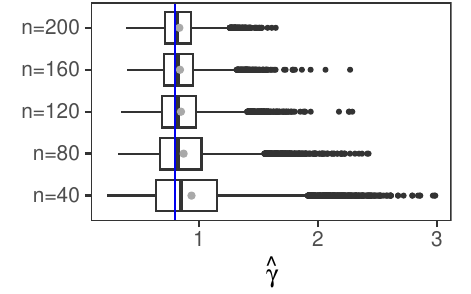}}
{\includegraphics[width=0.315\textwidth]{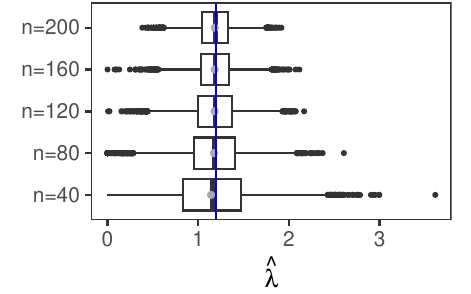}}\\[-0.05cm]
\footnotesize{(b) Scenario 2 ($\alpha = 0.3$, $\gamma = 0.8$ and $\lambda = 1.2$)}\\[0.4cm]
{\includegraphics[width=0.315\textwidth]{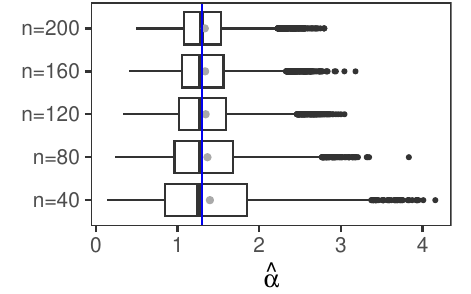}}
{\includegraphics[width=0.315\textwidth]{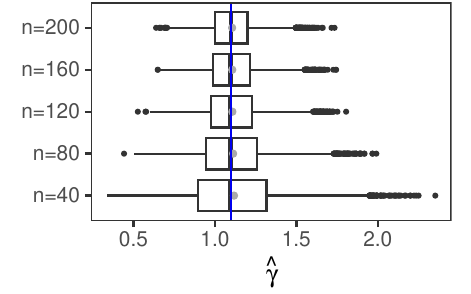}}
{\includegraphics[width=0.315\textwidth]{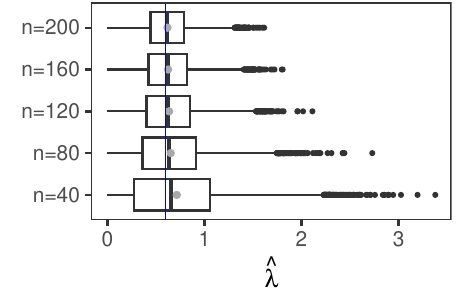}}\\[-0.05cm]
\footnotesize{(c) Scenario 3 ($\alpha = 1.3$, $\gamma = 1.1$ and $\lambda = 0.6$)}\\[0.4cm]
{\includegraphics[width=0.315\textwidth]{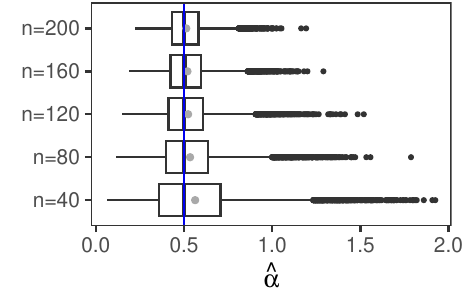}}
{\includegraphics[width=0.315\textwidth]{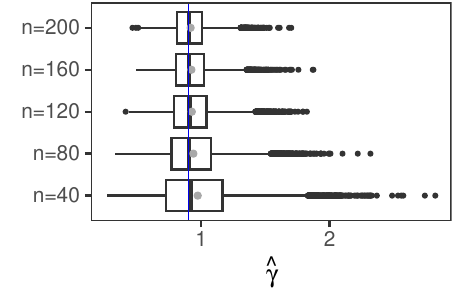}}
{\includegraphics[width=0.315\textwidth]{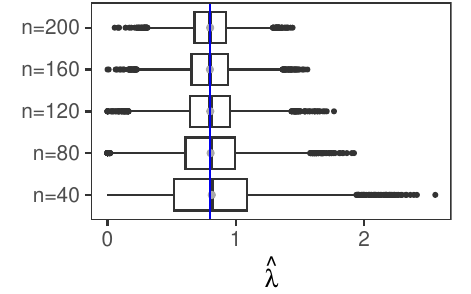}}\\[-0.05cm]
\footnotesize{(d) Scenario 4 ($\alpha = 0.5$, $\gamma = 0.9$ and $\lambda = 0.8$)}
\caption{Boxplots of the MLE from Monte Carlo simulations of the {UMW} distribution, with $R=10,000$ and $n \in \left\{40, 80, 120, 160, 200\right\}$.}\label{fig:CenUMW}
\scriptsize{Source: Authors.}%
\end{figure}

When analyzing the results in Table \ref{tab:simUMW} for all scenarios, we observe that as the sample size increases, both bias and MSE decrease, while CR\% approaches the nominal value of 95\%. 
These results evidence that MLE are asymptotically consistent, unbiased, and normally distributed. 
These characteristics are illustrated by the boxplots in Figure \ref{fig:CenUMW}, where the blue line represents the fixed value of the parameter, and the gray dot indicates the mean of the estimate. 
For all parameters, as the sample size increases, there is a decrease in the interquartile range, with the mean converging to the true value of the parameter. 
This highlights the consistency and unbiasedness of the MLE, with the mean and median approaching each other, indicating the symmetry of the estimator distributions.

The proposed estimators exhibit good performance, even in finite samples, with results aligning well with their expected asymptotic properties. 
These findings validate both the theoretical formulation and the developed implementation.

\begin{table}
\centering
\caption{\label{tab:simUMW}{Results of Monte Carlo simulations of the {UMW} distribution, with ${R=10,000}$ and ${n \in \left\{40, 80, 120, 160, 200\right\}}$.}}
\begin{tabular}{@{\hspace{0.1cm}}l@{\hspace{0.1cm}}c@{\hspace{0.17cm}}c@{\hspace{0.17cm}}c@{\hspace{0.17cm}}l@{\hspace{0.2cm}}c@{\hspace{0.25cm}}l@{\hspace{0.0cm}}r@{\hspace{0.25cm}}r@{\hspace{0.25cm}}rl@{\hspace{0.1cm}}r@{\hspace{0.25cm}}r@{\hspace{0.25cm}}rl@{\hspace{0.1cm}}r@{\hspace{0.25cm}}r@{\hspace{0.25cm}}r@{\hspace{0.1cm}}}
  \hline
&\multicolumn{3}{c}{Scenario} & &&&  \multicolumn{3}{c}{Bias} & & \multicolumn{3}{c}{MSE} & & \multicolumn{3}{c}{CR\%}\\
\cline{2-4} \cline{8-10} \cline{12-14} \cline{16-18}
& $\alpha$ & $\gamma$ & $\lambda$ && $n$& & \multicolumn{1}{c}{$\hat\alpha$} & \multicolumn{1}{c}{$\hat\gamma$} & \multicolumn{1}{c}{$\hat\lambda$} & & \multicolumn{1}{c}{$\hat\alpha$} & \multicolumn{1}{c}{$\hat\gamma$} & \multicolumn{1}{c}{$\hat\lambda$} && \multicolumn{1}{c}{$\hat\alpha$} & \multicolumn{1}{c}{$\hat\gamma$} & \multicolumn{1}{c}{$\hat\lambda$} \\
\hline
\multirow[l]{5}{*}{(1)}& \multirow[c]{5}{*}{0.7}&\multirow[c]{5}{*}{1.3}&\multirow[c]{5}{*}{0.5}&&40 && 0.042 & 0.028 & 0.071 && 0.123 & 0.153 & 0.206 && 0.903 & 0.964 & 0.971 \\ 
&&&&&80 && 0.034 & 0.022 & 0.029 && 0.078 & 0.089 & 0.113 && 0.923 & 0.967 & 0.971 \\
&&&&&120 && 0.026 & 0.016 & 0.018 && 0.056 & 0.063 & 0.079 && 0.931 & 0.964 & 0.966 \\
&&&&&160 && 0.021 & 0.013 & 0.014 && 0.044 & 0.049 & 0.062 && 0.935 & 0.952 & 0.955 \\ 
&&&&&200 && 0.017 & 0.011 & 0.011 && 0.035 & 0.038 & 0.049 && 0.941 & 0.954 & 0.955 \\
\hline
\multirow[l]{5}{*}{(2)}&\multirow[c]{5}{*}{0.3}&\multirow[c]{5}{*}{0.8}&\multirow[c]{5}{*}{1.2}&&40 && 0.080 & 0.137 & $-$0.057 && 0.062 & 0.191 & 0.243 && 0.927 & 0.967 & 0.962 \\
&&&&&80 && 0.037 & 0.070 & $-$0.025 && 0.024 & 0.084 & 0.120 && 0.935 & 0.951 & 0.955 \\
&&&&&120 && 0.024 & 0.048 & $-$0.018 && 0.013 & 0.052 & 0.077 && 0.945 & 0.953 & 0.957 \\ 
&&&&&160 && 0.019 & 0.040 & $-$0.017 && 0.009 & 0.038 & 0.056 && 0.950 & 0.958 & 0.957 \\ 
&&&&&200 && 0.015 & 0.035 & $-$0.016 && 0.007 & 0.029 & 0.044 && 0.953 & 0.954 & 0.956 \\
\hline
\multirow[l]{5}{*}{(3)}& \multirow[c]{5}{*}{1.3}&\multirow[c]{5}{*}{1.1}&\multirow[c]{5}{*}{0.6}&&40 && 0.096 & 0.017 & 0.115 && 0.492 & 0.088 & 0.309 && 0.899 & 0.965 & 0.973 \\ 
& &&&&80 && 0.066 & 0.012 & 0.052 && 0.292 & 0.051 & 0.161 && 0.919 & 0.962 & 0.969 \\
& &&&&120 && 0.044 & 0.007 & 0.036 && 0.198 & 0.034 & 0.107 && 0.931 & 0.962 & 0.965 \\
&&&&&160 && 0.038 & 0.007 & 0.026 && 0.155 & 0.027 & 0.083 && 0.932 & 0.953 & 0.953 \\
&&&&&200 && 0.034 & 0.006 & 0.020 && 0.126 & 0.022 & 0.066 && 0.939 & 0.949 & 0.950 \\ 
\hline
\multirow[l]{5}{*}{(4)}&\multirow[c]{5}{*}{0.5}&\multirow[c]{5}{*}{0.9}&\multirow[c]{5}{*}{0.8}&&40 && 0.063 & 0.072 & 0.014 && 0.085 & 0.122 & 0.174 && 0.919 & 0.966 & 0.967 \\
&&&&&80 && 0.035 & 0.038 & 0.004 && 0.041 & 0.060 & 0.088 && 0.931 & 0.951 & 0.955 \\
&&&&&120 && 0.024 & 0.026 & 0.001 && 0.026 & 0.039 & 0.058 && 0.942 & 0.951 & 0.951 \\
&&&&&160 && 0.020 & 0.023 & $-$0.001 && 0.019 & 0.029 & 0.044 && 0.944 & 0.955 & 0.953 \\ 
&&&&&200 && 0.013 & 0.017 & 0.001 && 0.014 & 0.022 & 0.034 && 0.945 & 0.953 & 0.953 \\    
\hline
\end{tabular}
\\\scriptsize{Source: Authors.}%
\end{table}

\subsection*{Results For The RQ-UMW Model}

The results of the Monte Carlo simulations for the RQ-UMW model are presented in Table \ref{tab:simRUMW}, where $R=10,000$ Monte Carlo replicates were performed, varying the quantiles $\tau \in \{0.1, 0.5, 0.9\}$ and the sample sizes $n \in \{50, 150, 300, 500\}$. 
Two scenarios were analyzed: Scenario 1, with parameters $\gamma = 2.7$, $\lambda = 1.8$, $\beta_0 = 0.2$, $\beta_1 = -0.4$, and $\beta_2 = 0.5$; and Scenario 2, with parameters $\gamma = 1.5$, $\lambda = 2.3$, $\beta_0 = 0.5$, $\beta_1 = -0.6$, and $\beta_2 = 0.2$. 
The covariates were randomly generated from the uniform distribution $U(0,1)$ and considered fixed during all replications.

When analyzing the results of the Monte Carlo simulations, as presented in Table \ref{tab:simRUMW} for both scenarios and quantiles analyzed, we observe that both the bias and the MSE decrease as the sample size increases, while the CR\% approaches 95\%, as expected. 
These figures provide numerical evidence that the MLEs are asymptotically consistent, unbiased, and approximately normally distributed. These characteristics can be observed in Figure \ref{fig:CenRUMW}, where the blue line represents the fixed value of the parameter and the gray dot indicates the mean of the estimate. 
As expected, in both scenarios, as the sample size increases, the interquartile range decreases, and the mean approaches the true value of the parameter. 
We can also observe that the mean and median are converging, which is a characteristic of the normality of the estimators' distributions.

\begin{figure}
\centering
{\includegraphics[width=0.325\textwidth]{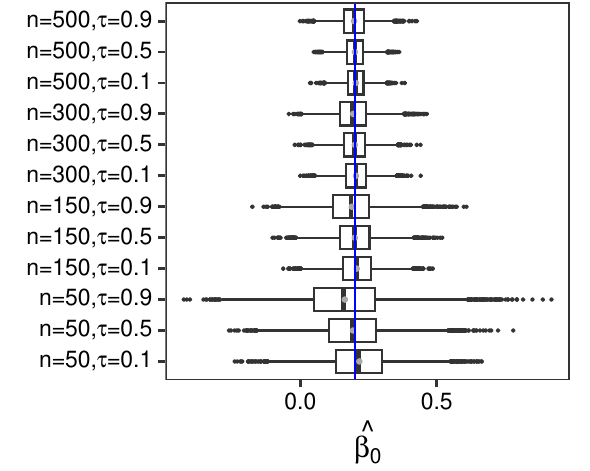}}
{\includegraphics[width=0.325\textwidth]{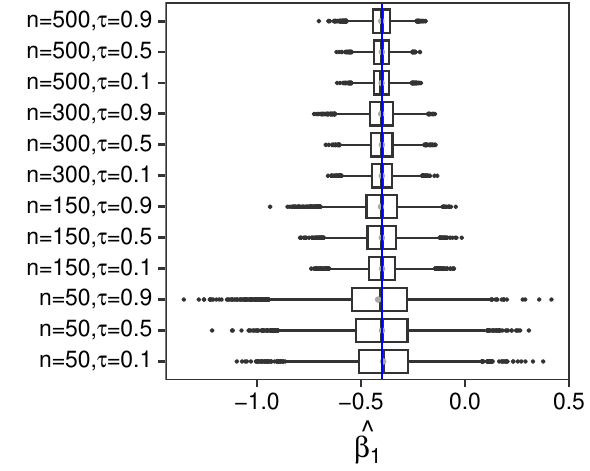}}
{\includegraphics[width=0.325\textwidth]{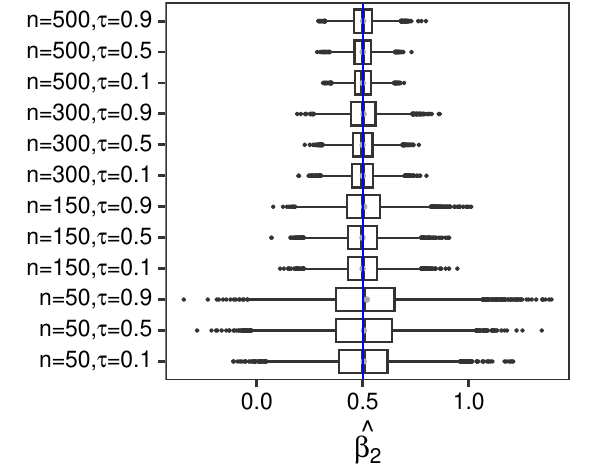}}\\
{\includegraphics[width=0.325\textwidth]{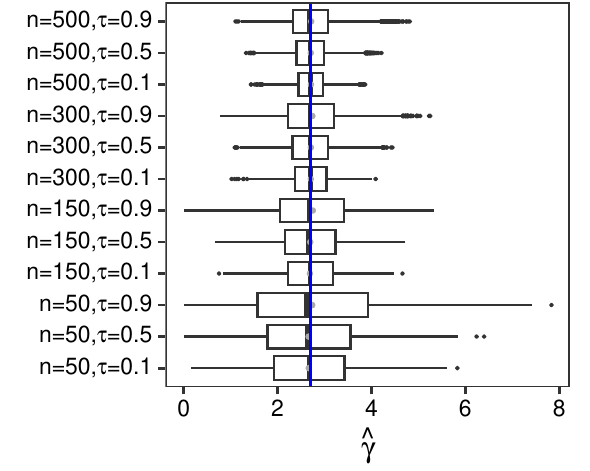}}
{\includegraphics[width=0.325\textwidth]{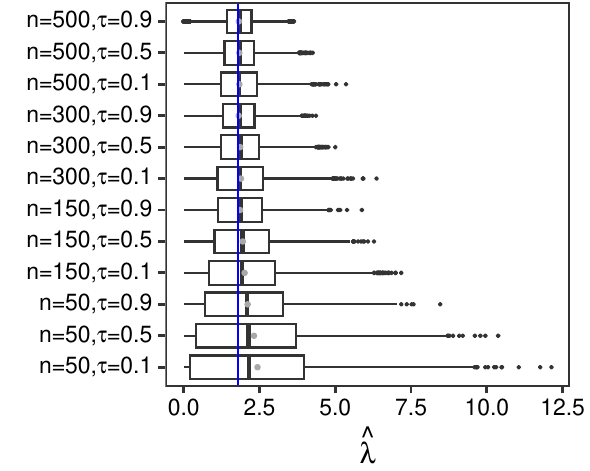}}\\[-0.05cm]
\footnotesize{(a) Scenario 1 ($\beta_0= 0.2, \beta_1 = -0.4, \beta_2 = 0.5, \gamma = 2.7$ and $\lambda = 1.8)$}\\[0.4cm]
{\includegraphics[width=0.325\textwidth]{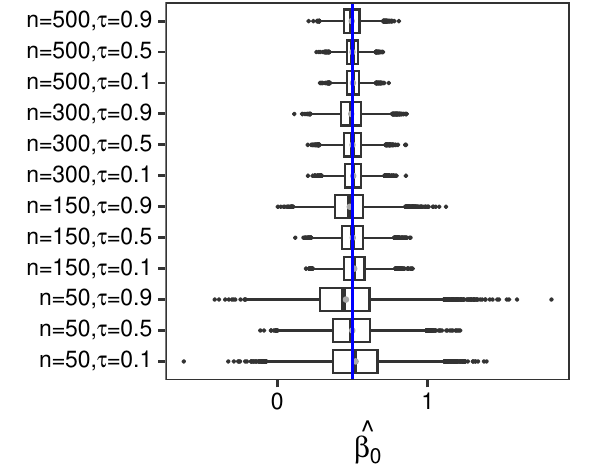}}
{\includegraphics[width=0.325\textwidth]{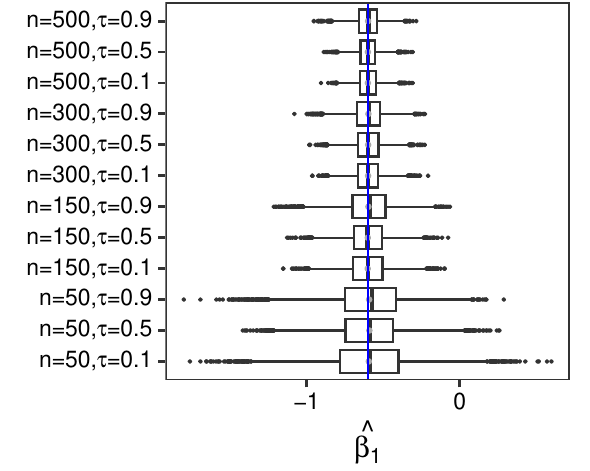}}
{\includegraphics[width=0.325\textwidth]{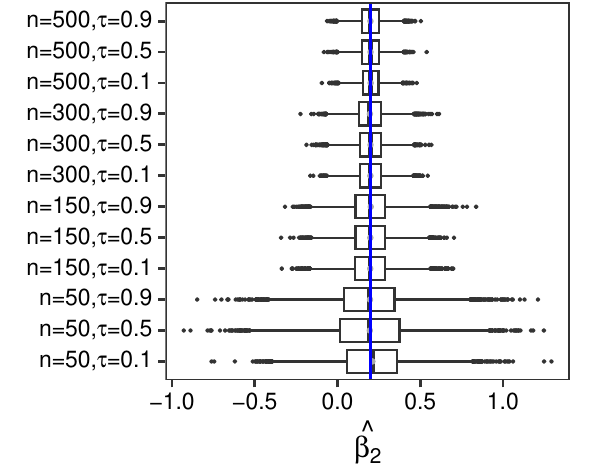}}\\
{\includegraphics[width=0.325\textwidth]{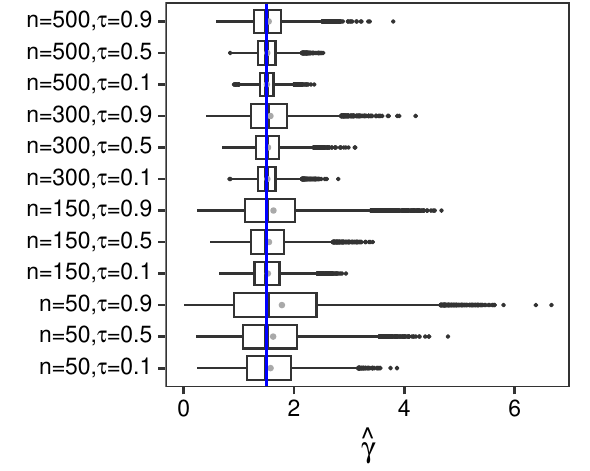}}
{\includegraphics[width=0.325\textwidth]{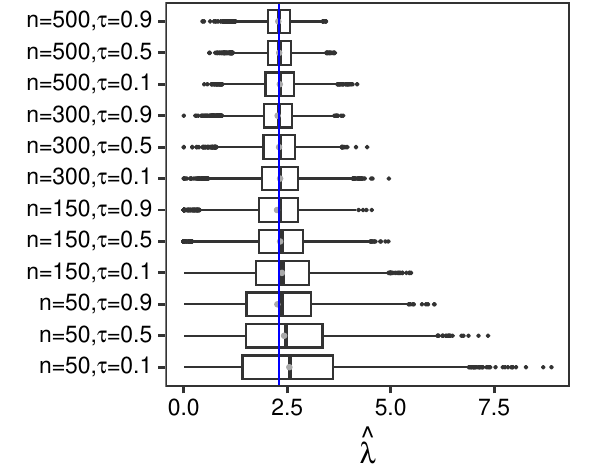}}\\[-0.05cm]
\footnotesize{(b) Scenario 2 ($\beta_0= 0.5, \beta_1 = -0.6$, $\beta_2 = 0.2, \gamma = 1.5$ and $\lambda = 2.3$)}
\caption{Boxplots of the MLE from Monte Carlo simulations of the RQ-UMW model with reparameterization by $\alpha$, with $R = 10,000$, $n \in \left\{50, 150, 300, 500\right\}$ and $\tau \in \left\{0.1, 0.5, 0.9\right\}$.
}\label{fig:CenRUMW}
\scriptsize{Source: Authors.}%
\end{figure}

\begin{table}
\centering
\caption{\label{tab:simRUMW}{Results of Monte Carlo simulations of the RQ-UMW model with reparameterization by ${\alpha}$, with ${R = 10{,}000}$, ${n \in \left\{50, 150, 300, 500\right\}}$ and ${\tau \in \left\{0.1, 0.5, 0.9\right\}}$.}}
\resizebox{0.88\textwidth}{!}{
\begin{tabular}{c@{\hspace{0.2cm}}l@{\hspace{0.1cm}}cl@{\hspace{0.0cm}}r@{\hspace{0.35cm}}r@{\hspace{0.35cm}}rl@{\hspace{0.1cm}}r@{\hspace{0.35cm}}r@{\hspace{0.35cm}}rl@{\hspace{0.1cm}}r@{\hspace{0.35cm}}r@{\hspace{0.35cm}}r}
\hline
 & &&&  \multicolumn{3}{c}{$\tau = 0.1$} & & \multicolumn{3}{c}{$\tau = 0.5$} & & \multicolumn{3}{c}{$\tau = 0.9$}\\
 \cline{5-7} \cline{9-11} \cline{13-15}
Par && $n$ && Bias & MSE & CR\% && Bias & MSE & CR\% && Bias & MSE & CR\% \\
\hline
\multicolumn{15}{c}{Scenario 1: $\gamma = 2.7, \lambda = 1.8, \beta_0= 0.2, \beta_1 = -0.4, \beta_2 = 0.5$} \\
\hline
\multirow[c]{4}{*}{$\hat\gamma$} && 50 && $-$0.035 & 0.927 & 0.964 && $-$0.039 & 1.220 & 0.959 && 0.039 & 2.061 & 0.963 \\
&& 150 && $-$0.011 & 0.393 & 0.967 && $-$0.004 & 0.539 & 0.965 && 0.049 & 0.893 & 0.965 \\
&& 300 && 0.008 & 0.225 & 0.969 && 0.009 & 0.301 & 0.960 && 0.040 & 0.504 & 0.953 \\
&& 500 && 0.010 & 0.141 & 0.956 && 0.006 & 0.185 & 0.952 && 0.015 & 0.308 & 0.949 \\
\hline
\multirow[c]{4}{*}{$\hat\lambda$} && 50 && 0.634 & 5.120 & 0.965 && 0.517 & 4.064 & 0.960 && 0.308 & 2.550 & 0.960 \\
&& 150 && 0.207 & 2.095 & 0.968 && 0.148 & 1.519 & 0.964 && 0.057 & 1.018 & 0.966 \\
&& 300 && 0.086 & 1.172 & 0.968 && 0.062 & 0.832 & 0.964 && 0.012 & 0.567 & 0.954 \\
&& 500 && 0.034 & 0.727 & 0.956 && 0.033 & 0.515 & 0.953 && 0.018 & 0.352 & 0.951 \\
\hline
\multirow[c]{4}{*}{$\hat\beta_0$} && 50 && 0.016 & 0.016 & 0.927 && $-$0.007 & 0.018 & 0.929 && $-$0.037 & 0.031 & 0.925 \\
&& 150 && 0.007 & 0.006 & 0.944 && $-$0.000 & 0.006 & 0.944 && $-$0.012 & 0.010 & 0.945 \\
&& 300 && 0.004 & 0.003 & 0.947 && $-$0.000 & 0.003 & 0.946 && $-$0.007 & 0.005 & 0.946 \\
&& 500 && 0.003 & 0.002 & 0.948 && $-$0.000 & 0.002 & 0.949 && $-$0.003 & 0.003 & 0.946 \\
\hline
\multirow[c]{4}{*}{$\hat\beta_1$} && 50 && 0.006 & 0.031 & 0.937 && 0.001 & 0.042 & 0.931 && $-$0.017 & 0.042 & 0.947 \\
&& 150 && 0.001 & 0.009 & 0.942 && 0.000 & 0.007 & 0.954 && $-$0.004 & 0.012 & 0.954 \\
&& 300 && 0.000 & 0.005 & 0.947 && 0.001 & 0.004 & 0.953 && $-$0.003 & 0.007 & 0.953 \\
&& 500 && 0.001 & 0.003 & 0.948 && 0.000 & 0.003 & 0.952 && $-$0.002 & 0.004 & 0.952 \\
\hline
\multirow[c]{4}{*}{$\hat\beta_2$} && 50 && 0.004 & 0.029 & 0.930 && 0.005 & 0.039 & 0.932 && 0.019 & 0.044 & 0.947 \\
&& 150 && $-$0.001 & 0.010 & 0.951 && $-$0.002 & 0.011 & 0.943 && 0.006 & 0.014 & 0.955 \\
&& 300 && $-$0.000 & 0.005 & 0.949 && $-$0.001 & 0.005 & 0.945 && 0.003 & 0.007 & 0.953 \\
&& 500 && $-$0.000 & 0.003 & 0.948 && $-$0.001 & 0.003 & 0.948 && 0.002 & 0.004 & 0.950 \\
\hline
\multicolumn{15}{c}{Scenario 2: $\gamma = 1.5, \lambda = 2.3, \beta_0= 0.5, \beta_1 = -0.6, \beta_2 = 0.2$} \\
\hline
\multirow[c]{4}{*}{$\hat\gamma$} && 50 && 0.074 & 0.326 & 0.971 && 0.121 & 0.538 & 0.960 && 0.280 & 1.301 & 0.967 \\
&& 150 && 0.025 & 0.116 & 0.952 && 0.046 & 0.194 & 0.948 && 0.127 & 0.511 & 0.942 \\
&& 300 && 0.017 & 0.058 & 0.954 && 0.028 & 0.098 & 0.952 && 0.072 & 0.240 & 0.946 \\
&& 500 && 0.007 & 0.035 & 0.951 && 0.015 & 0.056 & 0.952 && 0.040 & 0.144 & 0.945 \\
\hline
\multirow[c]{4}{*}{$\hat\lambda$} && 50 && 0.251 & 2.444 & 0.965 && 0.126 & 1.745 & 0.960 && $-$0.038 & 1.276 & 0.972 \\
&& 150 && 0.072 & 0.884 & 0.950 && 0.030 & 0.625 & 0.949 && $-$0.046 & 0.528 & 0.946 \\
&& 300 && 0.028 & 0.425 & 0.952 && 0.003 & 0.304 & 0.952 && $-$0.031 & 0.255 & 0.946 \\
&& 500 && 0.021 & 0.256 & 0.952 && 0.006 & 0.179 & 0.951 && $-$0.015 & 0.154 & 0.948 \\
\hline
\multirow[c]{4}{*}{$\hat\beta_0$} && 50 && 0.020 & 0.050 & 0.929 && $-$0.002 & 0.034 & 0.933 && $-$0.044 & 0.064 & 0.920 \\
&& 150 && 0.011 & 0.010 & 0.941 && 0.002 & 0.011 & 0.949 && $-$0.018 & 0.020 & 0.931 \\
&& 300 && 0.005 & 0.006 & 0.943 && 0.000 & 0.007 & 0.945 && $-$0.008 & 0.010 & 0.942 \\
&& 500 && 0.003 & 0.003 & 0.950 && 0.000 & 0.003 & 0.946 && $-$0.006 & 0.006 & 0.943 \\
\hline
\multirow[c]{4}{*}{$\hat\beta_1$} && 50 && 0.006 & 0.085 & 0.934 && 0.010 & 0.054 & 0.932 && 0.008 & 0.061 & 0.941 \\
&& 150 && $-$0.001 & 0.020 & 0.940 && 0.000 & 0.018 & 0.946 && 0.004 & 0.026 & 0.937 \\
&& 300 && $-$0.002 & 0.009 & 0.945 && 0.002 & 0.010 & 0.945 && 0.004 & 0.012 & 0.946 \\
&& 500 && $-$0.001 & 0.006 & 0.948 && 0.001 & 0.007 & 0.949 && 0.002 & 0.007 & 0.949 \\
\hline
\multirow[c]{4}{*}{$\hat\beta_2$} && 50 && 0.011 & 0.053 & 0.936 && $-$0.005 & 0.073 & 0.935 && $-$0.003 & 0.055 & 0.950 \\
&& 150 && $-$0.002 & 0.018 & 0.941 && $-$0.002 & 0.018 & 0.947 && 0.001 & 0.019 & 0.952 \\
&& 300 && $-$0.001 & 0.009 & 0.946 && 0.000 & 0.010 & 0.948 && $-$0.004 & 0.010 & 0.948 \\
&& 500 && 0.000 & 0.005 & 0.947 && 0.000 & 0.006 & 0.949 && $-$0.001 & 0.006 & 0.949 \\
\hline
\end{tabular}
 }
\\\scriptsize{Source: Authors.}
\end{table}

Regarding the different quantiles, we observed that in both scenarios, the estimates for the $\gamma$ parameter exhibited a smaller interquartile range at the $\tau = 0.1$ compared to the other quantiles. 
For the $\lambda$ parameter, the interquartile range was smaller at the $\tau = 0.9$ compared to the other quantiles. 
As for the $\beta$ coefficients, their estimates were similar across the three quantile values. These results indicate that the variability of the estimates for the $\gamma$ and $\lambda$ parameters differs according to the quantile, while the estimates of the $\beta$ coefficients show constant variability across the different quantiles.

\section*{Empirical Applications} \label{sec:apl}

In this section, we use the proposed methods for modeling different datasets. 
We started the empirical analyses with some descriptive measures of each variable. 
In this case, we consider the following sample measures: Minimum (Min.), Median (Median), Mean (Mean), Maximum (Max.), Standard Deviation (SD), Skewness (AC), and Kurtosis (K). 
To test the normality of the data, the Anderson-Darling normality test (AD($p$)) \citep{anderson1952asymptotic} is performed, where the p-value is used to assess the evidence against the null hypothesis.

Two applications were carried out considering the UMW distribution: the first with an indicator of Brazil's 17th SDG and the second with data on the useful volume of several reservoirs in Brazil, which is indirectly related to SDGs 6, 12, and 13.
To assess the performance of the UMW distribution, we compare it with the beta, Kumaraswamy (KW), Modified Kumaraswamy (MK) \citep{sagrillo2021modified}, and unit-Weibull (UW) distributions.
To determine which distribution best fits the data, we computed the maximized log-likelihood (Loglik), AIC, BIC, AIC$_c$, Kolmogorov-Smirnov (KS) statistic \citep{Kolmogorov}, Anderson-Darling (AD) statistic \citep{Stephens1974}, 
and Cramér-von Mises (CvM) criterion \citep{cramer1928}.

An analysis of reading skills in dyslexic children was conducted using the RQ-UMW model, which is indirectly associated with SDG 4 (Quality Education) and SDG 3 (Good Health and Well-Being).
SDG 4 aims to ensure inclusive and equitable education by providing support to students with learning disabilities, such as dyslexia.
SDG 3 focuses on improving well-being for all ages, including supporting children with dyslexia through mental health strategies, as learning disabilities can impact psychological well-being.
In addition to the model selection criteria, the Mean Squared Error (MSE), Root Mean Squared Error (RMSE), Mean Absolute Error (MAE), and Mean Absolute Percentage Error (MAPE) between observed and predicted values were calculated.
The RQ-UMW model was compared with the generalized beta quantile regression model (RQ-beta) \citep{bourguignon2024parametric}, the Kumaraswamy quantile regression model (RQ-KW), and the unit Weibull quantile regression model (RQ-UW) \citep{regweibullunit}. 
The codes used in applications of the UMW distribution and the RQ-UMW regression model are publicly available at \url{https://github.com/JoaoInacioS/UMW.git}.

\subsection*{SDG 17.3}

For the initial application of the UMW distribution, the indicator ``total municipal revenues collected" ($y$) of the municipalities of Rio Grande do Sul (RS) state in the year 2021 was considered. 
This indicator represents the proportion of revenues that a municipality actually managed to collect compared to the amount predicted or expected for the year 2021. 
In other words, $y$ is a measure of efficiency in collecting planned revenues. 
This is indicator 17.3 of SDG 17, which focuses on strengthening the mobilization of domestic resources, including through international support to developing countries, to improve national capacity to collect taxes and other revenues. 
This indicator is calculated by dividing the amount of municipal revenues collected (taxes, fees, and contributions) by the total amount of revenues of the municipality. 
The database is available at \href{https://www.cidadessustentaveis.org.br/paginas/idsc-br}{https://www.cidadessustentaveis.org.br/paginas/idsc-br}.

Table \ref{tab:descr_y} presents the descriptive statistics of total municipal revenues collected from municipalities in RS in 2021. 
The results indicate that the data exhibit right-skewness and heavy tails, suggesting that most municipalities have low revenue collection efficiency. 
This is characterized by an asymmetry coefficient greater than zero and a kurtosis greater than three, respectively.
Additionally, the AD($p$) test confirms that the data do not follow a normal distribution, as the $p$-value is below the 5\% significance level. 
This further suggests the need to consider more flexible distributions.

\begin{table}
\caption{{Descriptive measures of the total municipal revenue collected from municipalities in RS in 2021.}\label{tab:descr_y}}
\centering
\begin{tabular}{lrrrrrrrrr}
\hline
 & $n$ & Min. & Median & Mean & Max. & SD & AC & K & AD($p$) \\
\hline
$y$ & 497 & 0.018 & 0.068 & 0.090 & 0.563 & 0.068 &  2.499 & 12.558 & $<$0.001\\
\hline
\end{tabular}
\\\scriptsize{Source: Authors.}%
\end{table}

To compare the fit of the UMW distribution with its competitors, Table \ref{tab:apliq_umw_coef} presents the parameter estimates for the UMW, beta, KW, MK, and UW distributions, along with the corresponding goodness-of-fit measures.  
All parameters were statistically significant at least at the 5\% significance level. 
Among the seven goodness-of-fit measures considered, the UMW distribution achieved the best performance in six, outperforming the other models, except for the BIC, where the UW distribution performed better.  
Figure \ref{fig:umw_models} displays the estimated density functions for each distribution fitted to the data, as well as the QQ plots. 
It is evident that the UMW, MK, and UW distributions exhibit visually similar fits, which is consistent with the relatively close goodness-of-fit measures reported in Table \ref{tab:apliq_umw_coef}. 
On the other hand, the beta and KW distributions failed to accurately capture the data, particularly struggling to represent the observed peak.  
Although the KW and UW distributions showed similar fits, their overall performance was inferior to that of the UMW distribution, except in terms of the BIC. However, since the UW distribution is a particular case of the UMW distribution, we use the Wald statistic proposed in Equation \eqref{eq:wald} to test the hypothesis $\lambda=0$. With a $p$-value of 0.032, we conclude that the null hypothesis is rejected at the 5\% significance level, indicating that the UMW distribution is more suitable for describing the behavior of these data. 

\begin{table}
    \centering
    \caption{\label{tab:apliq_umw_coef}{Coefficients and goodness-of-fit measures of the fitted models for the total municipal revenues collected by municipalities in RS in 2021.}}
    \resizebox{1\textwidth}{!}{
    \begin{tabular}{l@{\hspace{0.2cm}}r@{\hspace{0.2cm}}r@{\hspace{0.3cm}}r@{\hspace{0.05cm}}r@{\hspace{0.2cm}}r@{\hspace{0.2cm}}r@{\hspace{0.05cm}}r@{\hspace{0.2cm}}r@{\hspace{0.2cm}}r@{\hspace{0.05cm}}r@{\hspace{0.2cm}}r@{\hspace{0.2cm}}r@{\hspace{0.05cm}}r@{\hspace{0.2cm}}r@{\hspace{0.2cm}}r}
    \hline
    &\multicolumn{3}{c}{UMW}&&\multicolumn{2}{c}{Beta}&&\multicolumn{2}{c}{KW}&&\multicolumn{2}{c}{MK}&&\multicolumn{2}{c}{UW}\\
    \cline{2-4}\cline{6-7}\cline{9-10}\cline{12-13}\cline{15-16}
    & \multicolumn{1}{c}{$\alpha$} & \multicolumn{1}{c}{$\gamma$} & \multicolumn{1}{c}{$\lambda$} && \multicolumn{1}{c}{$\alpha$} & \multicolumn{1}{c}{$\beta$} && \multicolumn{1}{c}{$\alpha$} & \multicolumn{1}{c}{$\beta$} && \multicolumn{1}{c}{$\alpha$} & \multicolumn{1}{c}{$\beta$}&& \multicolumn{1}{c}{$\alpha$} & \multicolumn{1}{c}{$\beta$}\\
    \hline
    Estimate & 0.006 & 3.213 & 0.622 &  & 24.317 & 0.091 &  & 1.473 & 0.080 &  & 0.113 & 2.722 &  & 0.006 & 4.878\\
    SE & 0.001 & 0.787 & 0.290 &  & 1.570 & 0.003 &  & 0.049 & 0.003 &  & 0.005 & 0.194 &  & 0.001 & 0.175\\
    $p$-value & $<$0.001 & $<$0.001 & 0.032 &  & $<$0.001 & $<$0.001 &  & $<$0.001 & $<$0.001 &  & $<$0.001 & $<$0.001 &  & $<$0.001 & $<$0.001\\
\hline
Loglik & \multicolumn{3}{c}{\textbf{837.317}}&& \multicolumn{2}{c}{781.248}&& \multicolumn{2}{c}{759.022}&& \multicolumn{2}{c}{835.131}&& \multicolumn{2}{c}{835.138}\\
AIC & \multicolumn{3}{c}{\textbf{$-$1668.634}} && \multicolumn{2}{c}{$-$1558.496} && \multicolumn{2}{c}{$-$1514.044}&& \multicolumn{2}{c}{$-$1666.263}&& \multicolumn{2}{c}{$-$1666.277}\\
BIC & \multicolumn{3}{c}{$-$1656.008} && \multicolumn{2}{c}{$-$1550.079 } && \multicolumn{2}{c}{$-$1505.627}&& \multicolumn{2}{c}{{$-$1657.846}}&& \multicolumn{2}{c}{\textbf{$-$1657.859}}\\
AIC$_c$ & \multicolumn{3}{c}{\textbf{$-$1668.585}} && \multicolumn{2}{c}{$-$1558.472} && \multicolumn{2}{c}{$-$1514.020}&& \multicolumn{2}{c}{$-$1666.239}&& \multicolumn{2}{c}{$-$1666.252}\\
KS & \multicolumn{3}{c}{\textbf{0.037}} && \multicolumn{2}{c}{0.098} && \multicolumn{2}{c}{0.097}&& \multicolumn{2}{c}{0.040}&& \multicolumn{2}{c}{0.040} \\
AD & \multicolumn{3}{c}{\textbf{0.831}} && \multicolumn{2}{c}{8.135} && \multicolumn{2}{c}{10.893} && \multicolumn{2}{c}{1.124}&& \multicolumn{2}{c}{1.134} \\
CvM & \multicolumn{3}{c}{\textbf{0.141}} && \multicolumn{2}{c}{1.385} && \multicolumn{2}{c}{1.638} && \multicolumn{2}{c}{0.194}&& \multicolumn{2}{c}{0.215}\\
\hline
\end{tabular}
}
\\\scriptsize{Source: Authors.}%
\end{table}

\begin{figure}
\centering
{\includegraphics[width=0.87\textwidth]{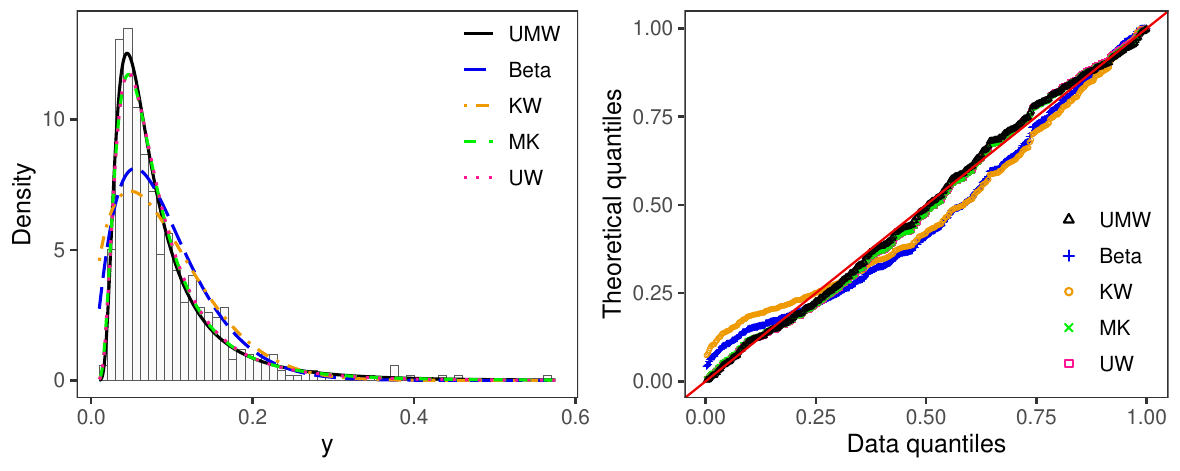}}
\caption{\label{fig:umw_models}Histograms, density plots, and QQ plots for the total municipal revenues collected by municipalities in RS in 2021.}
\scriptsize{Source: Authors.}%
\end{figure}

\subsection*{Useful Volume}

For the subsequent application of the UMW distribution, we consider the study conducted by \cite{sagrillo2021modified}, which utilized data on the relative useful volumes of several reservoirs in Brazil. 
These data can be obtained from the website of the National Electric System Operator (ONS), available at \href{https://www.ons.org.br/}{https://www.ons.org.br/}. 
Most of the reservoir data span from January 2011 to December 2019, with the exception of the Mauá reservoir, which starts in October 2012. 
Months with missing values, as well as values equal to zero or greater than or equal to one, were excluded from the sample.

The useful volume of reservoirs is a critical indicator for water resources management and has significant implications for water security and sustainability, aligning with several SDG targets. 
For example: Target 6.4 (SDG 6), which aims to increase water use efficiency to ensure the sustainability of water resources. Therefore, better management of the useful volume of reservoirs is essential for maintaining water availability and quality; 
Target 12.2 (SDG 12), which focuses on achieving sustainable management and efficient use of natural resources. In this context, the useful volume of reservoirs plays a vital role in efficient water resources management; 
Target 13.1 (SDG 13), which aims to improve education and capacity to mitigate and adapt to climate change. Efficient reservoir management is crucial in adapting to climate change, especially in regions vulnerable to droughts or floods.

A total of 26 reservoirs were analyzed, with only those located in the Southern region of Brazil included in this work for brevity.  
The remaining reservoirs are presented in Appendix \ref{ap:ApenA}.
Among the 26 reservoirs, the proposed UMW distribution provided the best fit for 15 (57.69\%), followed by the MK distribution for 9 (34.62\%) and the beta distribution for 2 (7.69\%). The KW and UW distributions did not provide the best fit for any of the reservoirs.  

Table \ref{tab:suldescrt} presents the descriptive measures of the relative useful volume of the reservoirs in the Southern region. 
In general, these volumes exhibit light tails, indicated by kurtosis values smaller than three and maximum values close to one, with similar standard deviations (SDs). 
The AD($p$) test revealed that none of the reservoirs exhibited normally distributed behavior, as all presented $p$-values smaller than the 5\% significance level.

\begin{table}
\caption{Descriptive measures of the relative useful volume of reservoirs in the South region. Omit(NA) represents the months that were discarded due to values being equal to zero and greater than or equal to one, and the number of months with missing values is shown in parentheses.\label{tab:suldescrt}}
\centering
\resizebox{1\textwidth}{!}{
\begin{tabular}{l@{\hspace{0.35cm}}r@{\hspace{0.3cm}}r@{\hspace{0.3cm}}r@{\hspace{0.3cm}}r@{\hspace{0.3cm}}r@{\hspace{0.3cm}}r@{\hspace{0.3cm}}r@{\hspace{0.3cm}}r@{\hspace{0.3cm}}r@{\hspace{0.3cm}}r}
\hline
Reservoir & $n$ & Omit(NA) & Min. & Median & Mean & Max. & SD & AC & K & AD($p$) \\
\hline
Barra Grande   & 104 & 4(0) & 0.028 & 0.545 & 0.586 & 0.999 & 0.265 & 0.127 & 1.788 & $<$0.001\\
Campos Novos   & 99  & 9(0) & 0.128 & 0.526 & 0.572 & 0.996 & 0.275 & 0.212 & 1.751 & $<$0.001\\
G. B. Munhoz   & 107 & 0(1) & 0.138 & 0.606 & 0.617 & 0.997 & 0.273 & $-$0.026 & 1.637 & $<$0.001\\
G. P. Souza    & 107 & 0(1) & 0.165 & 0.564 & 0.577 & 0.998 & 0.243 & 0.179 & 1.765 & $<$0.001\\
Machadinho     & 106 & 2(0) & 0.072 & 0.580 & 0.585 & 0.997 & 0.271 & 0.039 & 1.806 & $<$0.001\\
Mauá           & 80  & 7(21) & 0.108 & 0.665 & 0.650 & 0.999 & 0.282 & $-$0.223 & 1.735 & $<$0.001\\
Passo Fundo    & 106 & 2(0) & 0.287 & 0.785 & 0.739 & 0.998 & 0.217 & $-$0.360 & 1.832 & $<$0.001\\
Salto Santiago & 105 & 2(1) & 0.163 & 0.724 & 0.695 & 0.999 & 0.259 & $-$0.409 & 1.837 & $<$0.001\\
Santa Clara-PR & 89  & 18(1) & 0.133 & 0.580 & 0.598 & 0.999 & 0.260 & 0.030 & 1.684 & $<$0.001\\
Segredo        & 95  & 12(1) & 0.171 & 0.708 & 0.694 & 0.998 & 0.243 & $-$0.435 & 2.136 & $<$0.001\\
\hline
\end{tabular}
}
\\\scriptsize{Source: Authors.}%
\end{table}

The goodness-of-fit measures for the five distributions UMW, beta, KW, MK, and UW are presented in Table \ref{tab:sulajust}. 
In this table, below the name of each reservoir, the distribution that best fits is indicated, along with the number of criteria in which it was superior (in parentheses). 
At the end of the table, the number of reservoirs in which each distribution had the best fit is provided.
The results show that the UMW distribution was the best in all reservoirs in at least four criteria, particularly in the Barra Grande, Campos Novos, G. P. Souza, and Passo Fundo reservoirs, where the UMW distribution was superior in six of the criteria evaluated.
In general, the UMW distribution presented better results in the Loglik, KS, AD, and CvM criteria compared to the other distributions. 
However, in most reservoirs, it lost in the BIC criterion to the MK distribution. 
These distribution adjustments can be seen in Figure \ref{fig:g_south}, which shows the densities and QQ plots for each reservoir in the Southern region.

\begin{table} 
\caption{{Adjustment measures for the relative useful volume of reservoirs in the South region.}\label{tab:sulajust}}
\centering
\resizebox{0.89\textwidth}{!}{
\begin{tabular}{clrrrrrrr}
\hline
Reservoir & Dist & Loglik & AIC & BIC & AICc & KS & AD & CvM\\
\hline
\multirow[c]{5}{*}{Barra Grande} & UMW & \textbf{14.674} & \textbf{$-$23.348} & \textbf{$-$15.415} & \textbf{$-$23.108} & \textbf{0.093} & \textbf{1.402} & 0.213\\
 & Beta & 9.693 & $-$15.385 & $-$10.096 & $-$15.266 & 0.125 & 3.049 & 0.537\\
 & KW & 9.792 & $-$15.584 & $-$10.295 & $-$15.465 & 0.126 & 3.077 & 0.547\\
 & MK & 10.273 & $-$16.546 & $-$11.257 & $-$16.427 & 0.123 & 1.506 & \textbf{0.200}\\
\scriptsize{Best: UMW (6)} & UW & 8.033 & $-$12.066 & $-$6.778 & $-$11.947 & 0.117 & 3.169 & 0.470\\
\hline
\multirow[c]{5}{*}{Campos Novos} & UMW & \textbf{19.381} & \textbf{$-$32.761} & $-$24.976 & \textbf{$-$32.509} & \textbf{0.076} & \textbf{0.508} & \textbf{0.056}\\
 & Beta & 7.353 & $-$10.705 & $-$5.515 & $-$10.580 & 0.148 & 3.116 & 0.553\\
 & KW & 7.441 & $-$10.882 & $-$5.692 & $-$10.757 & 0.149 & 3.144 & 0.564\\
 & MK & 18.303 & $-$32.607 & \textbf{$-$27.416} & $-$32.482 & 0.082 & 0.911 & 0.140\\
 \scriptsize{Best: UMW (6)} & UW & 6.058 & $-$8.117 & $-$2.926 & $-$7.992 & 0.139 & 3.449 & 0.505\\
\hline
\multirow[c]{5}{*}{G. B. Munhoz} & UMW & \textbf{24.282} & $-$42.564 & $-$34.546 & $-$42.331 & \textbf{0.043} & \textbf{0.413} & \textbf{0.051}\\
 & Beta & 14.980 & $-$25.961 & $-$20.615 & $-$25.846 & 0.124 & 2.080 & 0.361\\
 & KW & 15.120 & $-$26.240 & $-$20.894 & $-$26.124 & 0.125 & 2.097 & 0.369\\
 & MK & 24.010 & \textbf{$-$44.020} & \textbf{$-$38.674} & \textbf{$-$43.905} & 0.062 & 0.525 & 0.077\\
 \scriptsize{Best: UMW (4)} & UW & 13.094 & $-$22.188 & $-$16.842 & $-$22.072 & 0.123 & 2.539 & 0.384\\
\hline
\multirow[c]{5}{*}{G. P. Souza}  & UMW & \textbf{20.607} & \textbf{$-$35.214} & $-$27.195 & \textbf{$-$34.981} & \textbf{0.074} & \textbf{0.349} & \textbf{0.061}\\
 & Beta & 9.352 & $-$14.705 & $-$9.359 & $-$14.589 & 0.125 & 2.028 & 0.330\\
 & KW & 9.225 & $-$14.450 & $-$9.104 & $-$14.334 & 0.124 & 2.053 & 0.338\\
 & MK & 19.277 & $-$34.554 & \textbf{$-$29.209} & $-$34.439 & 0.096 & 0.556 & 0.088\\
 \scriptsize{Best: UMW (6)} & UW & 11.037 & $-$18.075 & $-$12.729 & $-$17.959 & 0.123 & 1.740 & 0.240\\
\hline
\multirow[c]{5}{*}{Machadinho} & UMW & \textbf{16.038} & $-$26.076 & $-$18.085 & $-$25.840 & \textbf{0.044} & \textbf{0.291} & \textbf{0.037}\\
 & Beta & 8.974 & $-$13.949 & $-$8.622 & $-$13.832 & 0.106 & 1.962 & 0.346\\
 & KW & 9.062 & $-$14.124 & $-$8.797 & $-$14.007 & 0.107 & 1.978 & 0.354\\
 & MK & 15.503 & \textbf{$-$27.006} & \textbf{$-$21.679} & \textbf{$-$26.889} & 0.045 & 0.340 & 0.042\\
 \scriptsize{Best: UMW (4)} & UW & 7.670 & $-$11.341 & $-$6.014 & $-$11.224 & 0.103 & 2.136 & 0.302\\
\hline
\multirow[c]{5}{*}{Mauá} & UMW & \textbf{33.167} & $-$60.335 & $-$53.189 & $-$60.019 & \textbf{0.060} & \textbf{0.433} & \textbf{0.044}\\
 & Beta & 26.481 & $-$48.961 & $-$44.197 & $-$48.806 & 0.130 & 1.933 & 0.322\\
 & KW & 26.568 & $-$49.135 & $-$44.371 & $-$48.979 & 0.132 & 1.996 & 0.337\\
 & MK & 32.583 & \textbf{$-$61.167} & \textbf{$-$56.403} & \textbf{$-$61.011} & 0.066 & 0.585 & 0.072\\
 \scriptsize{Best: UMW (4)} & UW & 22.304 & $-$40.607 & $-$35.843 & $-$40.451 & 0.139 & 2.832 & 0.403\\
\hline
\multirow[c]{5}{*}{Passo Fundo} & UMW & \textbf{54.846} & \textbf{$-$103.692} & $-$95.701 & \textbf{$-$103.456} & \textbf{0.058} & \textbf{0.685} & \textbf{0.091}\\
 & Beta & 47.903 & $-$91.806 & $-$86.479 & $-$91.689 & 0.113 & 1.904 & 0.305\\
 & KW & 48.180 & $-$92.361 & $-$87.034 & $-$92.244 & 0.114 & 1.882 & 0.306\\
 & MK & 53.608 & $-$103.216 & \textbf{$-$97.890} & $-$103.100 & 0.074 & 0.962 & 0.143\\
 \scriptsize{Best: UMW (6)} & UW & 45.853 & $-$87.706 & $-$82.380 & $-$87.590 & 0.118 & 2.464 & 0.366\\
\hline
\multirow[c]{5}{*}{Salto Santiago} & UMW & \textbf{40.806} & $-$75.613 & $-$67.651 & $-$75.375 & \textbf{0.052} & \textbf{0.478} & \textbf{0.063}\\
 & Beta & 35.817 & $-$67.635 & $-$62.327 & $-$67.517 & 0.094 & 1.209 & 0.196\\
 & KW & 36.015 & $-$68.031 & $-$62.723 & $-$67.913 & 0.095 & 1.205 & 0.197\\
 & MK & 40.758 & \textbf{$-$77.515} & \textbf{$-$72.207} & \textbf{$-$77.397} & 0.055 & 0.490 & 0.066\\
 \scriptsize{Best: UMW (4)} & UW & 33.439 & $-$62.878 & $-$57.570 & $-$62.760 & 0.104 & 1.758 & 0.269\\
\hline
\multirow[c]{5}{*}{Santa Clara-PR} & UMW & \textbf{15.505} & $-$25.011 & $-$17.545 & $-$24.729 & \textbf{0.055} & \textbf{0.217} & \textbf{0.035}\\
 & Beta & 8.533 & $-$13.066 & $-$8.089 & $-$12.926 & 0.108 & 1.371 & 0.249\\
 & KW & 8.580 & $-$13.160 & $-$8.183 & $-$13.021 & 0.109 & 1.381 & 0.253\\
 & MK & 15.386 & \textbf{$-$26.773} & \textbf{$-$21.796} & \textbf{$-$26.633} & 0.058 & 0.260 & 0.045\\
 \scriptsize{Best: UMW (4)} & UW & 8.343 & $-$12.687 & $-$7.709 & $-$12.547 & 0.105 & 1.379 & 0.223\\
\hline
\multirow[c]{5}{*}{Segredo} & UMW & \textbf{34.127} & $-$62.253 & $-$54.592 & $-$61.990 & \textbf{0.068} & \textbf{0.698} & \textbf{0.088}\\
 & Beta & 30.181 & $-$56.363 & $-$51.255 & $-$56.232 & 0.095 & 1.618 & 0.259\\
 & KW & 30.360 & $-$56.720 & $-$51.612 & $-$56.589 & 0.096 & 1.631 & 0.264\\
 & MK & 33.519 & \textbf{$-$64.037} & \textbf{$-$58.929} & \textbf{$-$63.906} & 0.070 & 0.725 & 0.092\\
 \scriptsize{Best: UMW (4)} & UW & 28.146 & $-$52.293 & $-$47.185 & $-$52.162 & 0.101 & 2.173 & 0.303\\
\hline
\multicolumn{9}{c}{Win: \quad UMW = 10 \quad\quad Beta = 0 \quad\quad KW = 0 \quad\quad MK = 0 \quad\quad UW = 0}\\
\hline
\end{tabular}
}
\\\scriptsize{Source: Authors.}%
\end{table}

\begin{figure}
\centering
\subfigure[Barra Grande]{\includegraphics[width=0.45\textwidth]{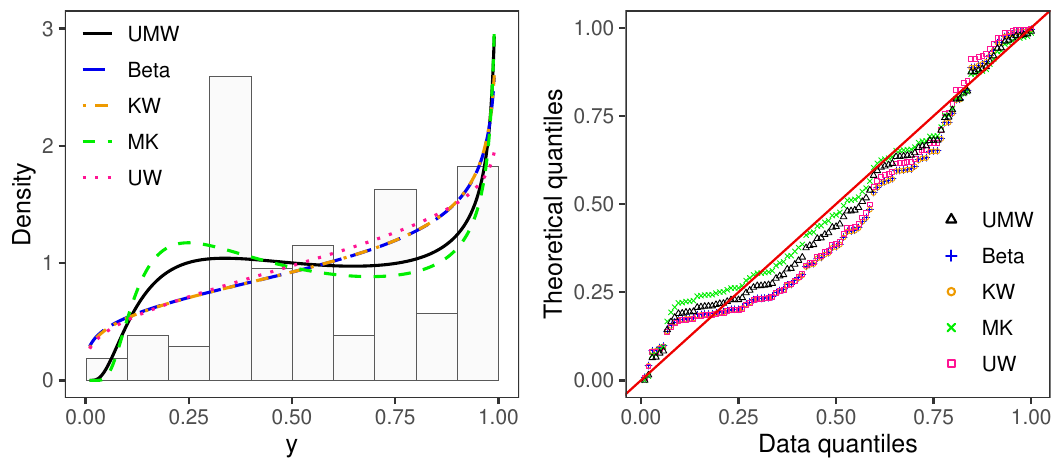}}
\subfigure[Campos Novos]{\includegraphics[width=0.45\textwidth]{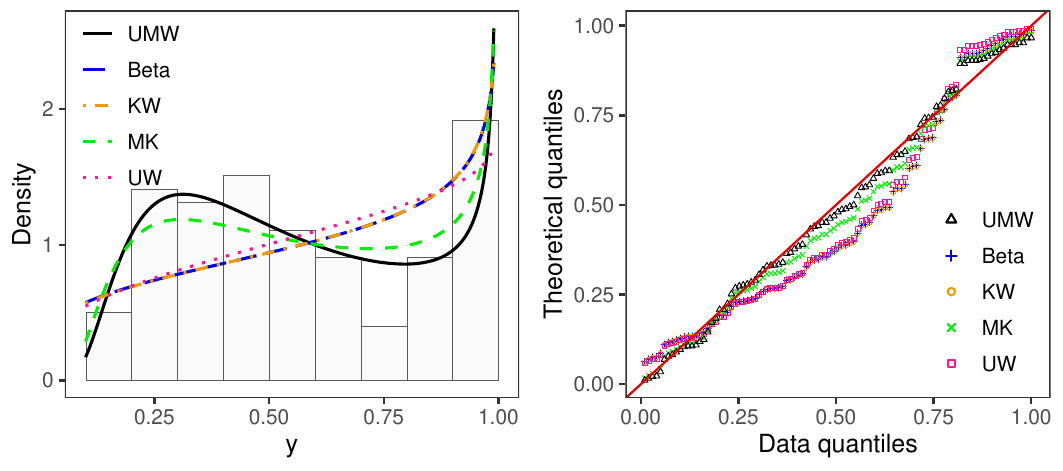}}
\subfigure[G. B. Munhoz]{\includegraphics[width=0.45\textwidth]{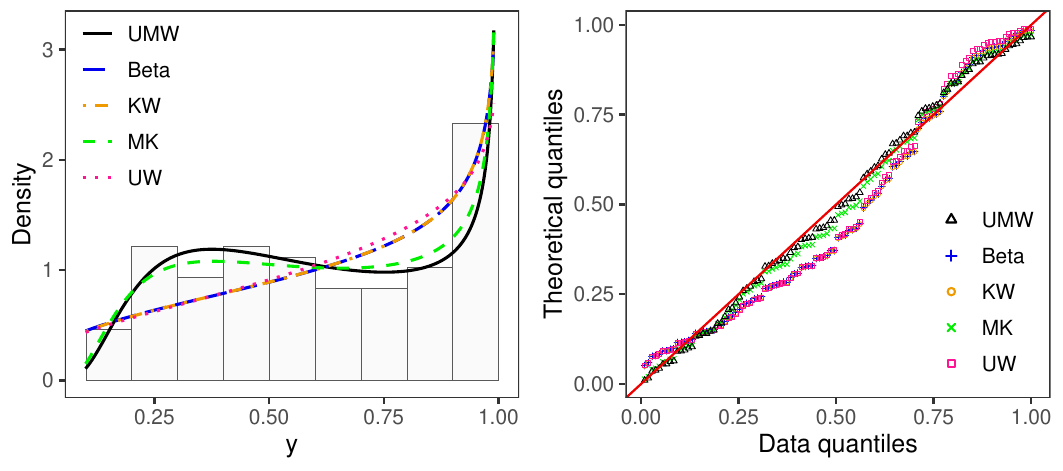}}
\subfigure[G. P. Souza]{\includegraphics[width=0.45\textwidth]{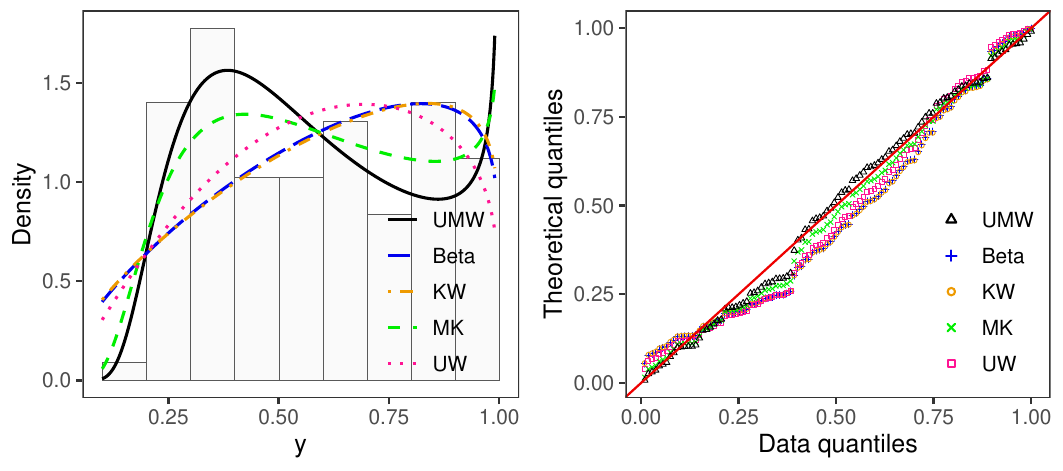}}\\
\subfigure[Machadinho]{\includegraphics[width=0.45\textwidth]{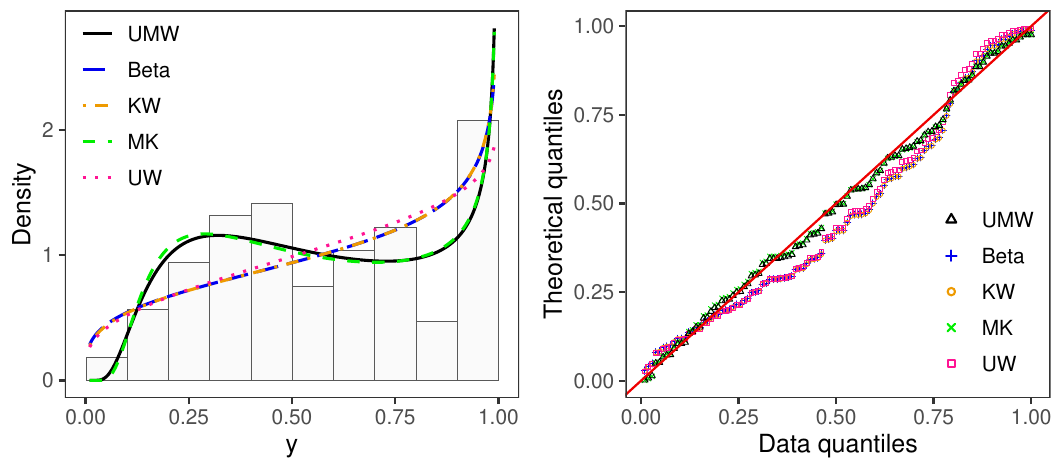}}
\subfigure[Mauá]{\includegraphics[width=0.45\textwidth]{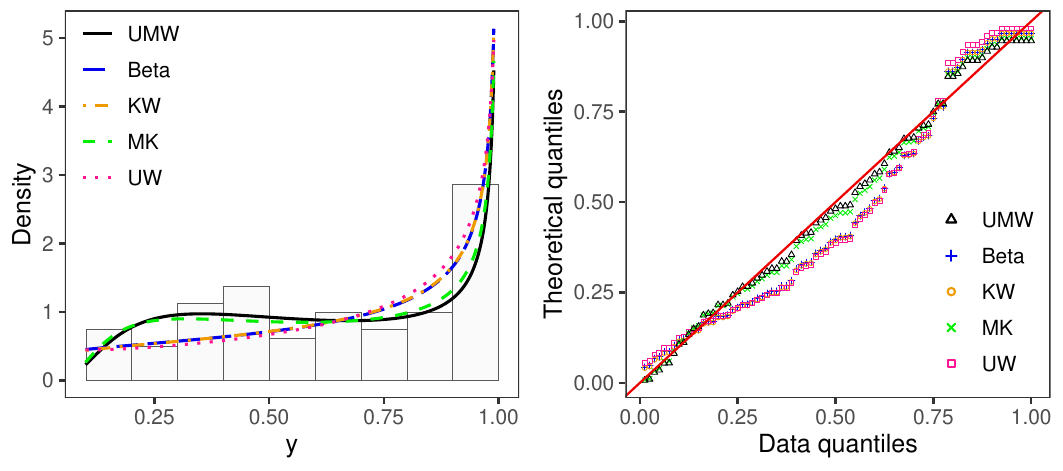}}
\subfigure[Passo Fundo]{\includegraphics[width=0.45\textwidth]{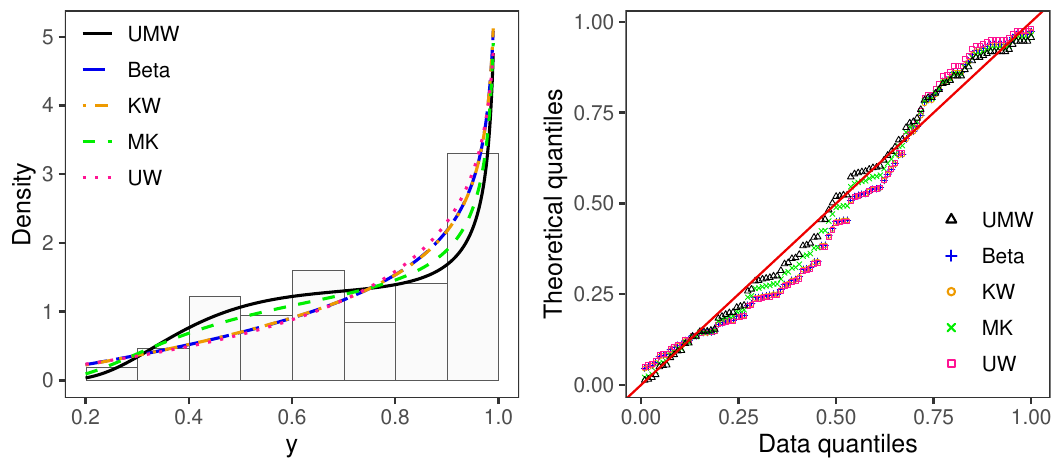}}
\subfigure[Salto Santiago]{\includegraphics[width=0.45\textwidth]{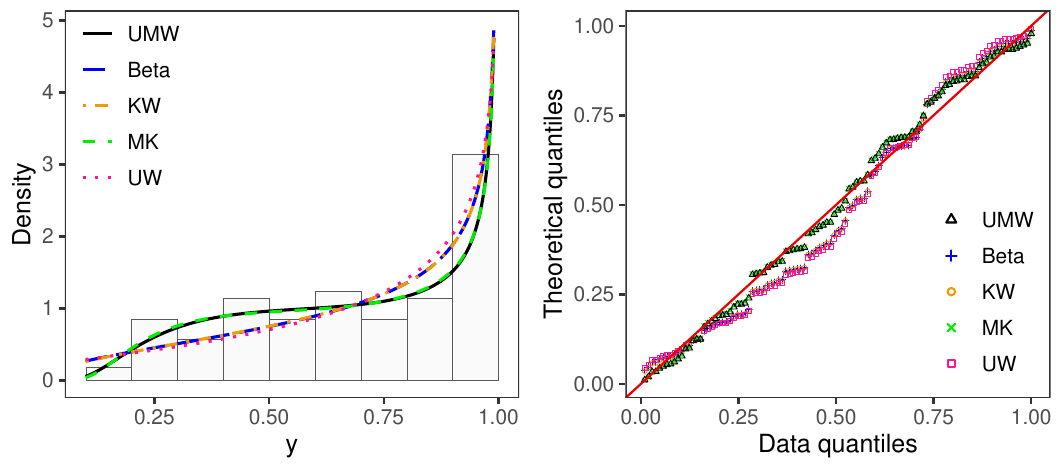}}
\subfigure[Santa Clara-PR]{\includegraphics[width=0.45\textwidth]{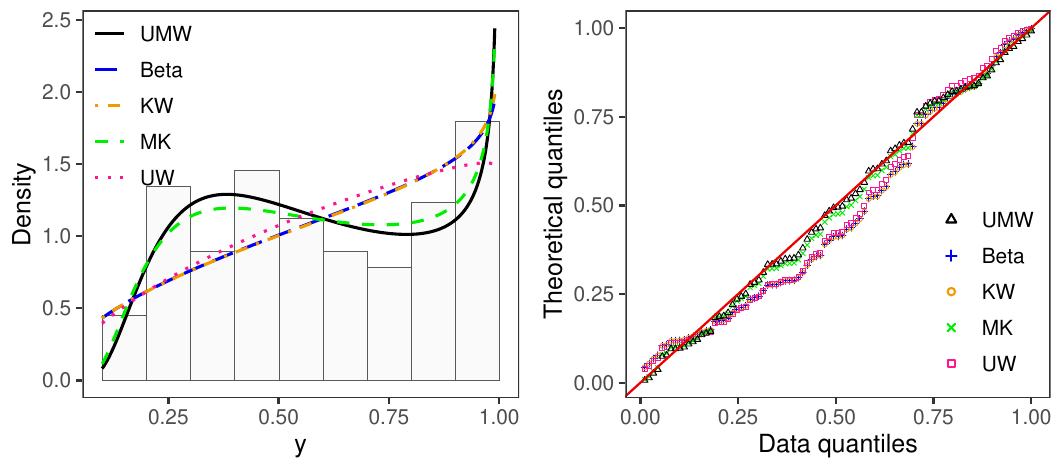}}
\subfigure[Segredo]{\includegraphics[width=0.45\textwidth]{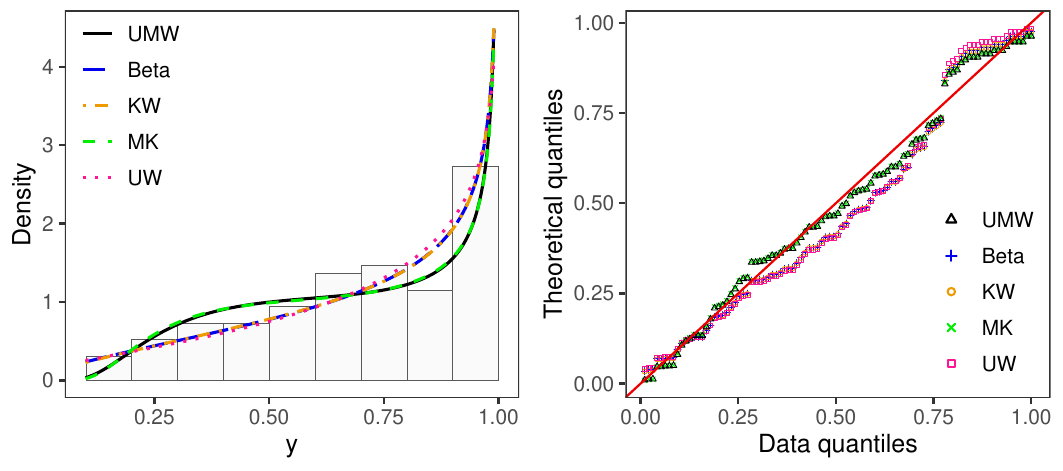}}
\caption{Histograms, density plots, and QQ plots for the relative useful volume of reservoirs in the South region.}\label{fig:g_south}
\scriptsize{Source: Authors.}%
\end{figure}

Analyzing all 26 reservoirs, it was possible to identify scenarios in which the UMW distribution stands out in comparison to the other distributions. These scenarios include behaviors where density exhibits an increasing-decreasing-increasing behavior, as illustrated in Figure \ref{fig:g_south} by the densities shown. 
Another behavior in which the UMW distribution adjusted well was when the distribution of the data presented a peak followed by a high frequency of observations close to one. This can be observed in the Serra do Facão and Itaparica reservoirs, presented in Figures \ref{fig:g_SeMw} and \ref{fig:g_Ne}, respectively, available in the Appendix \ref{ap:ApenA}. 

\subsection*{Reading Skills}

Now, we present an application of the regression model introduced. 
This study aims to analyze whether dyslexia has a significant impact on reading accuracy, even after adjusting the results for the intelligence quotient (IQ) score, 
as initially discussed by \cite{smithson2006better}. 
This data set was also used by \cite{cribari2010beta} in the context of the beta regression model.

The variable of interest ($y$) is accuracy, measured by scores on a reading test administered to 44 children. 
The two covariates are dyslexia, a factor with summed contrasts that differentiates the dyslexic group from the control group, and non-verbal IQ. 
Some summary statistics of these variables are presented in Table \ref{tab:descdislex}. 
The dependent variable has light tails, with kurtosis of \( 1.437 \) (less than 3), indicating fewer extreme values and a greater concentration of data around the mean, which is equal to \( 0.773 \). 
Based on the $p$-value of the AD($p$) test (less than $0.05$), it is concluded that accuracy does not follow a normal distribution. 
The study presents 19 children with dyslexia and 25 without the presence of dyslexia (control group).

\begin{table}
\caption{\label{tab:descdislex}{Descriptive statistics of reading skills in dyslexic children.}}
\centering
\begin{tabular}{lr@{\hspace{0.25cm}}r@{\hspace{0.25cm}}r@{\hspace{0.25cm}}r@{\hspace{0.25cm}}r@{\hspace{0.25cm}}r@{\hspace{0.25cm}}r@{\hspace{0.25cm}}r@{\hspace{0.25cm}}r}
\hline
 & $n$ & Min. & Median & Mean & Max. & SD & AC & K & AD($p$) \\
\hline
Accuracy & 44 & 0.459 & 0.706 & 0.773 & 0.990 & 0.179 & 0.099 & 1.437 & $<$0.001\\
IQ & 44 & $-$1.745 & $-$0.123 & 0 & 1.856 & 1.000 & 0.062 & 2.130 & 0.564\\
Dyslexia & 44 & \multicolumn{7}{c}{Non(-1) = 25 \quad \quad Yes(1) = 19 }\\
\hline
\end{tabular}
\\\scriptsize{Source: Authors.}%
\end{table}

As suggested by previous studies \citep{canterle2019variable}, the interaction between the covariates and the square of the IQ variable was considered as a regressor.  
Thus, the structure of the RQ-UMW model assumed for the median $\mu_t$, i.e., for $\tau = 0.5$, is given by  
\begin{equation}
\log\left(\frac{\mu_t}{1-\mu_t}\right) = \beta_{0} + \beta_{1} \text{IQ}_t^2 + \beta_{2} (\text{Dyslexia}_t \times \text{IQ}_t^2),
\end{equation}
where $t = 1, \dots, 44$.

Tables \ref{tab:coeficientes} and \ref{tab:coeficientes2} present the coefficients and goodness-of-fit measures for the regression models analyzed, respectively. 
For comparison purposes, the RQ-beta, RQ-KW, and RQ-UW regression models were fitted with the same covariates as in the RQ-UMW model. 
For all models, the final specification was selected based on the lowest AIC and the preservation of quantile residual normality, including only covariates that showed at least 10\% significance.
The RQ-UMW model demonstrated superior performance compared to the other models in all considered metrics, presenting the lowest error statistics (MSE, RMSE, MAE, and MAPE), indicating the best fit to the observed reading accuracy data. 
Furthermore, the RQ-UMW model obtained the lowest AIC and BIC values. 
The explainability of the model, measured by \( R^2_G \), was 0.572, indicating that the model is capable of explaining 57.2\% the observed variation in the reading accuracy. 
These results demonstrate the reliability of the RQ-UMW model in terms of fit and explanatory capacity.
Analysis revealed a differential effect of IQ$^2$ on median reading accuracy across groups. Specifically, IQ$^2$ exhibited a positive association with median accuracy among children without dyslexia, suggesting that higher IQ levels correspond to better reading performance. In contrast, this effect was negligible or slightly negative for children with dyslexia, implying that increased IQ does not translate into improvements in median accuracy for this group. This group-specific pattern is captured by the interaction term between dyslexia and IQ$^2$ included in the model.

The graphs of residuals versus indices and simulated envelopes of the RQ-UMW model for reading skills in dyslexic children are presented in Figure \ref{fig:dislex_g} for the regression models under study.
For the RQ-UMW model, we observed the best fit, evidenced by the graphs, with no atypical values (outside the range -3 to 3) and with practically all observations within the 95\% confidence bands. 
In the other models, although the graphs of residuals versus indices do not present atypical values and the observations do not reveal a defined pattern, being distributed randomly, the simulated envelope indicates a worse fit in relation to the RQ-UMW model, with the observations further away from the expected lines.

\begin{table}
\centering
\caption{\label{tab:coeficientes}{Coefficients of the regression models analyzed for reading skills in dyslexic children.}}
 \resizebox{1\textwidth}{!}{
\begin{tabular}{lr@{\hspace{0.25cm}}r@{\hspace{0.25cm}}r@{\hspace{0.1cm}}cr@{\hspace{0.25cm}}r@{\hspace{0.25cm}}r@{\hspace{0.1cm}}cr@{\hspace{0.25cm}}r@{\hspace{0.25cm}}r@{\hspace{0.1cm}}cr@{\hspace{0.25cm}}r@{\hspace{0.25cm}}r@{\hspace{0.1cm}}}
\hline
& \multicolumn{3}{c}{RQ-UMW} && \multicolumn{3}{c}{RQ-beta} && \multicolumn{3}{c}{RQ-KW} && \multicolumn{3}{c}{RQ-UW}\\
\cline{2-4}  \cline{6-8} \cline{10-12} \cline{14-16}
Par  & Estim. & SE & $p$-value&  & Estim. & SE & $p$-value &  & Estim. & SE & $p$-value &  & Estim. & SE & $p$-value \\
\hline
$\beta_0$ & 0.792 & 0.166 & $<$0.001 && 1.706 & 0.173 & $<$0.001 && 1.768 & 0.218 & $<$0.001 &&  1.688 & 1.178 & $<$0.001\\
$\beta_1$ & 0.760 & 0.155 & $<$0.001 && \multicolumn{1}{c}{-}  & \multicolumn{1}{c}{-}  & \multicolumn{1}{c}{-}  && \multicolumn{1}{c}{-}  & \multicolumn{1}{c}{-}  & \multicolumn{1}{c}{-}  && \multicolumn{1}{c}{-}  & \multicolumn{1}{c}{-}  & \multicolumn{1}{c}{-} \\
$\beta_2$ & $-$0.890 & 0.124 & $<$0.001 && $-$0.613 & 0.210 & 0.004  && \multicolumn{1}{c}{-}  & \multicolumn{1}{c}{-}  & \multicolumn{1}{c}{-}  && $-$0.783 & 0.242 & 0.001\\
$\beta_3$ & \multicolumn{1}{c}{-}  & \multicolumn{1}{c}{-}  & \multicolumn{1}{c}{-}  && $-$0.422 & 0.195 & 0.030 && \multicolumn{1}{c}{-}  & \multicolumn{1}{c}{-}  & \multicolumn{1}{c}{-}  &&  $-$0.350 & 0.186 & 0.060\\
$\beta_4$ & \multicolumn{1}{c}{-}  & \multicolumn{1}{c}{-} & \multicolumn{1}{c}{-}  && \multicolumn{1}{c}{-}  & \multicolumn{1}{c}{-}  & \multicolumn{1}{c}{-}  && 0.845 & 0.188 & $<$0.001 && \multicolumn{1}{c}{-}  & \multicolumn{1}{c}{-}  & \multicolumn{1}{c}{-} \\
$\gamma$ & 0.578 & 0.164 & $<$0.001 && 8748.051 & 182.125 & $<$0.001 && 3.610 & 0.692 & $<$0.001 && 1.091 & 0.135 & $<$0.001\\
$\lambda$ & 4.144 & 1.006 & $<$0.001 && 0.964 & 0.180 & $<$0.001 && \multicolumn{1}{c}{-} & \multicolumn{1}{c}{-}& \multicolumn{1}{c}{-} && \multicolumn{1}{c}{-} &\multicolumn{1}{c}{-} & \multicolumn{1}{c}{-}\\
\hline 
\multicolumn{16}{c}{$\beta_0$ - intercept \quad\, $\beta_1$ - IQ$^2$ \quad\, $\beta_2$ - Dyslexia $\times$ IQ$^2$ \quad\, $\beta_3$ - Dyslexia \quad\, $\beta_4$ - IQ }\\
\hline 
\end{tabular}
}
\\\scriptsize{Source: Authors.}%
\end{table}

\begin{table}
\centering
\caption{\label{tab:coeficientes2}{Goodness-of-fit measures for reading skills in dyslexic children from the regression models analyzed.}}
\resizebox{1\textwidth}{!}{
\begin{tabular}{lrrrrrrrrr}
\hline
  & Loglik & MSE & RMSE & MAE & MAPE & AIC & BIC & R$^2_G$ & AD($p$) resid.\\
\hline
RQ-UMW & \textbf{52.547} & \textbf{0.014} & \textbf{0.117} & \textbf{0.086} & \textbf{11.524} & \textbf{$-$95.094} & \textbf{$-$86.173} & \textbf{0.572} & 0.230\\
RQ-beta & 42.781 & 0.019 & 0.139 & 0.112 & 16.953 & $-$75.563 & $-$59.073 & 0.454 & 0.329\\
RQ-KW & 35.367 & 0.025 & 0.160 & 0.124 & 18.528 & $-$64.733 & $-$59.381 & 0.328 & 0.459\\
RQ-UW & 40.589 & 0.024 & 0.155 & 0.122 & 18.518 & $-$73.177 & $-$66.040 & 0.501 & 0.401\\
\hline
\end{tabular}
}
\\\scriptsize{Source: Authors.}%
\end{table}

\begin{figure}
\centering
\includegraphics[width=0.48\textwidth]{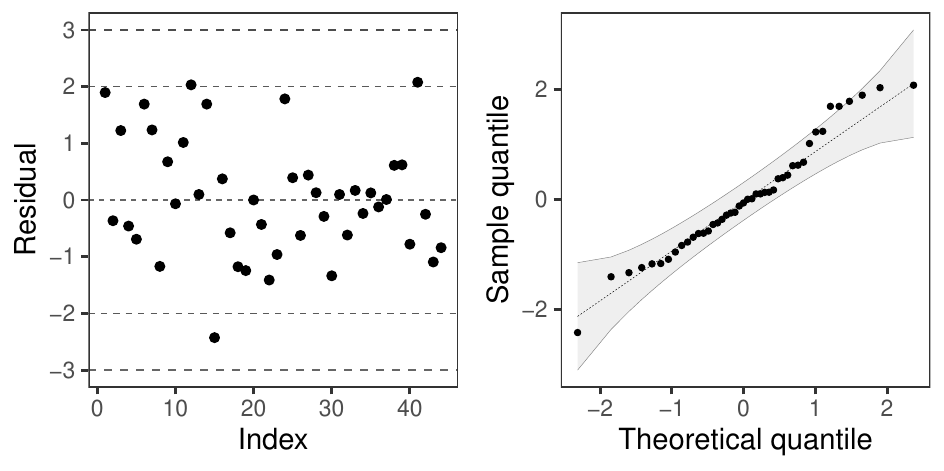}
\includegraphics[width=0.48\textwidth]{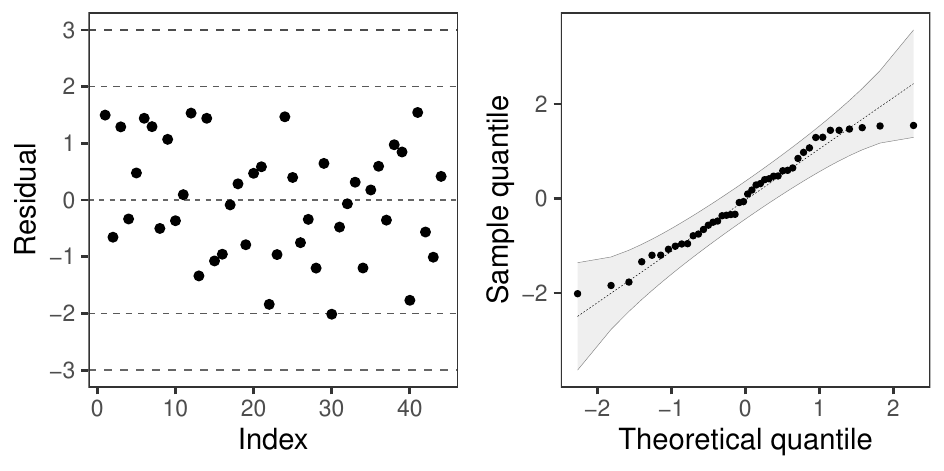}\\[-0.1em]
\footnotesize{(a) RQ-UMW \hspace{6cm} (b) RQ-beta}\\[1em]
\includegraphics[width=0.48\textwidth]{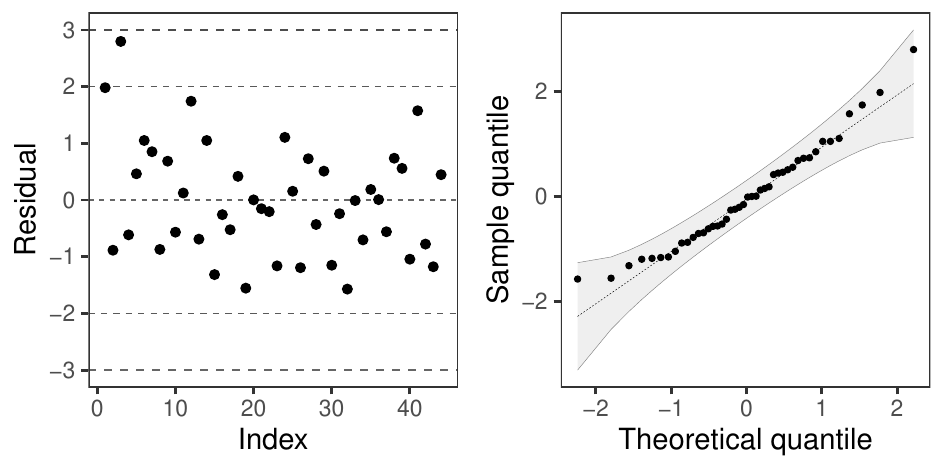}
\includegraphics[width=0.48\textwidth]{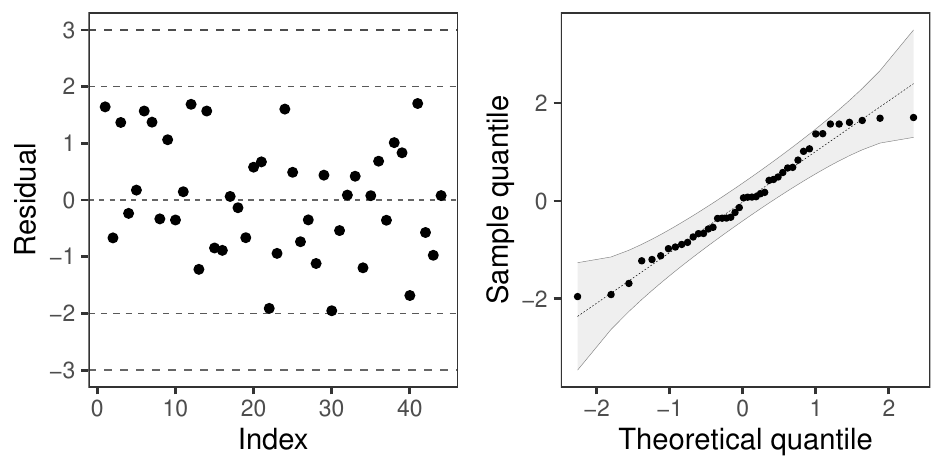}\\[-0.1em]
\footnotesize{(c) RQ-KW \hspace{6.4cm} (d) RQ-UW}
\caption{Residuals versus observation indices and simulated envelopes for reading skills in dyslexic children from the analyzed regression models.}\label{fig:dislex_g}
\scriptsize{Source: Authors.}%
\end{figure}

Figure \ref{fig:quant} presents an analysis of the parameter estimates of the RQ-UMW model for different quantiles, with \( \tau \in \{0.1, 0.2, \dots, 0.9\} \). 
The 95\% confidence intervals and point values were calculated, allowing us to observe how the parameters vary as a function of the quantiles considered.
This analysis reveals that while some exhibit varying behaviors with respect to the quantiles, others remain stable. 
The intercept (\( \hat{\beta}_0 \)) increases as \( \tau \) increases, indicating that higher quantiles are associated with higher values of the intercept. 
The coefficient \( \hat{\beta}_1 \) also shows an increase with \( \tau \), suggesting a greater influence at higher quantiles. In contrast, the coefficient \( \hat{\beta}_2 \) shows a negative estimate that intensifies with increasing \( \tau \), reflecting a stronger and more negative relationship at higher quantiles. 
On the other hand, the parameters \( \hat{\gamma} \) and \( \hat{\lambda} \) remain practically constant across the values of \( \tau \), indicating that their estimates are not affected by variations in the quantiles, except the quantile \( 0.9 \).

\begin{figure}
\centering
\includegraphics[width=0.3\textwidth]{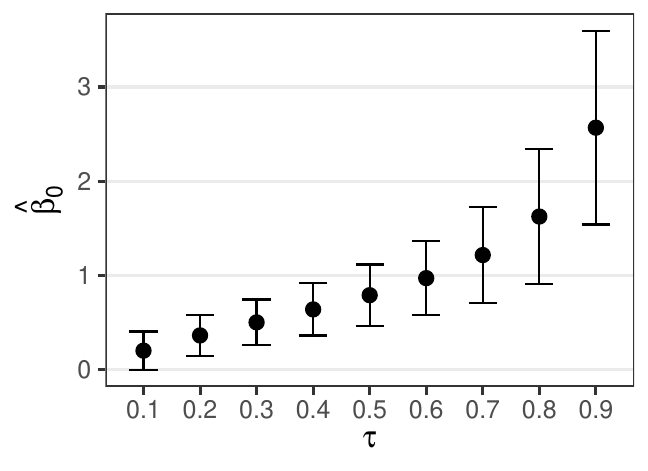}
{\includegraphics[width=0.3\textwidth]{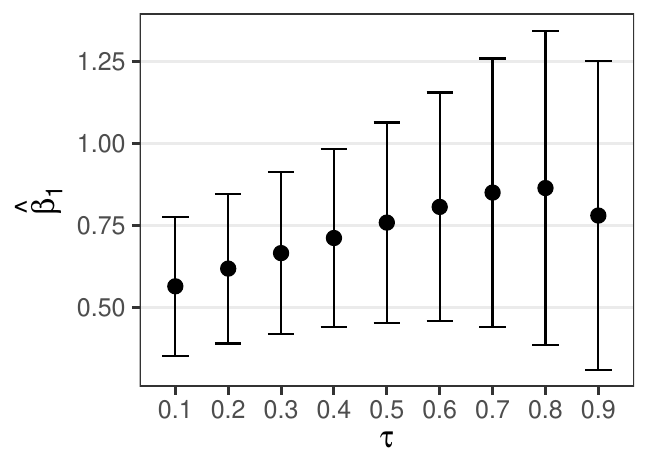}}
{\includegraphics[width=0.3\textwidth]{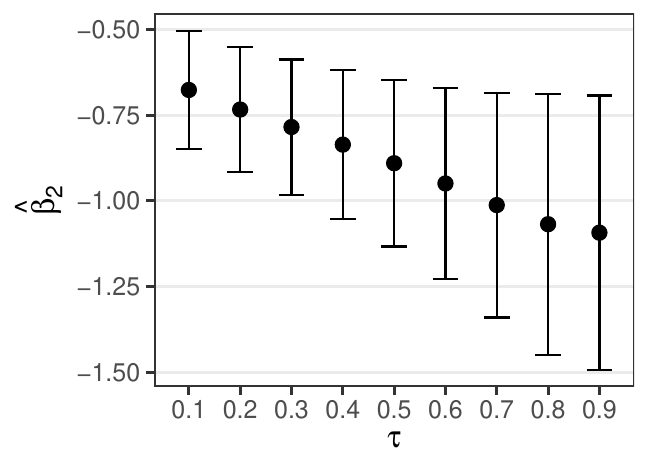}}\\
\includegraphics[width=0.3\textwidth]{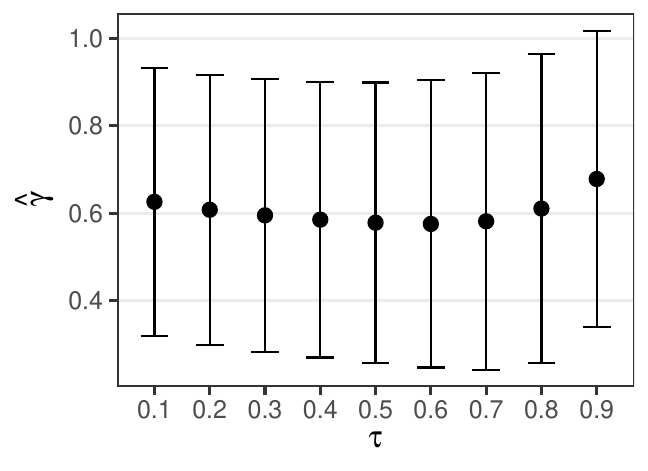}
\includegraphics[width=0.3\textwidth]{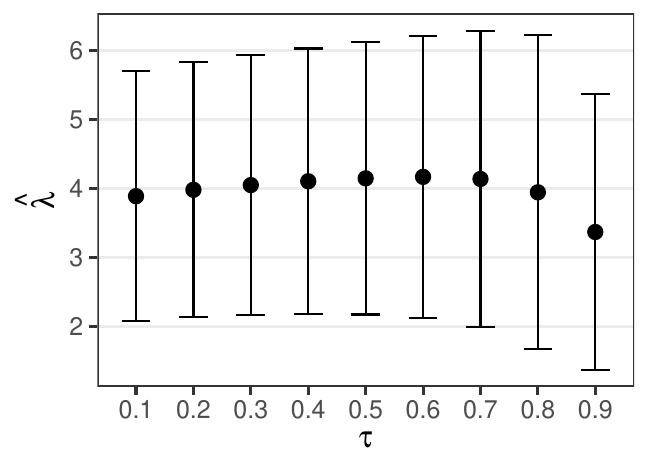}
\caption{Parameter estimates and 95\% confidence intervals for the RQ-UMW model considering \( \tau \in \{0.1, 0.2, \dots, 0.9\} \).}\label{fig:quant}
\scriptsize{Source: Authors.}%
\end{figure}

\section*{Final Considerations}

The primary goal of this work was to propose a new unit distribution based on the MW distribution, to explore its flexibility within the unit interval $(0,1)$.  
In addition, a new quantile regression model was developed based on this distribution.  
Monte Carlo simulations demonstrated that the maximum likelihood estimators of the parameters from the UMW distribution and the RQ-UMW model exhibit desirable properties, such as asymptotic unbiasedness and consistency. 
Furthermore, the results provided evidence of the asymptotic normality of the MLE.  

The empirical analyses conducted on different datasets demonstrated the flexibility and desirable properties of the proposed models, as they effectively captured increasing-decreasing-increasing data behaviors, high frequencies near one, and/or the presence of a peak in the distribution.  
In particular, for the application involving the useful volume of reservoirs, the UMW distribution achieved the best fit in 57.69\% of the cases, outperforming the beta, KW, MK, and UW distributions. 
This result reinforces its suitability and superiority for this type of modeling.  
When applying the RQ-UMW model to reading skills in dyslexic children, a good fit to the data was observed, with quantile residuals following a normal distribution and an explainability of 57.2\%. Furthermore, the RQ-UMW model outperformed key competitors, including the beta, KW, and UW regression models.  

In summary, the results obtained demonstrate the effectiveness and flexibility of the UMW distribution and the RQ-UMW model in different applied contexts, highlighting their ability to capture complex patterns, such as the increasing-decreasing-increasing behavior of the dependent variable, and offering new approaches for statistical analysis in several areas.

\section*{Acknowledgements}
This research was partially funded by the Serrapilheira Institute (grant number 2211-41692), FAPERGS (Fundação de Amparo à Pesquisa do Estado do Rio Grande do Sul - grant number 23/2551-0001595-1), CNPq (Conselho Nacional de Desenvolvimento Científico e Tecnológico - grant numbers 306274/2022-1 and 308578/2023-6), and CAPES (Coordenação de Aperfeiçoamento de Pessoal de Nível Superior - Finance Code 001). 
The content of this work is solely the responsibility of the authors and does not necessarily represent the official views of the funding agencies.

\bibliography{Article}

 \newpage

\section*{Appendix}
\appendix 

\section{Observed Information Matrix}
\label{ap:ApenOIM}

The UMW distribution is a model for which closed-form expressions for the moments cannot be established. 
Therefore, the use of the Fisher information matrix becomes challenging.
However, according to \cite{pawitan2001}, using the observed information matrix is a common strategy and provides approximate estimates for the Fisher information matrix, which is useful in defining the asymptotic distribution of the MLE.

The observed information matrix of the UMW distribution can be written as
\[
\bm{\mbox{J}}(\bm{{\theta}_1}) = -
\begin{bmatrix}
\mbox{J}_{\alpha\alpha}(\bm{\theta}_1) & \mbox{J}_{\alpha\gamma}(\bm{\theta}_1) & \mbox{J}_{\alpha\lambda}(\bm{\theta}_1)\\[1.5em]
\mbox{J}_{\gamma\alpha}(\bm{\theta}_1) & \mbox{J}_{\gamma\gamma}(\bm{\theta}_1) & \mbox{J}_{\gamma\lambda}(\bm{\theta}_1)\\[1.5em]
\mbox{J}_{\lambda\alpha}(\bm{\theta}_1) & \mbox{J}_{\lambda\gamma}(\bm{\theta}_1) & \mbox{J}_{\lambda\lambda}(\bm{\theta}_1)
\end{bmatrix}.
\]
The elements referring to the coordinates of the observed information matrix relative to the parameters \(\alpha\), \(\gamma\), and \(\lambda\) of the UMW distribution are given, respectively, by:
\begin{align*}
\mbox{J}_{\alpha\alpha}(\bm{\theta}_1) = \sum_{t=1}^{n}\frac{\partial^2\ell_{1,t}(\bm{\theta}_1,{{y_t}})}{\partial \alpha^2}, \quad \mbox{J}_{\alpha\gamma}(\bm{\theta}_1) = \sum_{t=1}^{n}\dfrac{\partial^{2}\ell_{1,t}(\bm{\theta}_1,{{y_t}})}{\partial\alpha\partial\gamma} = \mbox{J}_{\gamma\alpha}(\bm{\theta}_1), \\
\mbox{J}_{\gamma\gamma}(\bm{\theta}_1) = \sum_{t=1}^{n}\frac{\partial^2\ell_{1,t}(\bm{\theta}_1,{{y_t}})}{\partial \gamma^2}, \quad
\mbox{J}_{\alpha\lambda}(\bm{\theta}_1) = \sum_{t=1}^{n}\dfrac{\partial^{2}\ell_{1,t}(\bm{\theta}_1,{{y_t}})}{\partial\alpha\partial\lambda} = \mbox{J}_{\lambda\alpha}(\bm{\theta}_1),
 \\
\mbox{J}_{\lambda\lambda}(\bm{\theta}_1) = \sum_{t=1}^{n}\frac{\partial^2\ell_{1,t}(\bm{\theta}_1,{{y_t}})}{\partial \lambda^2}, \quad \mbox{J}_{\gamma\lambda}(\bm{\theta}_1) = \sum_{t=1}^{n}\dfrac{\partial^{2}\ell_{1,t}(\bm{\theta}_1,{{y_t}})}{\partial\gamma\partial\lambda} = \mbox{J}_{\lambda\gamma}(\bm{\theta}_1), 
\end{align*}
where
\begin{align*}
    &\dfrac{\partial^{2}\ell_{1,t}(\bm{\theta}_1,{{y_t}})}{\partial\alpha^2} =  -\dfrac{1}{{\alpha}^2} := \texttt{r}_t^\star, \\
    &\dfrac{\partial^{2}\ell_{1,t}(\bm{\theta}_1,{{y_t}})}{\partial\gamma^2} =   -\dfrac{1}{\left[{\gamma}-{\lambda}\log\left(y_\text{t}\right)\right]^2}-\dfrac{{\alpha}\left[-\log\left(y_\text{t}\right)\right]^{\gamma}\log^2\left(-\log\left(y_\text{t}\right)\right)}{y_\text{t}^{\lambda}} := \texttt{s}_t^\star,\\
    &\dfrac{\partial^{2}\ell_{1,t}(\bm{\theta}_1,{{y_t}})}{\partial\lambda^2} = -\dfrac{\log^2\left(y_\text{t}\right)}{\left[{\gamma}-{\lambda}\log\left(y_\text{t}\right)\right]^2}-\dfrac{{\alpha}\left[-\log\left(y_\text{t}\right)\right]^{\gamma}\log^2\left(y_\text{t}\right)}{y_\text{t}^{\lambda}} := \texttt{u}_t^\star,\\
    &\dfrac{\partial^{2}\ell_{1,t}(\bm{\theta}_1,{{y_t}})}{\partial\alpha\partial\gamma} = -\dfrac{\left[-\log\left(y_\text{t}\right)\right]^{\gamma}\log\left(-\log\left(y_\text{t}\right)\right)}{y_\text{t}^{\lambda}} := \texttt{v}_t^\star,\\
    &\dfrac{\partial^{2}\ell_{1,t}(\bm{\theta}_1,{{y_t}})}{\partial\alpha\partial\lambda} = \dfrac{\left[-\log\left(y_\text{t}\right)\right]^{\gamma}\log\left(y_\text{t}\right)}{y_\text{t}^{\lambda}} := \texttt{z}_t^\star,\\
    &\dfrac{\partial^{2}\ell_{1,t}(\bm{\theta}_1,{{y_t}})}{\partial\gamma\partial\lambda} = \dfrac{\log\left(y_\text{t}\right)}{\left[{\gamma}-{\lambda}\log\left(y_\text{t}\right)\right]^2}+\dfrac{{\alpha}\left[-\log\left(y_\text{t}\right)\right]^{\gamma}\log\left(y_\text{t}\right)\log\left(-\log\left(y_\text{t}\right)\right)}{y_\text{t}^{\lambda}} := \texttt{d}_t^\star,
\end{align*}
note that
$$\mbox{J}_{\alpha\alpha}(\bm{\theta}_1) = \texttt{r}^\star \bm{1}_n^{\top}, \quad \mbox{J}_{\gamma\gamma}(\bm{\theta}_1) = \texttt{s}^\star\bm{1}_n^{\top}, \quad \mbox{J}_{\lambda\lambda}(\bm{\theta}_1) = \texttt{u}^\star\bm{1}_n^{\top}, $$
$$\mbox{J}_{\alpha\gamma}(\bm{\theta}_1) = \texttt{v}^\star\bm{1}_n^{\top}, \quad \mbox{J}_{\alpha\lambda}(\bm{\theta}_1) = \texttt{z}^\star\bm{1}_n^{\top}, \quad \mbox{J}_{\gamma\lambda}(\bm{\theta}_1) = \texttt{d}^\star\bm{1}_n^{\top},$$
with \(\bm{\texttt{r}}^\star=(\texttt{r}^\star_1,\ldots,\texttt{r}^\star_n)\), \(\bm{\texttt{s}}^\star=(\texttt{s}^\star_1,\ldots,\texttt{s}^\star_n)\), \(\bm{\texttt{u}}^\star=(\texttt{u}^\star_1,\ldots,\texttt{u}^\star_n)\), \(\bm{\texttt{v}}^\star=(\texttt{v}^\star_1,\ldots,\texttt{v}^\star_n)\), \(\bm{\texttt{z}}^\star=(\texttt{z}^\star_1,\ldots,\texttt{z}^\star_n)\), \(\bm{\texttt{d}}^\star=(\texttt{d}^\star_1,\ldots,\texttt{d}^\star_n)\) and \(\bm{1}_n^{\top}\) is a column vector of ones of dimension \(n\).

The observed information matrix of the RQ-UMW model can be written as
\[
\bm{\mbox{L}}(\bm{{\theta}_2}) = -
\begin{bmatrix}
\mbox{L}_{\gamma\gamma}(\bm{\theta}_2) & \mbox{L}_{\gamma\lambda}(\bm{\theta}_2) &
\mbox{L}_{\gamma\bm{\beta}_j}(\bm{\theta}_2) \\[1.5em]
\mbox{L}_{\lambda\gamma}(\bm{\theta}_2) & \mbox{L}_{\lambda\lambda}(\bm{\theta}_2) &
\mbox{L}_{\lambda\bm{\beta}_j}(\bm{\theta}_2) \\[1.5em]
\mbox{L}_{\bm{\beta}_j\gamma}(\bm{\theta}_2) & \mbox{L}_{\bm{\beta}_j\lambda}(\bm{\theta}_2) & \mbox{L}_{\bm{\beta}_j\bm{\beta}_l}(\bm{\theta}_2)
\end{bmatrix} 
, \quad l = 1, \ldots, k.
\]
The elements referring to the coordinates of the observed information matrix relative to the parameters \(\gamma\), \(\lambda\), and \(\bm{\beta}_j\) of the RQ-UMW model are given, respectively, by:
\begin{align*}
\mbox{L}_{\gamma\gamma}(\bm{\theta}_2) = \sum_{t=1}^{n}\frac{\partial^2\ell_{2,t}(\bm{\theta}_2,{{y_t}})}{\partial \gamma^2}, \quad \mbox{L}_{\gamma\bm{\beta}_j}(\bm{\theta}_2) = \sum_{t=1}^{n}\frac{\partial^2 \ell_{2,t}(\bm{\theta}_2,{{y_t}})}{\partial {\mu_t}\partial {\gamma}}\frac{d \mu_t}{d {\zeta_t}} \frac{\partial \zeta_t}{\partial \bm{\beta}_j} = \mbox{L}_{\bm{\beta}_j\gamma}(\bm{\theta}_2), \\
\mbox{L}_{\lambda\lambda}(\bm{\theta}_2) = \sum_{t=1}^{n}\frac{\partial\ell_{2,t}(\bm{\theta}_2,{{y_t}})}{\partial \lambda^2}, \quad \mbox{L}_{\lambda\bm{\beta}_j}(\bm{\theta}_2) = \sum_{t=1}^{n}\frac{\partial^2 \ell_{2,t}(\bm{\theta}_2,{{y_t}})}{\partial {\mu_t}\partial {\lambda}}\frac{d \mu_t}{d {\zeta_t}}\frac{\partial \zeta_t}{\partial \bm{\beta}_j} = \mbox{L}_{\bm{\beta}_j\lambda}(\bm{\theta}_2), \\
\mbox{L}_{\bm{\beta}_j\bm{\beta}_l}(\bm{\theta}_2) =  \sum_{t=1}^{n}\left[\frac{\partial^2 \ell_{2,t}(\bm{\theta}_2,{{y_t}})}{\partial \mu_{\tau,t}^2}\frac{d \mu_t}{d {\zeta_t}} +\frac{\partial \ell_{2,t}(\bm{\theta}_2,{{y_t}})}{\partial \mu_{\tau,t}}\frac{\partial }{\partial {\mu_t}}\left(\frac{d \mu_t}{d {\zeta_t}}\right)\right]\frac{d \mu_t}{d {\zeta_t}} \frac{\partial \zeta_t}{\partial \bm{\beta}_l}\frac{\partial \zeta_t}{\partial \bm{\beta}_j},\quad  \\
\mbox{L}_{\gamma\lambda}(\bm{\theta}_2) = \sum_{t=1}^{n}\frac{\partial^2\ell_{2,t}(\bm{\theta}_2,{{y_t}})}{\partial \gamma \partial \lambda} = \mbox{L}_{\lambda\gamma}(\bm{\theta}_2), \quad\quad\quad\quad\quad\quad\quad\quad\quad
\end{align*}
where
\begin{align*}
    &\dfrac{\partial^{2}\ell_{2,t}(\bm{\theta}_2,{{y_t}})}{\partial\gamma^2} =   \dfrac{{\texttt{A}_{t}}\left({\texttt{B}_{t}}\right)^2}{y_\text{t}^{\lambda}\left[-\log\left(\mu_{\tau,t}\right)\right]^{\gamma}} -\dfrac{1}{\left[{\gamma}-{\lambda}\log\left(y_\text{t}\right)\right]^2} := \texttt{r}_t^\diamond, \\
   &\dfrac{\partial^{2}\ell_{2,t}(\bm{\theta}_2,{{y_t}})}{\partial\lambda^2} =  
    \dfrac{{\texttt{A}_{t}}[\log\left(\mu_{\tau,t}\right)-\log\left(y_\text{t}\right)]^2}{y_\text{t}^{\lambda}\left[-\log\left(\mu_{\tau,t}\right)\right]^{\gamma}}  -\dfrac{\log^2\left(y_\text{t}\right)}{\left[{\gamma}-{\lambda}\log\left(y_\text{t}\right)\right]^2} := \texttt{s}_t^\diamond,\\
    &\dfrac{\partial^{2}\ell_{2,t}(\bm{\theta}_2,{{y_t}})}{\partial\gamma\partial\lambda} = \dfrac{\log\left(y_\text{t}\right)}{\left[{\gamma} -{\lambda}\log\left(y_\text{t}\right)\right]^2}+ \dfrac{{\texttt{A}_{t}}{\texttt{B}_{t}}\left[\log\left(\mu_{\tau,t}\right) - \log\left(y_\text{t}\right)\right]}{y_\text{t}^{\lambda}\left[-\log\left(\mu_{\tau,t}\right)\right]^{\gamma}} := \texttt{u}_t^\diamond,\\
    &\frac{\partial^2 \ell_{2,t}(\bm{\theta}_2,{{y_t}})}{\partial {\mu_{\tau,t}}\partial {\gamma}} =  \dfrac{1}{\mu_{\tau,t}\log\left(\mu_{\tau,t}\right)}  \left( -\frac{{\texttt{A}_{t}}\left\{{\texttt{B}_{t}}\left[{\gamma}-{\lambda}\log\left(\mu_{\tau,t}\right)\right]+1\right\}}{y_\text{t}^{\lambda}\left[-\log\left(\mu_{\tau,t}\right)\right]^{\gamma}}-1\right) := \texttt{v}_t^\diamond,\\
    &\frac{\partial^2 \ell_{2,t}(\bm{\theta}_2,{{y_t}})}{\partial {\mu_t}\partial {\lambda}} = -\dfrac{{\texttt{A}_{t}}\left[\log\left(y_\text{t}\right)-\log\left(\mu_{\tau,t}\right)\right][{\lambda}\log\left(\mu_{\tau,t}\right)-{\gamma}]}{y_\text{t}^{\lambda}\mu_{\tau,t}\left[-\log\left(\mu_{\tau,t}\right)\right]^{\gamma}\log\left(\mu_{\tau,t}\right)} + \frac{y_\text{t}^{\lambda}\left[-\log\left(\mu_{\tau,t}\right)\right]^{\gamma}+{\texttt{A}_{t}}}{y_\text{t}^{\lambda}\mu_{\tau,t}\left[-\log\left(\mu_{\tau,t}\right)\right]^{\gamma}} := \texttt{z}_t^\diamond, \\
    & \frac{\partial^2 \ell_{2,t}(\bm{\theta}_2,{{y_t}})}{\partial \mu_{\tau,t}^2} = \dfrac{{\gamma}}{\mu_{\tau,t}^2\log^2\left(\mu_{\tau,t}\right)} -\dfrac{{\gamma}\log\left(\mu_{\tau,t}\right)\left\{\left(2{\lambda}-1\right){\texttt{A}_{t}}-y_\text{t}^{\lambda}\left[-\log\left(\mu_{\tau,t}\right)\right]^{\gamma}\right\}-{\texttt{A}_{t}}{\gamma}\left({\gamma}+1\right)}{y_\text{t}^{\lambda}\mu_{\tau,t}^2\left[-\log\left(\mu_{\tau,t}\right)\right]^{\gamma}\log^2\left(\mu_{\tau,t}\right)} \\ & \quad\quad\quad\quad\quad\quad -\dfrac{{\lambda}\left\{y_\text{t}^{\lambda}\left[-\log\left(\mu_{\tau,t}\right)\right]^{\gamma}+\left(1-{\lambda}\right){\texttt{A}_{t}}\right\}}{y_\text{t}^{\lambda}\mu_{\tau,t}^2\left[-\log\left(\mu_{\tau,t}\right)\right]^{\gamma}} := \texttt{w}_t^\diamond, 
\end{align*}   
with \(\frac{\partial \zeta_t}{\partial \beta_l}=x_{tl}\), $\frac{\partial }{\partial {\mu_t}}\left(\frac{d \mu_t}{d {\zeta_t}}\right) = - {g^{\prime\prime}({\mu_t})}{\left[g^{\prime}({\mu_t})\right]^{-2}}$ {and} \(g^{\prime\prime}(\cdot)\) denotes the second derivative of the function \(g(\cdot)\). Note that
$$\mbox{L}_{\gamma\gamma}(\bm{\theta}_2) = \texttt{r}^\diamond \bm{1}_n^{\top} , \quad \mbox{L}_{\lambda\lambda}(\bm{\theta}_2) = \texttt{s}^\diamond\bm{1}_n^{\top}, \quad \mbox{L}_{\gamma\lambda}(\bm{\theta}_2) = \texttt{u}^\diamond\bm{1}_n^{\top}, $$
$$\mbox{L}_{\bm{\beta}_j\gamma}(\bm{\theta}_2) = \boldsymbol{\mbox{X}}^{\top} \bm{\mbox{T}} \bm{\texttt{v}}^\diamond , \quad \mbox{L}_{\bm{\beta}_j\lambda}(\bm{\theta}_2) = \boldsymbol{X}^{\top} \bm{\mbox{T}} \bm{\texttt{z}}^\diamond, \quad \mbox{L}_{\bm{\beta}_j\bm{\beta}_l}(\bm{\theta}_2) = \boldsymbol{\mbox{X}}^{\top} \boldsymbol{\mbox{M}}\bm{\mbox{T}} \boldsymbol{\mbox{X
}},$$
where \(\boldsymbol{\mbox{M}} = \mbox{diag}\left\{ \texttt{w}^\diamond \bm{\mbox{T}} + \texttt{w} \bm{\mbox{T}}^\diamond \right\}\), \(\bm{\texttt{r}}^\diamond=(\texttt{r}^\diamond_1,\ldots,\texttt{r}^\diamond_n)\), \(\bm{\texttt{s}}^\diamond=(\texttt{s}^\diamond_1,\ldots,\texttt{s}^\diamond_n)\), \(\bm{\texttt{u}}^\diamond=(\texttt{u}^\diamond_1,\ldots,\texttt{u}^\diamond_n)\), \(\bm{\texttt{v}}^\diamond=(\texttt{v}^\diamond_1,\ldots,\texttt{v}^\diamond_n)^\top\), \(\bm{\texttt{z}}^\diamond=(\texttt{z}^\diamond_1,\ldots,\texttt{z}^\diamond_n)^\top\), \(\bm{\mbox{T}}^\diamond = \mbox{diag} \{ - g^{\prime\prime}({\mu_1}) \left[g^{\prime}({\mu_1})\right]^{-2}, \ldots, - {g^{\prime\prime}({\mu_n})}{\left[g^{\prime}({\mu_n})\right]^{-2}}\}\), \(\bm{\texttt{w}}^\diamond=(\texttt{w}^\diamond_1,\ldots,\texttt{w}^\diamond_n)\) and \(\bm{1}_n^{\top}\) is a column vector of ones of dimension \(n\).

\section{Useful Volume Application}
\label{ap:ApenA}

This appendix presents the histograms, density plots, and QQ plots of the 16 reservoirs from the Northeast (NE)/North (N) and Southeast (SE)/Center-West (CO) regions, shown respectively in Figures~\ref{fig:g_Ne} and~\ref{fig:g_SeMw}, fitted to the UMW, beta, KW, MK, and UW distributions.

\begin{figure}[H]
\centering
\subfigure[Irapé]{\includegraphics[width=0.45\textwidth]{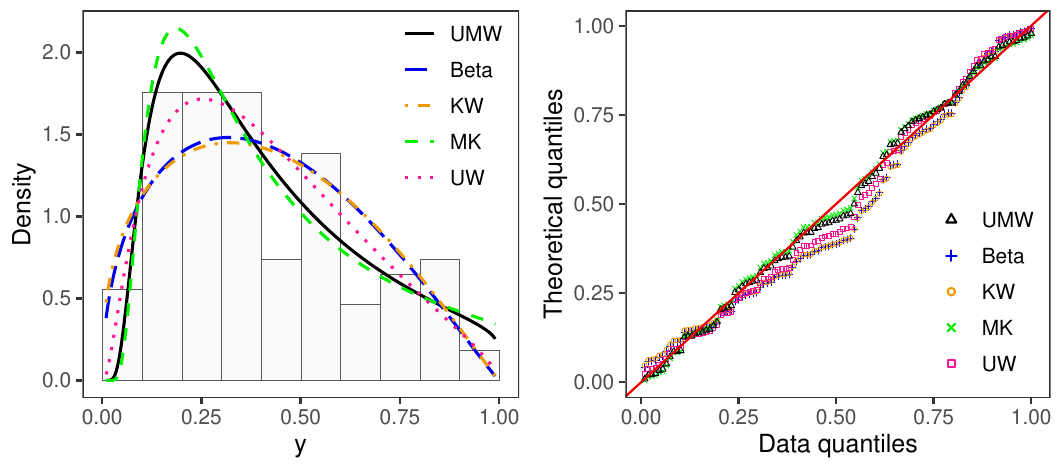}}
\subfigure[Itaparica]{\includegraphics[width=0.45\textwidth]{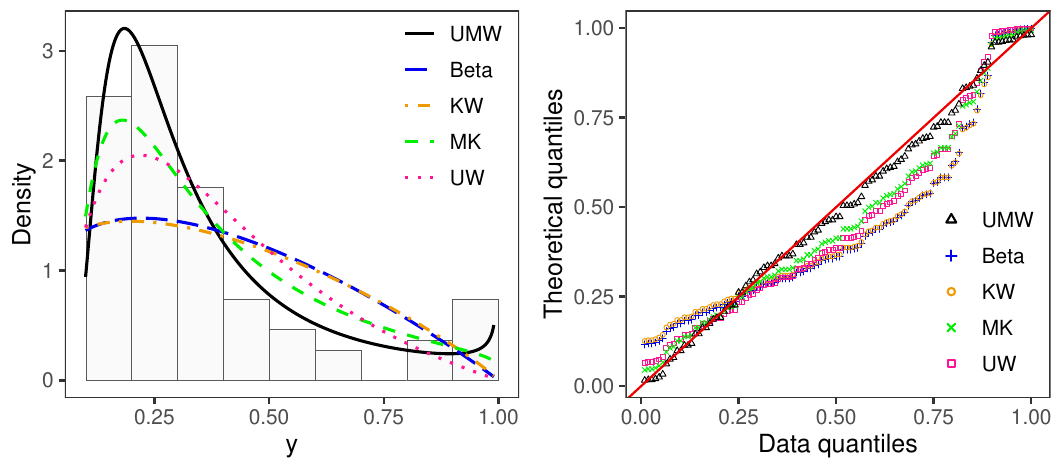}}
\subfigure[Sobradinho]{\includegraphics[width=0.45\textwidth]{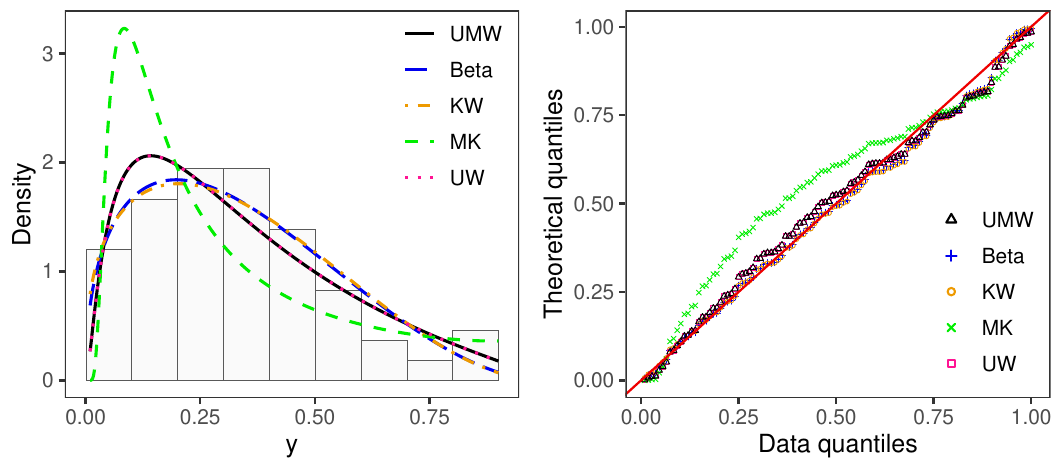}}
\subfigure[Três Marias]{\includegraphics[width=0.45\textwidth]{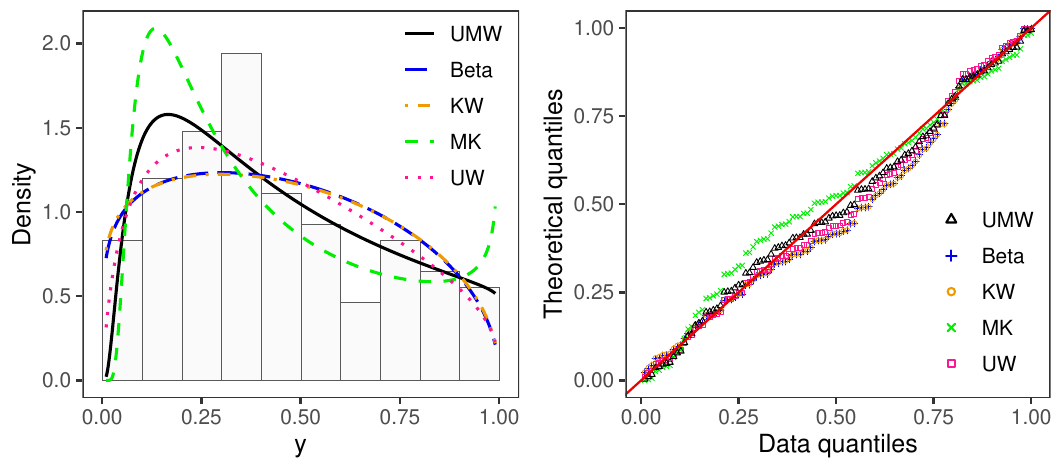}}
\subfigure[Serra da Mesa]{\includegraphics[width=0.45\textwidth]{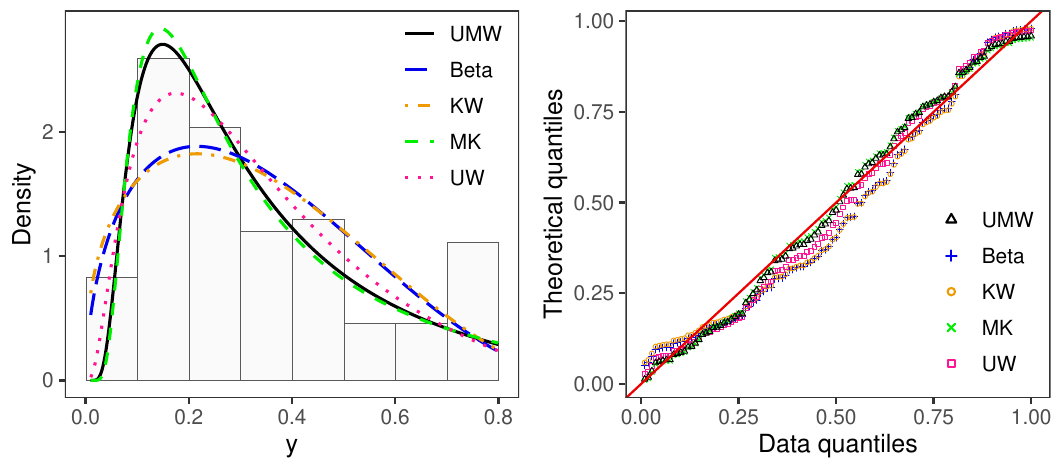}}
\subfigure[Tucuruí]{\includegraphics[width=0.45\textwidth]{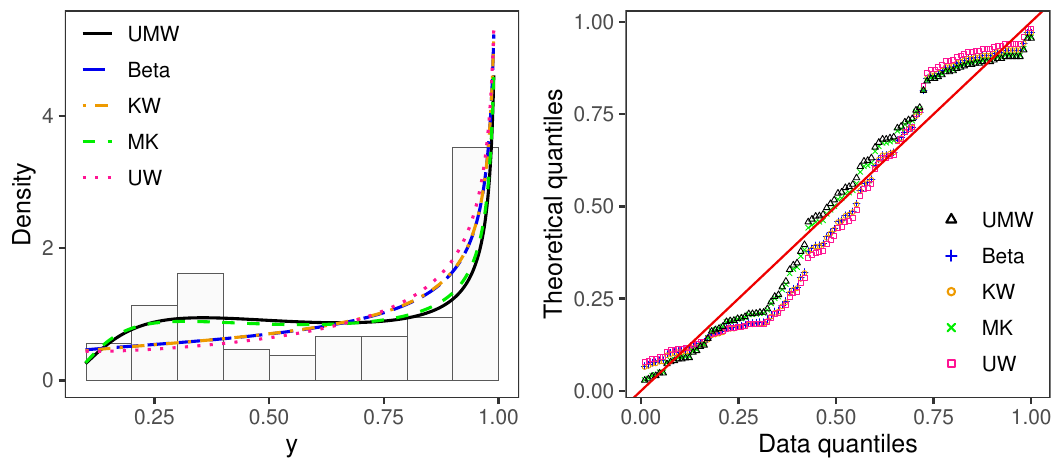}}
\caption{Histograms, density plots, and QQ plots for the relative useful volume of the reservoirs in the NE and N regions.}\label{fig:g_Ne}
\scriptsize{Source: Authors.}%
\end{figure}

\begin{figure}[H]
\centering
\subfigure[A. Vermelha]{\includegraphics[width=0.45\textwidth]{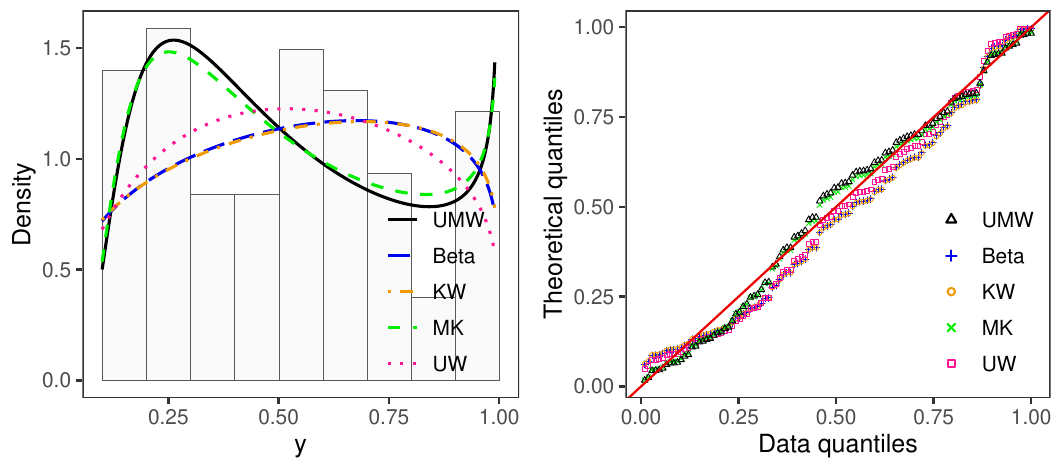}}
\subfigure[B. Bonita]{\includegraphics[width=0.45\textwidth]{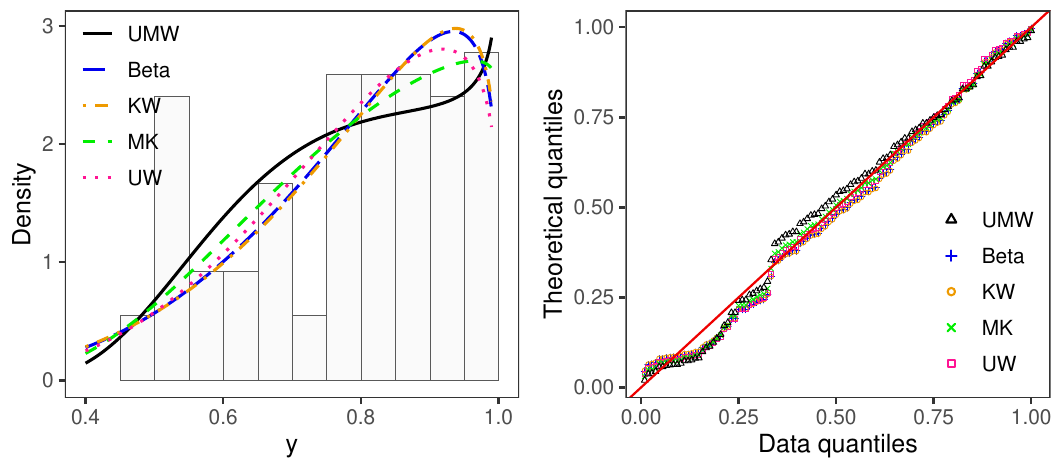}}
\subfigure[Capivara]{\includegraphics[width=0.45\textwidth]{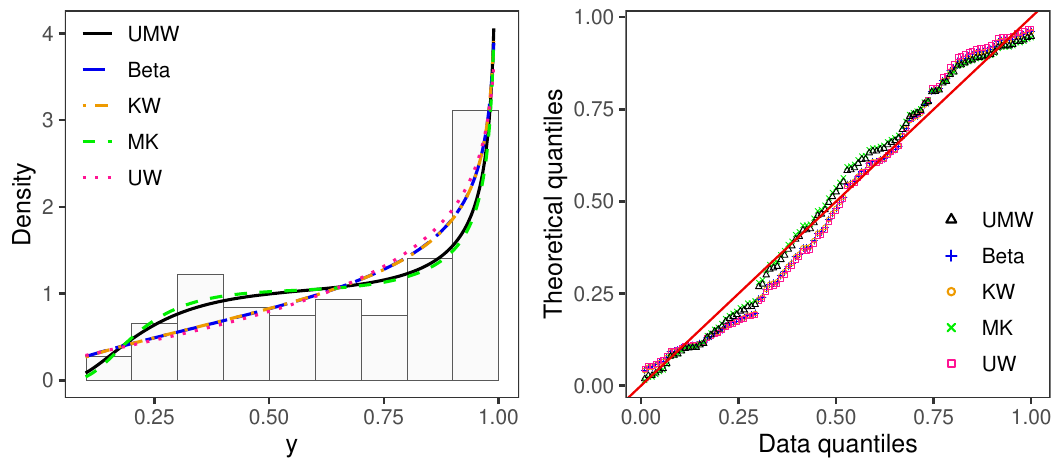}}
\subfigure[Chavantes]{\includegraphics[width=0.45\textwidth]{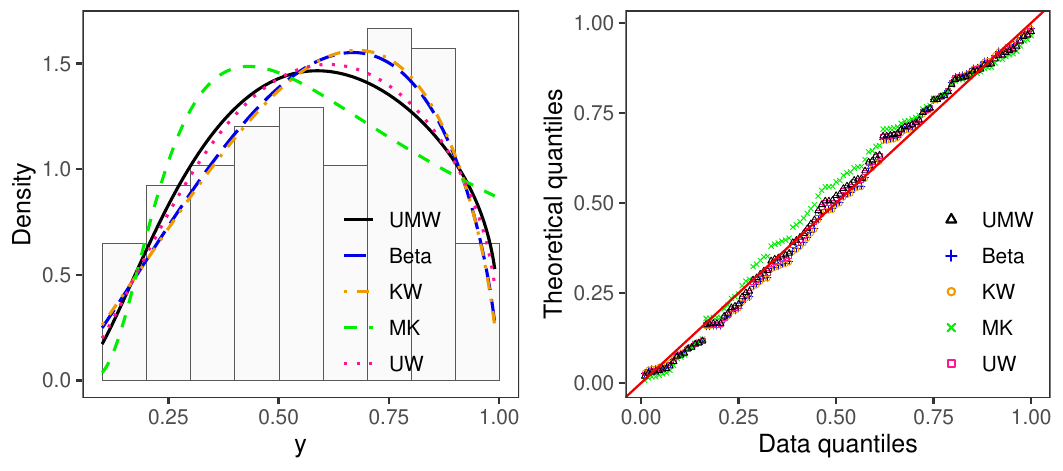}}
\subfigure[Emborcação]{\includegraphics[width=0.45\textwidth]{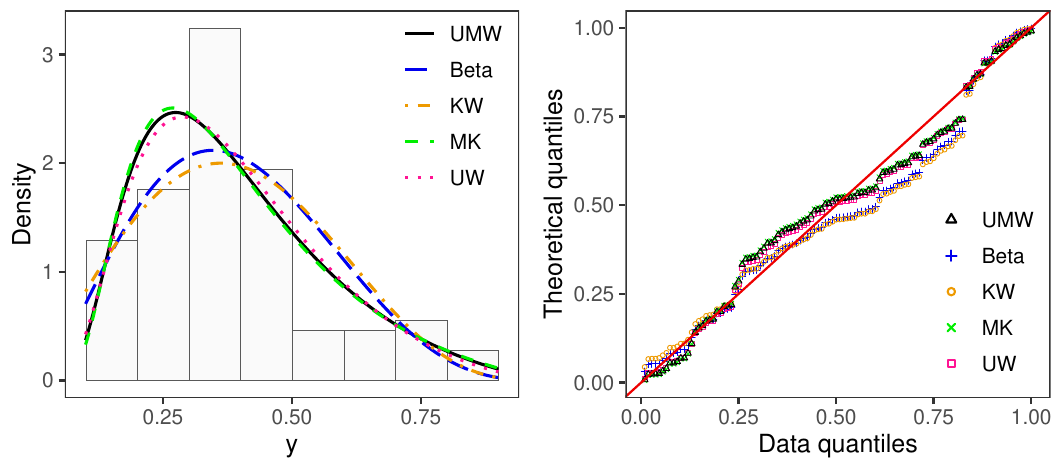}}
\subfigure[Furnas]{\includegraphics[width=0.45\textwidth]{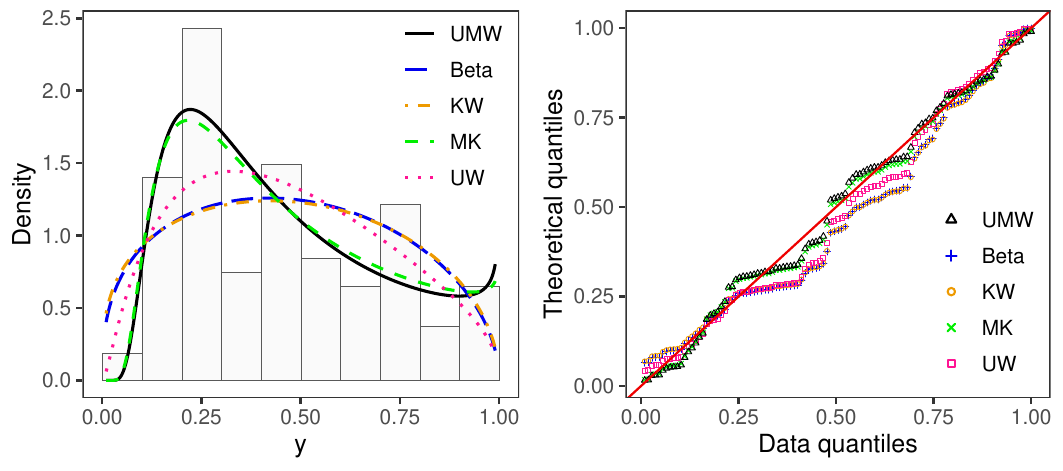}}
\subfigure[Itumbiara]{\includegraphics[width=0.45\textwidth]{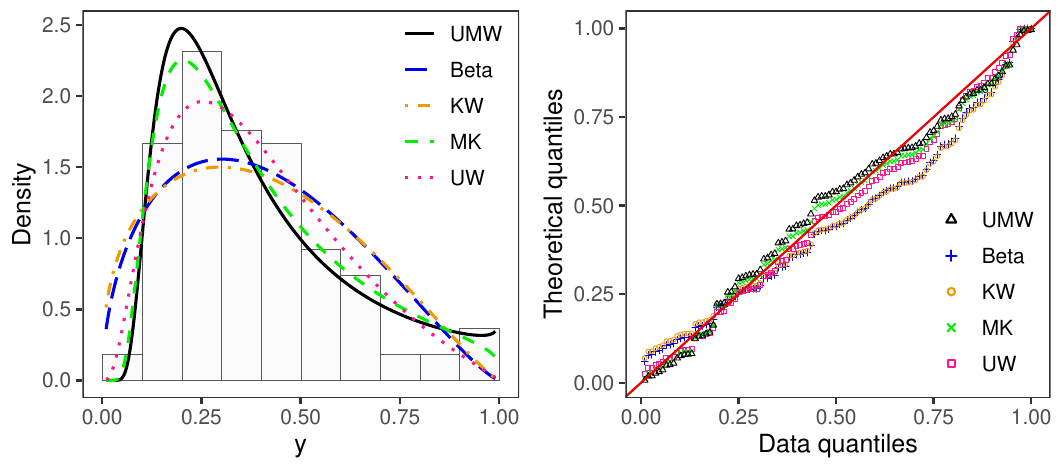}}
\subfigure[Nova Ponte]{\includegraphics[width=0.45\textwidth]{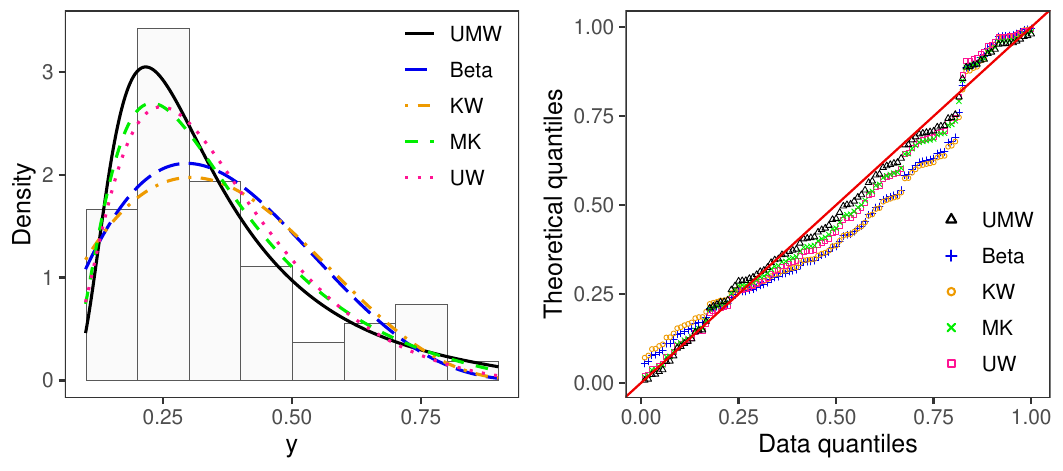}}
\subfigure[São Simão]{\includegraphics[width=0.45\textwidth]{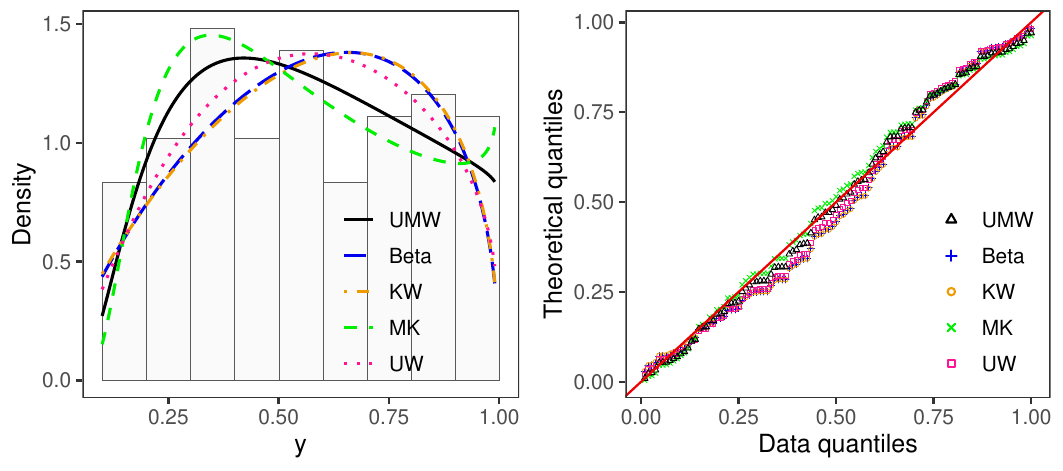}}
\subfigure[Serra do Facão]{\includegraphics[width=0.45\textwidth]{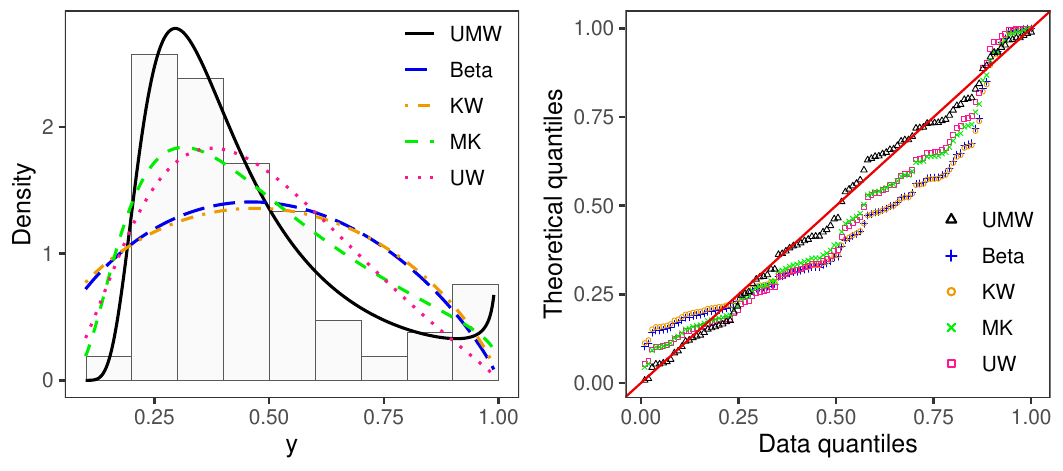}}
\caption{Histograms, density plots, and QQ plots for the relative useful volume of the reservoirs in the SE and CO regions.}\label{fig:g_SeMw}
\scriptsize{Source: Authors.}%
\end{figure}

\end{document}